\def\blfootnote{\gdef\@thefnmark{}\@footnotetext}
\newcommand{\EX}{\mathrm{Ex}}
\newcommand{\para}[1]{\vspace{0.2cm}
\noindent\textbf{#1.}}
\newcommand{\eps}{\varepsilon}
\newcommand{\dm}{d^{-}}
 \def\notes{0} 
\newcommand{\note}[2]{{\color{#1}{#2}}}
\newcommand\jnote[1]{ \note{purple}{[Joel: #1]}}
\newcommand\tnote[1]{ \note{red}{[Talya: #1]}}
\newcommand\onote[1]{ \note{blue}{[Omri: #1]}}
\newcommand\dnote[1]{ \note{purple}{[Dimitris: #1]}}
\newcommand\told[1]{ \note{black!40}{[Talya: #1]}}
\newcommand{\note}[2]{}
\newcommand\jnote[1]{ }
\newcommand\tnote[1]{}
\newcommand\onote[1]{ }
\newcommand\dnote[1]{ }
\newcommand\told[1]{ }
\newcommand{\SL}{\textsc{SampLayer}\xspace}
\newcommand{\SLP}{\textsc{SampLayer+}\xspace}
\newcommand{\rej}{\textsc{REJ}\xspace}
\newcommand{\mh}{\textsc{MH}\xspace}
\newcommand{\mhp}{\textsc{MH+}\xspace}
\newcommand{\dpp}{d^{+}}
\newcommand{\tmix}{t_{\text{mix}}}
\newtheorem{theorem}{Theorem}[section]
 \patchcmd\Gread@eps{\@inputcheck#1}{\@inputcheck"#1"\relax}{}{}
\definecolor{RoyalPurple}{HTML}{7851a9}
\definecolor{darkred}{HTML}{8B0000}
\definecolor{darkgreen}{HTML}{228B22}
\definecolor{teal}{HTML}{008080}
\definecolor{indigo}{HTML}{4B0082}
\definecolor{lightgray}{HTML}{D7DDDC}
  \providecommand\BibTeX{{%
    \normalfont B\kern-0.5em{\scshape i\kern-0.25em b}\kern-0.8em\TeX}}}
\renewcommand\footnotetextcopyrightpermission[1]{}
\newcommand{\alignInputOutput}[1]{%
	\par\addvspace{\topsep}%
	\noindent\begingroup\raggedright
	\rmfamily#1\par
	\endgroup
	\nopagebreak
  \vspace{3pt}%
\hrule
\vspace{3pt}%
}
\newcommand{\GenLZ}{\hyperref[proc:gen-L0]{\normalsize{ \color{black} \sc Generate-$ {L_0}$} (\SL)}}
\newcommand{\ALGGenLZ}{
    \alignInputOutput{
    \textbf{\textsc{Generate-$\boldsymbol{L_0}$}} \label{proc:gen-L0}\\
    \textsc{Input:} arbitrary vertex $v_0$, number $\ell_0$. \\
	\textsc{Output:} $L_0$ of size $\ell_0$, $L_1$, $\mathcal{D}$.
}
    
    \begin{compactenum}
        \item \textbf{Query} $v_0$ and let $L_0=\{v_0\}$,  $L_1=N(v_0)$.
        \item Repeat $\ell_0 - 1$ times: 
        \begin{compactenum}
        \item Pick $u \in L_1$ with maximum number of $L_0$ neighbors (break ties randomly).
        \item \textbf{Query} $u$ and remove it from $L_1$.
        \item Add $u$ to $L_0$ and add $N(u)\setminus L_0$ to $L_1$.
        \end{compactenum}
\item 
Create a data structure $\mathcal{D}$ to  sample edges between $L_0$ and $L_1$ uniformly at random. 	
\item \textbf{Return} $L_0$, $L_1$ and $\mathcal{D}$.
    \end{compactenum}
}
\newcommand{\SLtwo}{\hyperref[proc:sample-lplus]{\color{black} \sc Reach-$ {L_{\geq 2}}$}}
\newcommand{\SLtwoALG}{
	\alignInputOutput{
		\textbf{\textsc{Reach-$\boldsymbol{L_{\geq 2}}$}} \label{proc:sample-lplus}\\
		\textsc{Input:} The data structure $\mathcal{D}$. \\
		\textsc{Output:} vertex $v \in L_{\geq 2}$, its component and reach. score.
}
		\smallskip
		\begin{compactenum}
			\item  While no vertex $w$ chosen:
			\begin{compactenum} 
				\item Use $\mathcal{D}$ to sample a uniform edge  between $L_0$ and $L_1$. Let $u$ denote its $L_1$ endpoint.
				\item \textbf{Query} $u$, and if it  has neighbors in $L_{\geq 2}$, choose one of them, $w$, uniformly at random. 
			\end{compactenum} 
			\item Perform a local BFS of the component $C$ of $w$ in $G_{\geq 2}$.
	 \item Choose a vertex $v$ in $C$ uniformly at random.
	 \item Invoke \CompRS\ to compute the  reachability of $C$, $rs(C)$.
	\item \textbf{Return} $v$, and its reachability score, $rs(v)=rs(C)$.
	    \end{compactenum}
}
\newcommand{\CompRS}{\hyperref[proc:comp-rs]{\color{black} \sc Comp-Reachability}}
\newcommand{\CompRSALG}{
\alignInputOutput{
    \textbf{\textsc{Comp-Reachability}} \label{proc:comp-rs}\\
    \textsc{Input:} An (already queried) component $C$ of $G_{\geq 2}$.\\
    \textsc{Output:} The reachability score of $C$.\\
}
    \begin{compactenum}
        \item $\forall v \in C \cap L_2$:
        \begin{compactenum}
        \item 
        \textbf{Query} all $u \in N(v) \cap L_1$. For each such $u$:
            \begin{compactenum}
            \item Let  $d^-(u)=|N(u)\cap L_0|$, and  $d^+(u)=|N(u)\cap L_2|$.
            \item $\forall u \in N(v) \cap L_1$ set $rs(u)=\frac{d^-(u)}{d^+(u)}$.
            \item Set $rs(v) = \sum_{u} rs(u)$.
            \end{compactenum}
        \end{compactenum}
\item \textbf{Return} $rs(C)= \frac{1 }{ |C|} \sum_{v \in C} rs(v)$.
    \end{compactenum}
}
 \newcommand{\Sample}{\hyperref[proc:sample-node]{\color{black} \sc Sample}}
 \newcommand{\SampleALG}{
\alignInputOutput{
\textbf{\textsc{Sample}}
 \label{proc:sample-node}\\ 
    
    \textsc{Input:} $\bar{\ell}_{\geq 2}$ - size estimate for $L_{\geq 2}$. $rs_0$ - baseline reachability. (Both computed in the preprocessing step)\\
    \textsc{Output:} An almost uniform node in the network.\\
}
    \begin{compactenum}
    \item Choose a layer $L_0, L_1$ or ${L}_{\geq 2}$ with probability proportional to their sizes $|L_0|, |L_1|, \bar{\ell}_{\geq 2}$.
    \item If $L_0$ or $L_1$ are chosen then sample a uniform node in $L_0$ or $L_1$, respectively.
    \item If $L_{\geq 2}$ is chosen, then 
        repeatedly do:
        \begin{compactenum}
            \item Invoke \SLtwo\ and let $u$ and $rs(u)$ be the returned node and its reachability.
            \item With probability $\min(\frac{rs_0}{rs(u)}, 1)$ \textbf{return} $u$. If not returned, \textbf{repeat} loop. \label{step:rej}
        \end{compactenum}
    \end{compactenum}
}
 \newcommand{\CompRSZ}{\hyperref[proc:baseline-rs0]{\color{black} \sc Estimate-Baseline-Reachability}}
 \newcommand{\CompRSZALG}{
\alignInputOutput{
     \textbf{\textsc{Estimate-Baseline-Reachability}} \\
    \textsc{Input:} reachabilities $r_1 \leq \ldots \leq r_m$, param.~$\eps > 0$.
    \textsc{Output:} Unbiased $\eps$-quantile for the 
    reachabilities. 
}
    \begin{compactenum}
\item $\forall i \in [m]$ set $w_i = \frac{r_1}{r_i}$. Let $cw_i = \frac{\sum_{j \in [i]} w_j}{\sum_{j \in [|S|]} w_j}$.
\item \textbf{Return} minimum $r_i$ for which $cw_i \geq \eps$.
\end{compactenum}
\label{proc:baseline-rs0}
}
 \newcommand{\CompLTsize}{\hyperref[proc:comp-L2size]{\color{black} \sc Estimate-Periphery-Size}}
\newcommand{\CompLTsizeALG}{
\alignInputOutput{
\textbf{\textsc Estimate-Periphery-Size}  \\
    \textsc{Input:} $s_1$ and $s_{\geq 2}$, numbers of samples to take from $L_1$ and $L_{\geq 2}$, respectively.
    \\
    \textsc{Output:} $\overline{\ell}_{\geq 2}$ - the estimated size of $L_{\geq 2}$.
}
    \begin{compactenum}
\item \textbf{Query} $s_1$ nodes from $L_1$ uniformly at random. Denote by $S_1$ these queried nodes.
\item  $\forall v\in S_1$, compute  $d^+(v)=N(v)\cap L_{\geq 2}$.
\item Let $d^+_{avg}(S_1)=\frac{1}{|S_1|}\sum_{v \in S_1} d^+(v)$.
\item Invoke \SLtwo\ for $s_{\geq 2}$  times and let $S_2$ denote the set of returned nodes. 
\item  $\forall v \in S_2$, invoke \CompRS$(v)$ to compute the reachability score $rs(v)$.
\item $\forall v\in S_2$, compute  $d^-(v)=|N(v)\cap L_1|$.
\item Let $trs(S_2) = \sum_{v \in S_2} 1/rs(v)$ and define $d^-_{avg}(S_2)=\frac{1}{trs(S_2)}\sum_{v \in S_2} d^-(v)/rs(v)$.
\item \textbf{Return} $\overline{n}_{\geq 2}= |L_1|\cdot {{d}^+_{avg}(S_1)}\ /\  {{d}^-_{avg}(S_2)}$.
\end{compactenum}
\label{proc:comp-L2size}
}
\newcommand{\preprocessing}{

\begin{figure}[htb!]
\begin{center}
\framebox{
\begin{minipage}{.9\columnwidth}
     \CompRSZALG
\end{minipage}
}
\framebox{
\begin{minipage}{.9\columnwidth}
     \CompLTsizeALG
\end{minipage}
}
\end{center}
\caption{Missing  procedures of {\SL}.}
\label{fig:preprocess}
\end{figure}
}
\newcommand{\sampling}{
\begin{figure}[]
\raggedright
	\begin{minipage}[t]{0.954\columnwidth}
	\vspace{0pt}
		\lfbox[margin={0em,0em,0.35em,0em}]{

			\begin{subfigure}{\textwidth}
				\SampleALG
			\end{subfigure}
		}
        \lfbox[margin={0em,0em,0.35em,0em}]{
        	\begin{subfigure}{\textwidth}
        			\SLtwoALG
        		
		\end{subfigure}
        }
    \end{minipage}
	\begin{minipage}[t]{0.954\columnwidth}
		\lfbox[margin={0em,0em,0.35em,0em}]{
			\begin{subfigure}{\textwidth}
			\CompRSALG
			\end{subfigure}
		}
    	\lfbox[margin={0em,0em,0em}]{
    		\begin{subfigure}[b!]{\textwidth}
    		\BFSLtwoALG
    		\end{subfigure}
    }
        \end{minipage}

	\caption{The sampling procedures of {\SL}.
	The procedure \SLtwo\ receives as input the estimated size $\bar \ell_2$ of $L_{\geq 2}$, and the baseline reachability $rs_0$, computed in the preprocessing phase using \CompLTsize\ and \CompRSZ. It then outputs an almost uniform  node in the graph. Producing uniform nodes in $L_0$ and $L_1$ is trivial. To sample from $L_{\geq 2}$, the procedure invokes \SLtwo, which returns a node $u$ in $L_{\geq 2}$ and its reachability score. \Sample\ then performs rejection sampling with prob. proportional to $rs_0/rs(u)$. This process repeats until a node is returned. Here, we denote $L_2 = L_{\geq 2} \cap N(L_1)$ and $L_{> 2} = L_{\geq 2} \setminus L_2$.}
	
	\label{fig:sample}
\end{figure}

}
\newcommand{\CompRSALGP}{
\alignInputOutput{
    \textbf{\textsc{Comp-Reachability}} \label{proc:comp-rs-plus}\\
    \small
    \textsc{Input:} An (already visited) component $C$ of $G_{\geq 2}$.\\
    \textsc{Output:} The reachability score of $C$.\\
}
    \begin{compactenum}
    
        \item For every $v \in C$, set $rs'(v) = |N(v) \cap L_1|$.
        \item \textbf{Return} $rs(C)= \frac{1 }{ |C|} \sum_{v \in C} rs'(v)$.
    \end{compactenum}
}
\newcommand{\drawWhiskers}{

\definecolor{darkmidnightblue}{rgb}{0.0, 0.2, 0.4}
\definecolor{darkred}{rgb}{0.55, 0.0, 0.0}
\definecolor{darkjunglegreen}{rgb}{0.1, 0.14, 0.13}
\definecolor{darkmagenta}{rgb}{0.55, 0.0, 0.55}

\colorlet{maincolor}{blue}
\colorlet{maincolor}{maincolor!70!black}

\definecolor{l0}{HTML}{5e4376}
\definecolor{l1}{HTML}{3c4d76}
\definecolor{l2}{HTML}{2b5876}

\colorlet{dark}{l0}
\colorlet{darkBoundry}{dark!40}
\colorlet{light}{l1}
\colorlet{lightBoundry}{light!20}

\colorlet{verylight}{l2!80}
\colorlet{verylightBoundry}{verylight!5}

	\begin{tikzpicture}[xscale=.38, yscale=.33]
		
		\pgfdeclarelayer{nodelayer}
		\pgfdeclarelayer{edgelayer}
		\pgfdeclarelayer{light}
		\pgfdeclarelayer{dark}
		\pgfsetlayers{light,dark,nodelayer,edgelayer}
		
		\tikzstyle{none}=[inner sep=0,minimum size=1ex,fill=none, draw=none, shape=circle]
		
		\tikzstyle{light boundry}=[inner sep=0,minimum size=1ex,fill=none, draw=none, shape=circle]
		
		\tikzstyle{dark}=[inner sep=0,minimum size=1ex,fill=dark, draw=dark, shape=circle]
		
		\tikzstyle{circle}=[inner sep=0,minimum size=1ex,fill=verylight, draw=verylight, shape=circle]
		
		\tikzstyle{light}=[inner sep=0,minimum size=1ex,fill=light,draw=light, shape=circle]

		\begin{pgfonlayer}{nodelayer}
			\node [style=dark] (0) at (0.25, 0) {};
			\node [style=light] (1) at (-2.5, -0.25) {};
			\node [style=dark] (2) at (-1, 1) {};
			\node [style=dark] (3) at (1.25, 1) {};
			\node [style=dark] (4) at (1.5, -1.25) {};
			\node [style=dark] (5) at (-1, -1.25) {};
			\node [style=dark] (6) at (0.75, -2.25) {};
			\node [style=light] (7) at (0, 2.5) {};
			\node [style=light] (8) at (2, 3) {};
			\node [style=light] (9) at (3.5, -2) {};
			\node [style=light] (10) at (3.75, 0) {};
			\node [style=light] (11) at (-0.5, -3) {};
			\node [style=light] (12) at (2.25, -3.5) {};
			\node [style=circle] (13) at (3, 4.5) {};
			\node [style=circle] (14) at (2.25, 5.5) {};
			\node [style=circle] (15) at (4, 5.25) {};
			\node [style=circle] (16) at (5.25, 0.25) {};
			\node [style=circle] (17) at (5.75, 1.5) {};
			\node [style=circle] (18) at (7.25, 1) {};
			\node [style=circle] (19) at (6.25, -0.25) {};
			
			\node [style=circle] (20) at (4.8, -2.5) {};
			\node [style=circle] (21) at (4.8, -3.25) {};
			\node [style=circle] (22) at (6.05, -2.75) {};
			
			\node [style=circle] (23) at (-1.5, 4.25) {};
			\node [style=none] (l2) at (-2.55, 6.35) {\textcolor{verylight}{\Large{${\mathbf{L_{\geq 2}}}$}}};
			\node [style=circle] (24) at (-3.75, 3.75) {};
			\node [style=circle] (25) at (-2.5, 5.25) {};
			\node [style=circle] (26) at (-0.5, 5.5) {};
			\node [style=circle] (27) at (-3.75, 1.75) {};
			\node [style=circle] (28) at (-4.5, 0.75) {};
			\node [style=circle] (29) at (-5.25, 2) {};
			\node [style=none] (30) at (0, 2.5) {};
			\node [style=none] (31) at (1.25, 1) {};
			\node [style=none] (32) at (1.25, 1) {};
			\node [style=none] (33) at (-1, 1) {};
			
			\node [style=none] (81) at (1.5, 1.75) {};
			\node [style=none] (82) at (-1.25, 1.5) {};
			\node [style=none] (label0) at (-1.25, 2.07) {\textcolor{dark}{\Large{$\mathbf{L_0}$}}};
			\node [style=none] (83) at (-1.5, -1.5) {};
			\node [style=none] (84) at (1, -2.85) {};
			\node [style=none] (85) at (2.25, -1.25) {};
			\node [style=none] (87) at (2.25, -.68) {};
			
			\node [style=light boundry] (46) at (0, 3.25) {};
			\node [style=light boundry] (label) at (0.7, 4.1) {\textcolor{light}{\Large{${\mathbf{L_1}}$}}};
			\node [style=light boundry] (47) at (2.25, 4) {};
			\node [style=light boundry] (48) at (4.5, 0.25) {};
			\node [style=light boundry] (49) at (4.2, -2.3) {};
			\node [style=light boundry] (50) at (2.6, -4.25) {};
			\node [style=light boundry] (52) at (-1, -3.5) {};
			\node [style=light boundry] (53) at (-3.25, -0.75) {};
			\node [style=light boundry] (54) at (-2.25, 2.4) {};

			\node [style=none] (63) at (-4.5, -.25) {};
			\node [style=none] (64) at (-3, 2.2) {};
			\node [style=none] (65) at (-6, 2.6) {};
			
			\node [style=none] (67) at (1.6, 6.1) {};
			\node [style=none] (68) at (4.8, 5.6) {};
			\node [style=none] (69) at (3, 3.5) {};
			
			\node [style=none] (70) at (-1.2, 3.6) {};
			\node [style=none] (71) at (-4.6, 3.2) {};
			\node [style=none] (72) at (-2.9, 5.7) {};
			\node [style=none] (73) at (0.2, 6.1) {};
			
			\node [style=none] (74) at (4.3, -4.1) {};
			\node [style=none] (75) at (4.3, -1.85) {};
			\node [style=none] (76) at (7.2, -2.6) {};
			
			\node [style=none] (77) at (4.75, 0) {};
			\node [style=none] (78) at (5.3, 2.1) {};
			\node [style=none] (79) at (8, 1.25) {};
			\node [style=none] (80) at (6.5, -0.9) {};
			
		\end{pgfonlayer}
		\begin{pgfonlayer}{edgelayer}
			\draw[dashed] (14) to (23);
			\draw[dashed] (16.center) to (20.center);
			\draw (0) to (3);
			\draw (3) to (4);
			\draw (4) to (0);
			\draw (0) to (6);
			\draw (3) to (6);
			\draw (6) to (5);
			\draw (5) to (2);
			\draw (2) to (0);
			\draw (0) to (5);
			\draw (2) to (3);
			\draw (2) to (6);
			\draw (1) to (2);
			\draw (1) to (5);
			\draw (1) to (11);
			\draw (11) to (6);
			\draw (6) to (12);
			\draw (12) to (4);
			\draw (9) to (12);
			\draw (9) to (3);
			\draw (10) to (9);
			\draw (10) to (4);
			\draw (8) to (10);
			\draw (8) to (3);
			\draw (3) to (7);
			\draw (29) to (27);
			\draw (27) to (28);
			\draw (28) to (29);
			\draw (28) to (1);
			\draw (8) to (14);
			\draw (14) to (13);
			\draw (14) to (15);
			\draw (17) to (16);
			\draw (19) to (16);
			\draw (16) to (18);
			\draw (16) to (10);
			\draw (17) to (18);
			\draw (20) to (22);
			\draw (21) to (22);
			\draw (9) to (20);
			\draw (23) to (26);
			\draw (26) to (25);
			\draw (25) to (23);
			\draw (24) to (25);
			\draw (24) to (26);
			\draw (24) to (23);
			\draw (23) to (7);
		\end{pgfonlayer}
		\begin{pgfonlayer}{dark}
		
			\draw[rounded corners, fill=darkBoundry] 
			(81.center) 
			-- (82.center) 
			-- (83.center)
			-- (84.center)
			-- (85.center)
			-- (87.center)
			-- cycle ;
			
		\end{pgfonlayer}
		
		\begin{pgfonlayer}{light}
			\draw[rounded corners=15pt, fill=lightBoundry] 
			(46.center) 
			(47.center) 
			-- (48.center)
			-- (49.center)
			-- (50.center)
			-- (52.center)	
			-- (53.center)
			-- (54.center)
			-- cycle;
			
			\draw[rounded corners, fill=verylightBoundry] 
			(68.center)-- (69.center) -- (67.center)
			--cycle;
			
			\draw[rounded corners=10pt, fill=verylightBoundry] 
			(63.center)-- (64.center) -- (65.center)--cycle;
			
			\draw[rounded corners=10pt, fill=verylightBoundry] 
			(70.center)-- (71.center) -- (72.center)-- (73.center)--cycle;
			
			\draw[rounded corners=10pt, fill=verylightBoundry] 
			(74.center)-- (75.center) -- (76.center)--cycle;
			
			\draw[rounded corners=10pt, fill=verylightBoundry] 
			(77.center)-- (78.center) -- (79.center)-- (80.center)--cycle;
			
		\end{pgfonlayer}

	\end{tikzpicture}
}
\begin{document}

\title[Sampling Multiple Nodes in Large Networks: Beyond Random Walks]{Sampling Multiple Nodes in Large Networks:\\Beyond Random Walks}

\author{Omri Ben-Eliezer}
\affiliation{
\institution{Massachusetts Institute of Technology}
 \city{Cambridge}
 \country{Massachusetts, USA}
}
\email{omrib@mit.edu}

\author{Talya Eden}
\affiliation{
\institution{Boston University \& Massachusetts Institute of Technology, Massachusetts, USA}
 \country{}
}
\email{teden@mit.edu}

\author{Joel Oren}
\affiliation{
\institution{Bosch Center for Artificial Intelligence}
\city{Haifa}
\country{Israel}
}
\email{joel.oren@il.bosch.com}

\author{Dimitris Fotakis}
\affiliation{
\institution{National Technical University of Athens}
\city{Athens}
\country{Greece}
}
\email{fotakis@cs.ntua.gr}

\renewcommand{\shortauthors}{Ben-Eliezer, Eden, Oren and Fotakis}

\begin{abstract}
  Sampling random nodes is a fundamental algorithmic primitive in the analysis of massive networks, with 
many modern graph mining algorithms critically relying on it. 
We consider the task of generating a large collection of random nodes in the network assuming limited query access (where querying a node reveals its set of neighbors).
In current approaches, based on long random walks, the number of queries per sample scales linearly with the mixing time of the network, which can be prohibitive for large real-world networks. 
We propose a new method for sampling multiple nodes that bypasses the dependence in the mixing time by explicitly searching for less accessible components in the network. 
We test our approach on a variety of real-world and synthetic networks with up to tens of millions of nodes, demonstrating a query complexity improvement of up to $\times 20$ compared to the state of the art.
\end{abstract}

\maketitle
\setdefaultleftmargin{2em}{2em}{}{}{}{} 
\setlength{\belowcaptionskip}{-12pt}

\begin{figure*}[ht!]
    \centering
    \def\imagewidth{.46\linewidth}
     \includegraphics[width=\imagewidth]{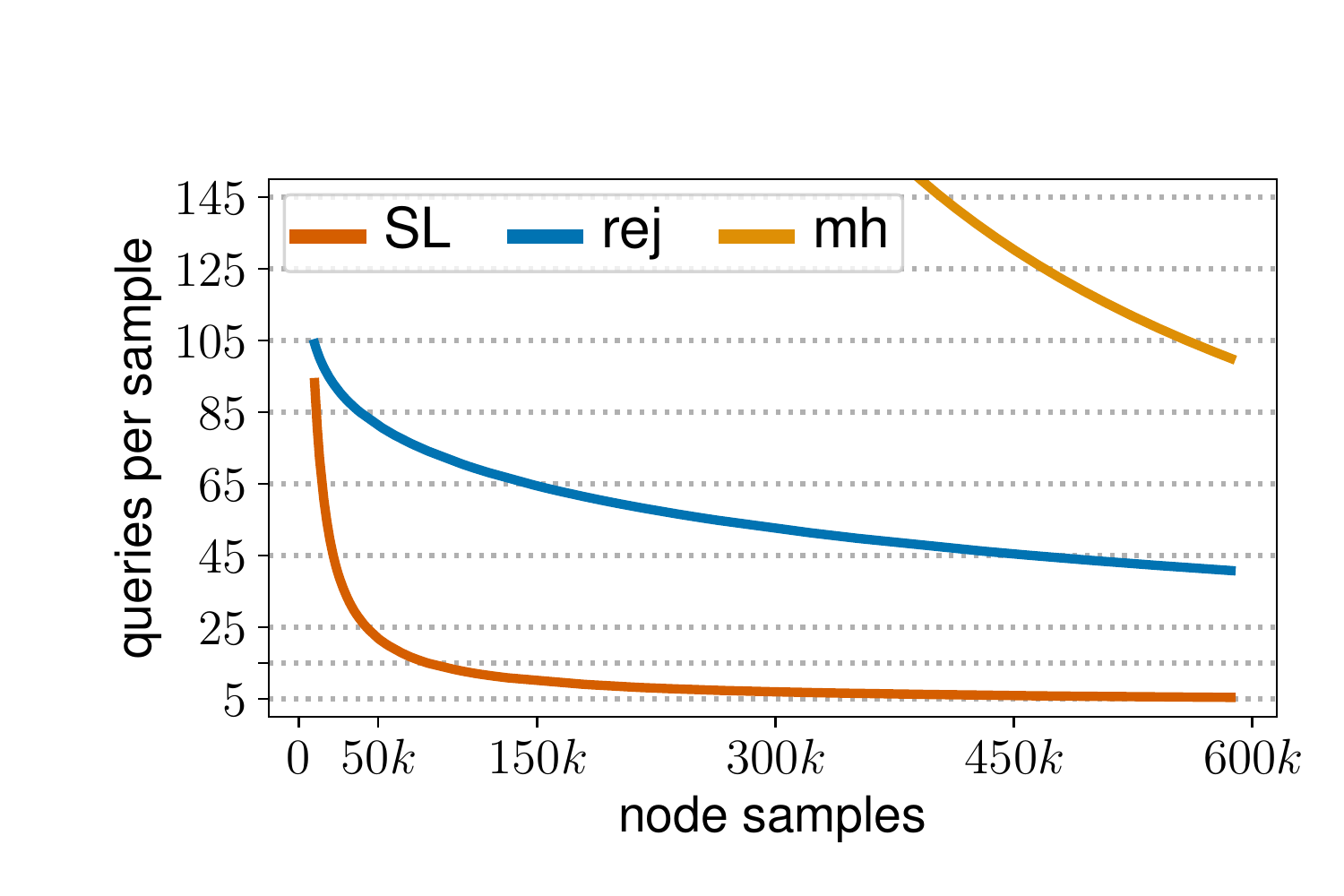}
    \hspace{-.7cm}
    \includegraphics[width=\imagewidth]{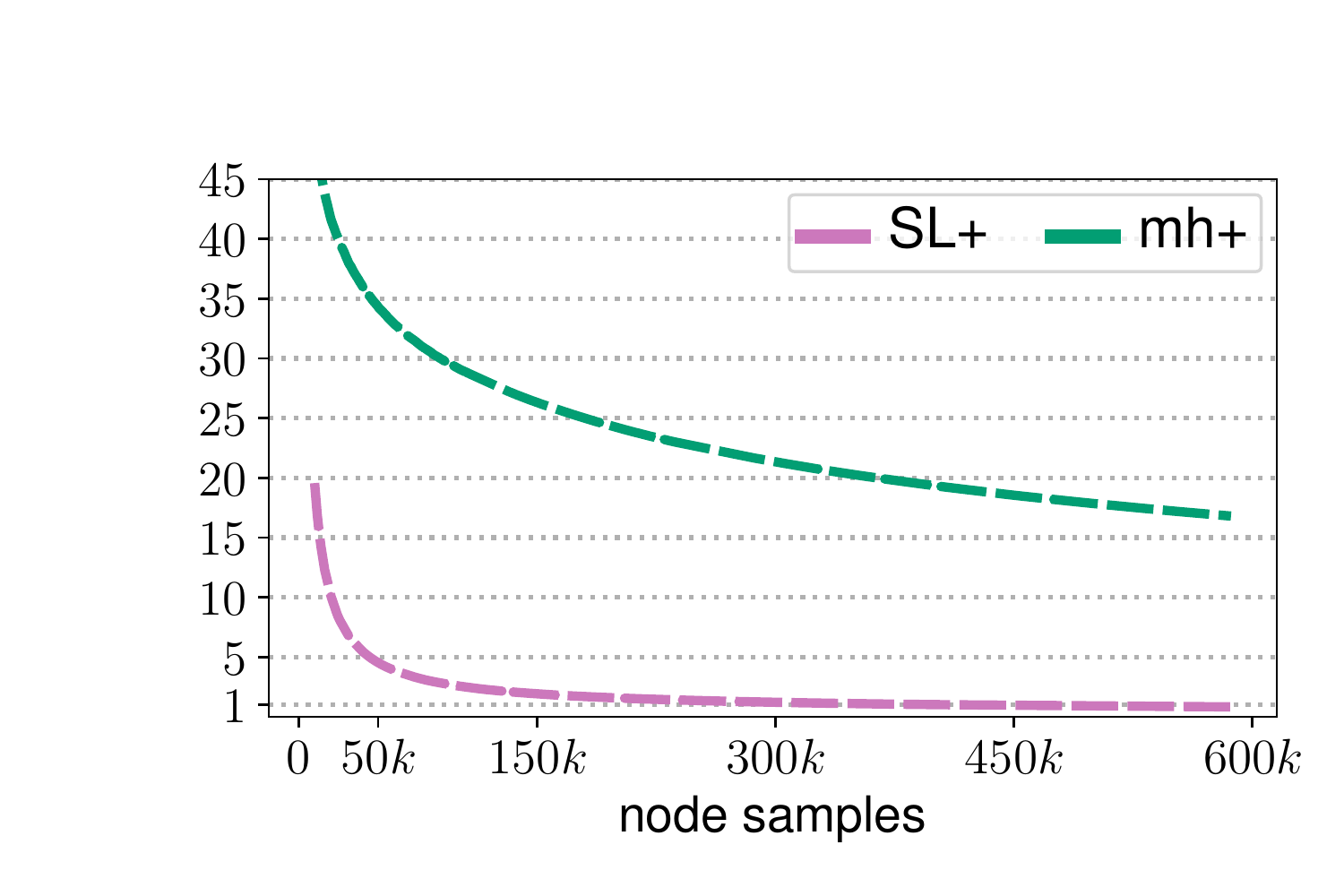}
    \hspace{.7cm}
    \vspace{-0.2cm}
\caption{Amortized query complexity per sample in
SinaWeibo, a network with 58.6M nodes and 261M edges.
Our algorithm for the standard query model, \SL, is compared to rejection sampling and Metropolis-Hastings random walks (left). The variant for the stronger query model, \SLP, is compared to the stronger variant of MH (right).}
    \label{fig:sinaweibo}
\end{figure*}

\section{Introduction}

Random sampling of nodes according to a prescribed distribution has been extensively employed in the analysis of modern large-scale networks for more than two decades \cite{GGR98,Klein03}. Given the massive sizes of modern networks, and the fact that they are typically accessible through 
node (or edge) queries, random node sampling offers the most natural approach, and sometimes virtually the only approach, to fast and accurate solutions for network analysis tasks. These include, for example, estimation of the order \cite{SomekhSizeEstimation}, average degree and the degree distribution \cite{DKS14,EdenJPRS18,ZKS15}, number of triangles \cite{BBCG10}, clustering coefficient \cite{SPK14}, and betweenness centrality \cite{BN19}, among many others.\blfootnote{Work partially conducted while Omri Ben-Eliezer was at Harvard University.
Talya Eden was supported by the NSF Grant CCF-1740751, Eric and Wendy Schmidt Fund for Strategic Innovation, Ben-Gurion University of the Negev and the computer science department of Boston University. Dimitris Fotakis was partially supported by NTUA Basic Research Grant (PEBE 2020) "Algorithm Design through Learning Theory: Learning‐Augmented and Data‐Driven Online Algorithms - LEADAlgo".
\\Source code: \texttt{\url{https://github.com/omribene/sampling-nodes}}.} 
Moreover, node sampling is a fundamental primitive used by many standard network algorithms for quickly exploring networks, e.g., for detection of frequent subgraph patterns~\cite{Maniacs2021} or communities~\cite{Scalable, yun2014community}, or for mitigating the effect of undesired situations, such as teleport in PageRank \cite{Gleich15}.
Hence, a significant volume of recent research is devoted to the efficiency of generating random nodes in large social and information networks; see, e.g., \cite{ChiericettiWWW2016, ChierichettiICALP18, Iwasaki2018, LiICDE2015, SamplingDirectedRibeiro2012, WalkNotWait2015, ZhouRewiring2016} and the references therein. 

\para{Problem formulation}
We consider the task of implementing a \emph{sampling oracle} that allows one
to sample multiple nodes in a network according to a prescribed distribution (say, the uniform distribution). The collection of sampled nodes should be independent and identically distributed.
Standard algorithms for random node sampling assume query access to the nodes of the network, where querying a node reveals its neighbors. As a starting point, we are given access to a single node from the network, and our goal is to be able to 
generate a (possibly large) set of 
samples from the desired distribution while performing as few queries as possible. A typical efficiency measure is the amortized \emph{query complexity} --- the total number of node queries divided by the number of sampled nodes. 
This extends the framework proposed by Chierichetti et al.~\cite{ChiericettiWWW2016, ChierichettiICALP18}, who studied the query complexity of sampling a \emph{single} node. 
For simplicity, we focus on the important case of the \emph{uniform distribution}, but our approach can be easily generalized to any natural distribution. We consider the regime where the desired number of node samples $N$ is large.

\para{Random walks and their limitations}
Most previous work on node sampling has focused on random-walk-based approaches, which naturally exploit node query access to the network. They are versatile and achieve remarkable efficiency \cite{ChiericettiWWW2016,CRS14,Iwasaki2018,LiICDE2015}. The random walk starts from a seed node and proceeds from the current node to a random neighbor, until it (almost) converges to its stationary distribution.  Then, a random node is selected according to the walk's stationary distribution, which can be appropriately modified if it differs from the desired one \cite{ChiericettiWWW2016}. The number of steps before a random walk (almost) converges to the stationary distribution is called the  \emph{mixing time}, and usually denoted by $\tmix$.

RW-based approaches are very effective in highly connected networks with good expansion properties, 
where the mixing time is logarithmic \cite{Hoory06expandergraphs}. 
In most real-world networks, however, the situation is more complicated.
It is by now a well-known phenomenon that the mixing time in many real-world social networks can be in the order of hundreds or even thousands, 
much higher than in idealized, expander-like networks   \cite{dellamico2009measurement,MohaisenMixing2010,qi2020real}.
As part of this work, we prove lower bounds for sampling multiple nodes: sampling a collection of $N$ (nearly) uniform and uncorrelated random nodes using a random walk may require 
$\Omega(N \cdot \tmix)$ queries under standard
structural assumptions. Given the multiplicative dependence in $\tmix$, this bound becomes prohibitive as $N$ grows larger.

\subsection{Our Contribution}
In light of the above discussion, 
we ask the following  question: 

\begin{center}
\textit{Can we design a highly query-efficient method for \\sampling a large number of nodes that does not depend \\multiplicatively on the network's mixing time?}
\end{center}
We answer this in the affirmative 
by presenting a novel algorithm for sampling nodes with a query complexity that is \emph{up to a factor of $20$ smaller} than that of state of the art random walk algorithms. To the best of our knowledge, this is the first node sampling method that is not based on long random walks.

\para{Lower bound for random walks}
We present an $\Omega(N\cdot \tmix)$ lower bound for sampling $N$ uniform and independent nodes from a network using naive random walks (that do not try to learn the network structure).
Our lower bound construction is a graph consisting  of a large expander-like portion and many small components connected to it by bridges, a structure that is very common among large social networks \cite{leskovec}. 
The main intuition is that sampling $N$ nodes from the network requires the walk to visit many small communities, and thus cross $\Theta(N)$ bridges, which in expectation takes $\Theta(\tmix)$ queries per bridge, and $\Theta(N \cdot \tmix)$ in total. 

\para{Bypassing the multiplicative dependency}
We present a new algorithm for sampling multiple nodes, \SL, whose query complexity does not depend multiplicatively on the mixing time. 
Our algorithm learns a structural decomposition of the network into one highly connected part and many 
small peripheral components.
We also present a stronger variant of our algorithm, \SLP, that works in a more expressive query model, where querying a node also reveals the degrees (and not just the identifiers) of its neighbors.\footnote{Such strong queries are supported, e.g., by Twitter API (\href{https://tinyurl.com/y2397aqb}{link1}, \href{https://tinyurl.com/y5aec86d}{link2}).}
Our approach is inspired by the core-periphery perspective on social networks \cite{RombachCore2014}. We start by exploring the graph with a random walk that is biased towards higher degree nodes. With the high degree nodes in hand, we build a data structure  
that partitions all non-neighbors of these nodes  
into extremely small components.
Then, we use the data structure to quickly reach these components (and subsequently, sample from them); the process involves running a BFS inside the reached component, which due to its tiny size does not require many queries.

We theoretically relate the query complexity of \SL{} to several network parameters and show that they are well-behaved in practice.
The samples generated by our algorithm are provably independent and nearly uniform.

\para{Improved empirical performance} We compare the amortized query complexity of \SL{} 
against those of the two most representative and standard random walk-based approaches for node sampling, rejection sampling (\rej) and the Metropolis-Hastings (\mh) approach; \SLP{} is compared against \mhp, an analogue of \mh in the degree-revealing model. For a complete description of \rej, \mh and \mhp with a theoretical and empirical analysis of their query complexity, see the work of Chierichetti et al.~\cite{ChiericettiWWW2016}.

We perform the comparisons on seven real world social and information networks with diverse characteristics, with the largest being SinaWeibo 
\cite{zhang2014characterizing},  which consists of more than $50$M nodes and $250$M edges.
The results presented in
Figures \ref{fig:sinaweibo} and \ref{fig:RW-comps} and in Section~\ref{subsec:comparison_exp} show that when $N$ is not extremely small,    
the query complexity of our algorithms significantly outperforms the random walk-based counterparts across the board.
This holds both under the 
standard query model (i.e., \SL vs. \rej and \mh) 
and in the stronger, degree-revealing query model (\SLP vs \mhp). 
In both models, and for all seven networks, we achieve at least $40\%$ and up to  $95\%$ reduction in the query complexity. Remarkably, as shown in Figure \ref{fig:sinaweibo}, in some cases \SL may achieve a near-optimal query complexity of as little as 5 queries per sample, even when $N$ is less than $1\%$ of the network size. In \SLP{} this is even more dramatic, essentially achieving one query per sample.

\para{Generative models}
One possible explanation to our findings might lie in the seminal work by Leskovec et al.~\cite{leskovec}. In one of the most extensive analyses of the community structure in large real-world social and information networks, they examined more than 100 large networks in various domains. They discovered that most of the classical generative models at the time did not capture well the community structure and additional various properties of social networks (e.g., size of communities, their connectivity to the graph, scaling over time, etc). The \emph{Forest Fire} generative model \cite{FF1,FF2} was developed to fill this gap, being more in-line with empirical findings. In this model, which is by now standard and well-investigated, edges are added in a way that creates small, barely connected pieces that are significantly larger and denser than random. We show that our algorithm performs very well on networks generated by this model (a consistent query complexity improvement of 30-50\%) even for tiny core sizes; see Figure \ref{fig:FF-comparison}. 
 
\begin{figure}[b]
    \centering
    \def\imagewidth{0.8\columnwidth}
    \includegraphics[width=\imagewidth]{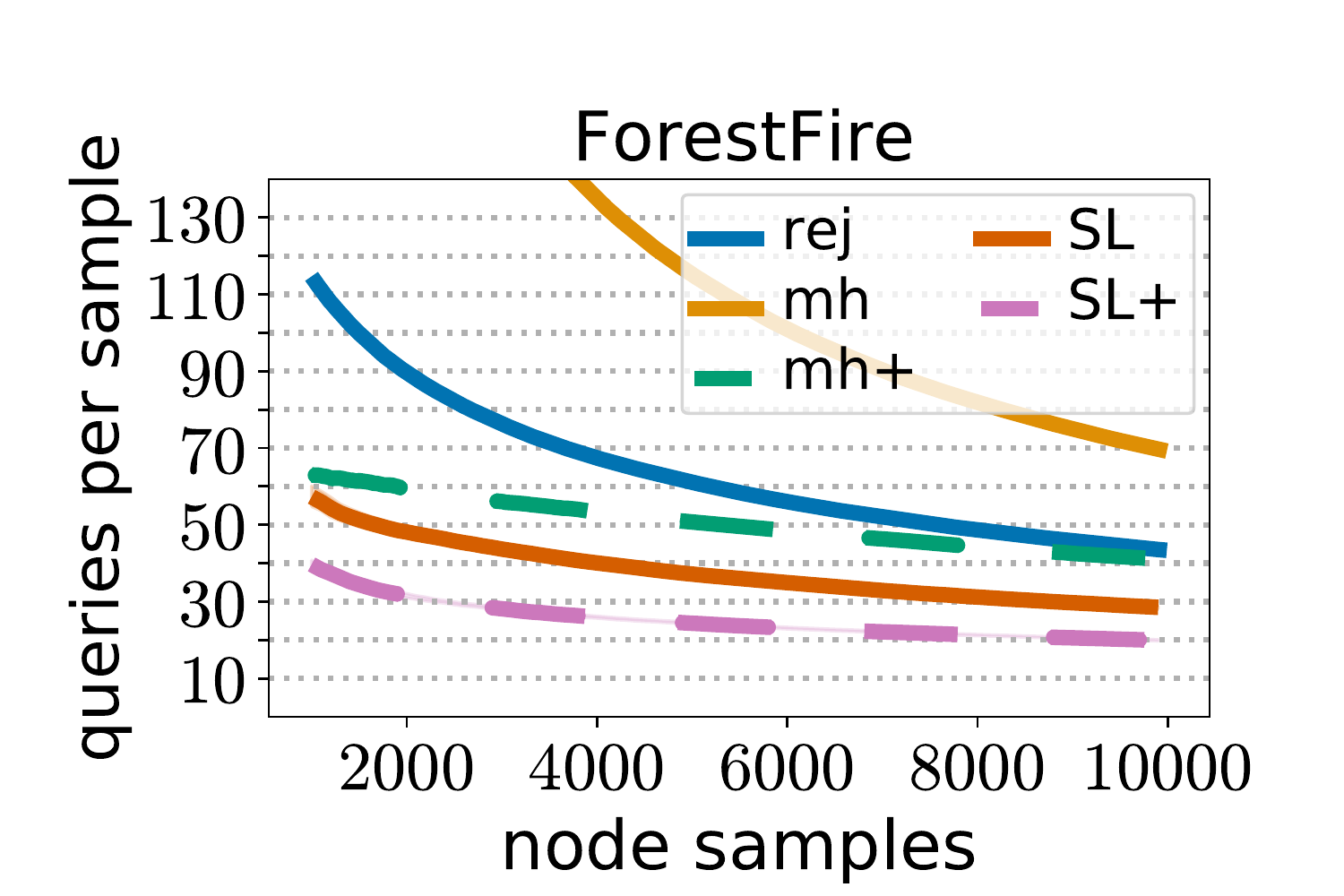}
\caption{Amortized query complexity per sample in
a Forest Fire graph with $1M$ nodes (parameters: $p_f = 0.37$, $p_b = 0.3$).}
    \label{fig:FF-comparison}
\end{figure}

\subsection{Related Work}
Random-walk-based approaches for node sampling have been  studied extensively in the last decade. The aforementioned work of Chierichetti et al. \cite{ChiericettiWWW2016} is the closest to ours, studying the query complexity of such approaches. The analysis of Iwasaki and Shudo \cite{Iwasaki2018} also focuses on the average query complexity of random walks. 

Using random node sampling to determine the properties of large-scale networks goes back to the seminal work of Leskovec and Faloutsos \cite{LeskovecSampling2006}. Since then, the performance of node sampling via random walks has been widely considered in the context of network parameter estimation. For example, Katzir et al.~\cite{SomekhSizeEstimation,KH15} use random walks to estimate the network order and the clustering coefficient based on sampling and collision counting. Cooper et al.~\cite{CRS14} present a general random-walk-based framework for estimating various network parameters; whereas Ribeiro et al.~\cite{SamplingDirectedRibeiro2012} extend these approaches to directed networks. Ribeiro and Towsley~\cite{RT10} use multidimensional random walks. Eden et al.~\cite{ER18,ERS18,EMR21} and T\v{e}tek and Thorup~\cite{TT} study the query complexity of generating uniform edges given access to uniform nodes. Bera and Seshadhri \cite{BeraSeshadhri20} devise an  accurate sublinear triangle counting algorithm that queries only a small fraction of the graph edges.
Several other examples of network estimation works can be found in the literature \cite{WalkingInFacebook, JinAlbatross2011, LiICDE2015, WalkNotWait2015, ZhouRewiring2016, KSY20}.

In many of these works, improved query complexity is achieved by relaxing the requirement for independent samples. Understanding how  dependencies between sampled nodes affect the outcome, however, inherently requires a complicated analysis tailored specifically for the network-parameter at question. Such an analysis is not required for nodes generated by our approach, which are provably independent.
From a technical viewpoint, our adaptive exploration of the network's isolated components bears some similarity to node sampling via deterministic exploration \cite{Salamanos2017} and to \emph{node probing} approaches (e.g., \cite{BDD14,SEGP15,SEGP17,LSBE18}) for network completion \cite{HX09,KL11}. Given access to an incomplete copy of the actual network, Soundarajan et al.~\cite{SEGP15,SEGP17} and LaRock et al.~\cite{LSBE18} discover the unobserved part 
via adaptive network exploration; see also the survey by Eliassi-Rad et al.~\cite{Eliassi19}. 

Utilizing core-periphery characteristics of networks for algorithmic purposes has received surprisingly little attention. The most relevant work is by Benson and Kleinberg \cite{BK19}, on link prediction.

\section{Lower Bound For Random Walks}
\label{sec:lower_bound}

In this section, we quickly present the $\Omega(N \cdot \tmix)$ lower bound on the number of queries required to sample $N$ (nearly) independent and uniformly distributed nodes using random walks. See proof sketch in Section~\ref{sec:missing_lb}. For clarity, we focus our analysis on the most standard random walk, which at any given time proceeds from the current node to one of its neighbors, uniformly at random; such a random walk is used in rejection sampling (\rej). Similar lower bounds hold for other random walk variants that do not learn structural characteristics of the network, including \mh and \mhp.  

\begin{theorem}\label{thm:lb}
For any $n$ and $\log n\ll t \ll n^{\Theta(1)}$, there exists a graph $G$ on $n$ vertices 
with mixing time $\tmix=\Theta(t)$, that satisfies the following: for any $N \leq n^{\Theta(1)}
$, any sampling algorithm based on uniform random walks that outputs a (nearly) uniform collection of $N$ nodes must perform $\Omega(N\cdot \tmix)$ queries.
\end{theorem}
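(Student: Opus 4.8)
The plan is to construct an explicit graph $G$ that forces any "naive" uniform random walk to spend $\Omega(\tmix)$ queries per returned sample, by making the sampled set provably non-uniform unless the walk repeatedly pays the mixing-time cost. First I would describe the construction: take a core $G_{\text{core}}$ on roughly $n/2$ vertices that is a bounded-degree expander (so its own mixing time is $O(\log n)$), and attach to it $k = \Theta(n/t)$ disjoint "gadget" components $H_1,\dots,H_k$, each of size $\Theta(t)$, where each $H_i$ is connected to the core by a single bridge edge (a "whisker," matching the structure discussed in the introduction and depicted in the paper's figure). With appropriate choice of parameters, the overall mixing time is $\Theta(t)$: to move between two random gadgets the walk must pass through the core and then hit the correct bridge, which takes $\Theta(t)$ steps in expectation, and one checks this is also an upper bound up to constants. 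Since $t \ll n^{\Theta(1)}$ and $t \gg \log n$, all the stated parameter regimes are satisfiable, and the total size is $n$.

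Next I would set up the sampling obstruction. A uniform node sample should land in $\bigcup_i H_i$ with probability $\Theta(1)$ (the gadgets contain a constant fraction of all vertices), and conditioned on landing in the gadgets it should be roughly uniform over the $\Theta(n)$ gadget vertices, hence it should be spread over $\Omega(k) = \Omega(n/t)$ distinct gadgets when $N \ge k$; more to the point, each individual sample independently falls into a uniformly random gadget. The key claim is: a uniform random walk that has taken only $o(t)$ steps since last being "reset" cannot have entered a fresh gadget with the correct (near-uniform) distribution — informally, to deliver a sample that is almost-uniformly distributed among the $k$ gadgets, the walk must, between consecutive delivered samples, traverse the core from one bridge to an essentially freshly-chosen bridge, and a random walk confined to touching a particular bridge among $k = \Theta(n/t)$ of them needs $\Omega(t)$ steps in expectation (this is where the expander mixing/hitting-time estimate enters). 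I would make this rigorous via a hitting-time lower bound: starting anywhere, the expected time for a uniform RW on $G$ to reach gadget $H_i$ is $\Omega(t)$, and in fact the distribution over which gadget is entered next is within $o(1)$ total variation of uniform only after $\Omega(t)$ steps; a coupon-collector-style or direct counting argument then shows that producing $N$ samples whose joint distribution is within, say, $1/3$ of the $N$-fold uniform product requires the walk to enter $\Omega(N)$ distinct gadget-visits with fresh randomness, at $\Omega(t)$ queries apiece, giving $\Omega(N \cdot t) = \Omega(N \cdot \tmix)$ total queries.

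To handle the "nearly uniform" relaxation and the fact that the algorithm may post-process or subsample the walk, I would phrase the argument as an information/advantage bound: let $B$ be the number of bridge-crossings performed in a run of length $Q$; standard RW arguments on this construction give $\mathbb{E}[B] = O(Q/t)$. Any sampler that reads only the walk transcript cannot output a collection of $N$ nodes that is $\le 1/3$-close in total variation to $N$ i.i.d.\ uniform nodes unless $B = \Omega(N)$ with constant probability, since each gadget-membership of an output sample is "new" information that can only be acquired by a crossing into that gadget, and samples that reuse an old gadget visit are too correlated / too concentrated to mimic independent uniform draws (here I would quantify via the fact that two uniform nodes collide in the same gadget with probability only $\Theta(1/k)$, whereas two samples taken from the same gadget visit collide with probability $\Omega(1)$). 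Combining $\mathbb{E}[B] = O(Q/t)$ with the requirement $B = \Omega(N)$ yields $Q = \Omega(N t) = \Omega(N \tmix)$. The similar bounds for \mh and \mhp follow because those walks also cross bridges only $O(1/t)$ fraction of the time on this graph — the degree-biasing in \mhp{} does not help since gadget vertices and bridge endpoints have bounded degree — so the same crossing-rate estimate applies verbatim.

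The main obstacle I expect is making the "fresh randomness per sample" claim fully rigorous against an \emph{arbitrary} post-processor of the transcript: one must rule out clever schemes that extract many near-independent uniform samples from few bridge crossings (e.g.\ by combining partial information across the core portion of the walk). I would resolve this by observing that the gadget identity of any output node is a deterministic function of which bridge was last crossed before that node was visited, so the number of distinct gadget-identities appearing among the outputs is at most $B+1$; if $N \gg B$ then by pigeonhole many outputs share a gadget, and conditioned on the bridge-crossing history the within-gadget positions are nearly uniform but \emph{jointly} far from $N$ independent uniform draws because of the collision-probability gap above. Turning this into a clean total-variation lower bound — rather than just an expectation statement — is the delicate step, but the expander structure keeps all the relevant hitting-time and mixing estimates within $O(1)$ factors of the naive guesses, so no heavy machinery should be needed beyond standard Markov-chain hitting-time bounds and a union/coupon-collector argument.
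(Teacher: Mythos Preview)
Your proposal is correct and follows essentially the same approach as the paper's proof sketch: the same expander-core-plus-whiskers construction (a bounded-degree expander on $n/2$ vertices with $\Theta(n/t)$ pendant components of size $\Theta(t)$ attached by single bridges), the same birthday/collision argument to force $\Omega(N)$ distinct gadget visits, and the same $\Omega(t)$-steps-per-bridge-crossing estimate from the $O(1/t)$ exit probability of the core. Your discussion of robustness against arbitrary post-processing of the walk transcript goes somewhat beyond the paper, which only gives a sketch and leaves that point implicit, but the underlying ideas are identical.
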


\begin{figure}
\centering
  \def\imagelen{0.49
  \linewidth}
    \raisebox{0pt}{\includegraphics[width=\imagelen, height=2.9cm]{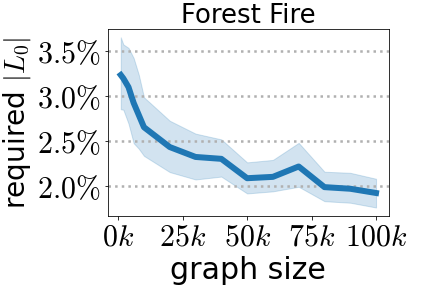}}
  \includegraphics[width=\imagelen, height=2.9cm]{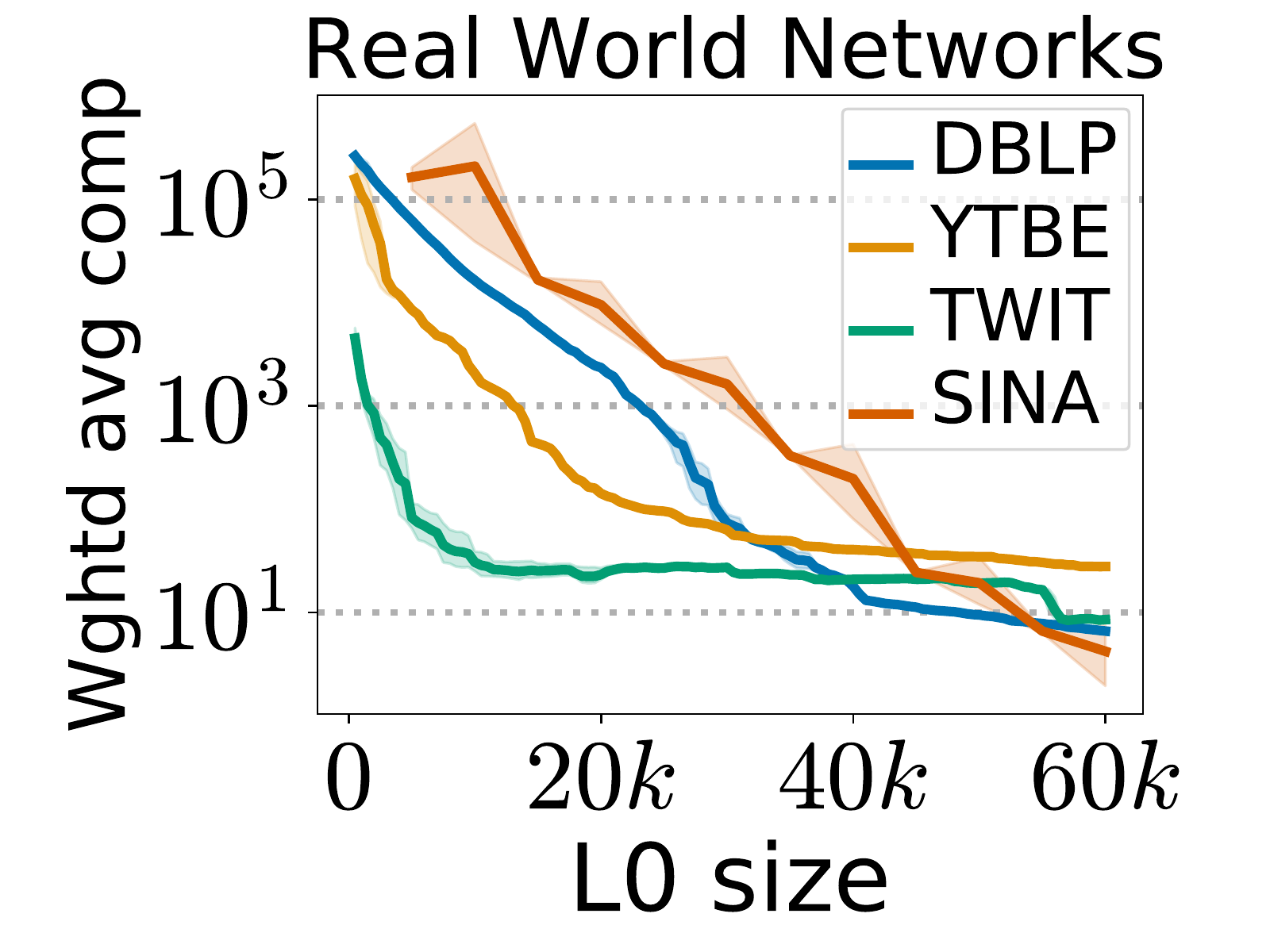}
  \caption{Very small $L_0$ suffices to shatter $L_{\geq 2}$-components in a Forest Fire graph (left); convergence of $L_{\geq 2}$ component sizes as $|L_0|$ grows, in four real-world networks (right).}
  \label{fig:L0-L2-exps}
\end{figure}

\section{Algorithm}\label{sec:ub}

In order to bypass the multiplicative dependence in the mixing time, one needs to exploit structural characteristics of social networks in some way. One natural property is that the degree-distribution is top-heavy; furthermore, 
a large fraction of nodes in the network are well-connected to the high-degree nodes, whereas the remaining nodes decompose into small, weakly connected components (e.g., \cite{leskovec, RombachCore2014}). In Figure \ref{fig:L0-L2-exps} we demonstrate this phenomenon in a strong quantitative form. 
Suppose  that we are able to access a collection of, say, the top $1\%$ highest degree nodes in the network, and call these $L_0$. Denote their neighbors by $L_1$ and the rest of the network by $L_{\geq 2}$. Is it the case that a small $L_0$ size suffices for $L_{\geq 2}$ to decompose into tiny isolated components?

Our experiments indicate that the answer is positive, even if one instead generates $L_0$ greedily with our query access, 
starting from an arbitrary seed vertex. The details are given in the experimental section, but briefly, Figure \ref{fig:L0-L2-exps} demonstrates that for both the Forest Fire model with standard parameters and for various real-world social networks with diverse characteristics, a very small $L_0$ size (ranging between $0.1\%$ and $10\%$ of the graph size, and in most cases about $1-2\%$) suffices for $L_{\geq 2}$ to decompose very effectively.

These results suggest a new approach to quickly reach nodes in the network. In a preprocessing phase, greedily capture $L_0$ as above, which decomposes the rest of the network into $L_1$ and $L_{\geq 2}$. $L_1$-nodes are easy to reach; $L_{\geq 2}$-nodes are reachable by attempting to visit a component from $L_{\geq 2}$ through a walk of length $2$ from $L_{0}$, and then fully exploring the component via a BFS. Our algorithm, \SL, is based upon this idea, as well as running  size  and reachability estimations  to ensure the generated samples are close to uniform.

\subsection{Algorithm Description}
\label{sec:alg_desc}

\begin{figure}[]
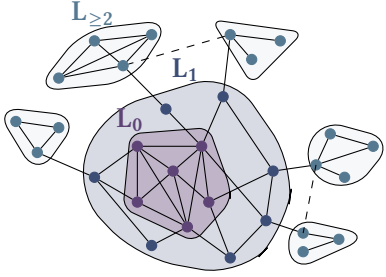

    \centering
    \drawWhiskers
    \caption{An illustration of a typical network and its layering by our algorithm, \SL. The $L_{\geq 2}$-layer components intuitively correspond to  small communities  that are weakly connected to the rest of the network.
    \vspace{0.1cm}
    }
    \label{fig:whiskers_net}
\end{figure}
We next describe our algorithm, \SL, in detail. 
The algorithm runs in two phases: a structural learning phase and a sampling phase.
In the first phase, the algorithm constructs a data structure providing fast access to nodes that are either very highly-connected (we call these nodes the $L_0$-layer) or neighbors thereof (the $L_1$-layer). 
This exploits the well-known fact that in large social and information networks, typically a large fraction of the nodes are connected to a highly influential core \cite{RombachCore2014}.
The node sampling itself takes place in the second phase, which uses the data structure to either sample from the core layers $L_0\cup L_1$, or to explicitly cross bridges that lead to the small, less connected parts. These are edges from $L_1$ to nodes outside $L_0 \cup L_1$.  We refer to  these nodes as the $L_{\geq 2}$ layer.
Once it reaches such a small component, the algorithm fully explores the component and uniformly samples a node from within it. Finally, our algorithm uses rejection sampling to ensure that (almost) all nodes are returned with equal probability. 

\para{Structural decomposition phase}  Starting from an arbitrary node, we aim to capture the  highest-degree nodes in the network. This is done by our procedure \textsc{Generate-$L_0$} below.
We add these nodes greedily, one by one, where intuitively, in every step the newly added node is the one we perceive (according to the information currently available) as the highest-degree one.
We refer to this initial collection of high-degree nodes as the base layer, $L_0$.\footnote{We note that one can modify \textsc{Generate-$L_0$} by first conducting a random walk using part of the $L_0$-construction budget, and only then continuing with the above process. While the added randomness could theoretically help escaping situations where the initial node is problematic in some way or there are multiple cores in the graph, in all networks that we tested adding such a random walk did not improve the quality of $L_0$; in fact, the existing algorithm captured essentially all nodes with very high degrees.}

\begin{figure}
\begin{center}
\framebox{
\begin{minipage}{.9\columnwidth}
     \ALGGenLZ
\end{minipage}
}
\end{center}
\label{fig:pseudoGenL0}
\end{figure}

 \newcommand{\drawGenerateLzero}[3]{

\tikzstyle{client}=[draw=none,circle,minimum size=2.5ex, inner sep=0, fill,]

\node[client, fill=#1] (A) at (0,0) {};
\node[client, fill=#2] (B) at (1,0) {};
\node[client, fill=#2] (C) at (-0.4,0.8) {};
\node[client, fill=#2] (D) at (-0.7,-0.6) {};
\node[client, fill=#2] (P) at (1.4,0.8) {};
\node[client, fill=#3] (E) at (1.7, 1.9) {};
\node[client, fill=#3] (F) at (0.3, 1.6) {};
\node[client, fill=#3] (G) at (-1.8,-1.9) {};
\node[client, fill=#3] (H) at (-1.2,0.9) {};
\node[client, fill=#3] (I) at (1.7,-1.7) {};
\node[client, fill=#3] (J) at (-1.8,1.7) {};
\node[client, fill=#3] (K) at (-0.9,1.9) {};
\node[client, fill=#3] (L) at (-1.9,0.4) {};
\node[client, fill=#3] (M) at (-0.5,-1.9) {};
\node[client, fill=#3] (N) at (0,-1.2) {};
\node[client, fill=#3] (O) at (0.4,-1.6) {};
\node[client, fill=#3] (Q) at (1.2,-0.9) {};
\node[client, fill=#3] (R) at (-1.6,-0.5) {};

\tikzstyle{edgeStyle}=[#2, very thick]
\draw[edgeStyle] (A) -- (B);
\draw[edgeStyle] (A) -- (C);
\draw[edgeStyle] (A) -- (D);
\draw[edgeStyle] (A) -- (P);

\tikzstyle{edgeStyle}=[#3, very thick]
\draw[edgeStyle] (B) -- (P);
\draw[edgeStyle] (B) -- (Q);
\draw[edgeStyle] (C) -- (D);
\draw[edgeStyle] (C) -- (H);
\draw[edgeStyle] (C) -- (F);
\draw[edgeStyle] (C) -- (K);
\draw[edgeStyle] (C) -- (R);
\draw[edgeStyle] (D) -- (R);
\draw[edgeStyle] (D) -- (N);
\draw[edgeStyle] (E) -- (F);
\draw[edgeStyle] (E) -- (P);
\draw[edgeStyle] (G) -- (M);
\draw[edgeStyle] (H) -- (J);
\draw[edgeStyle] (H) -- (K);
\draw[edgeStyle] (H) -- (L);
\draw[edgeStyle] (H) -- (R);
\draw[edgeStyle] (I) -- (Q);
\draw[edgeStyle] (M) -- (N);
\draw[edgeStyle] (N) -- (O);
\draw[edgeStyle] (O) -- (Q);

}

\begin{figure}

\def\Figonelen{0.14\textwidth}

  \begin{minipage}[b]{\Figonelen}
  
    \begin{tikzpicture}
    \begin{scope}[xscale=0.58, yscale=.5, transform shape]
    \drawGenerateLzero{indigo}{indigo!45}{indigo!16};
    \node[scale=1.4, below left] at (B) {1};
    \node[scale=1.4, above right] at (C) {1};
    \node[scale=1.4, below left] at (D) {1};
    \node[scale=1.4, above left] at (P) {1};
    \end{scope}
    \end{tikzpicture}    
  \end{minipage}
  \hfill
  \begin{minipage}[b]{\Figonelen}

    \begin{tikzpicture}
    \begin{scope}[xscale=0.58, yscale=.5, transform shape]
    \drawGenerateLzero{indigo}{indigo!45}{indigo!16};
    \node[client, fill=indigo] at (D) {};
    \node[client, fill=indigo!45] at (R) {};
    \node[client, fill=indigo!45] at (C) {};
    \node[client, fill=indigo!45] at (N) {};
    \tikzstyle{edgeStyle}=[indigo, very thick]
    \draw[edgeStyle] (A) -- (D);
    \tikzstyle{edgeStyle}=[indigo!45, very thick]
    \draw[edgeStyle] (D) -- (N);
    \draw[edgeStyle] (D) -- (R);
    \draw[edgeStyle] (D) -- (C);
    \node[scale=1.4, below left] at (B) {1};
    \node[scale=1.4, above right] at (C) {2};
    \node[scale=1.4, above left] at (P) {1};
    \node[scale=1.4, above right] at (N) {1};
    \node[scale=1.4, below left] at (R) {1};
    \end{scope}
    \end{tikzpicture}
  \end{minipage}
  \hfill
  \begin{minipage}[b]{\Figonelen}

    \begin{tikzpicture}
    \begin{scope}[xscale=0.58, yscale=.5, transform shape]
    \drawGenerateLzero{indigo}{indigo!45}{indigo!16};
    \node[client, fill=indigo] at (D) {};
    \node[client, fill=indigo] at (C) {};
    \node[client, fill=indigo!45] at (R) {};
    \node[client, fill=indigo!45] at (N) {};
    \node[client, fill=indigo!45] at (H) {};
    \node[client, fill=indigo!45] at (F) {};
    \node[client, fill=indigo!45] at (K) {};
    \tikzstyle{edgeStyle}=[indigo, very thick]
    \draw[edgeStyle] (A) -- (D);
    \draw[edgeStyle] (C) -- (D);
    \draw[edgeStyle] (A) -- (C);
    \tikzstyle{edgeStyle}=[indigo!45, very thick]
    \draw[edgeStyle] (D) -- (N);
    \draw[edgeStyle] (D) -- (R);
    \draw[edgeStyle] (C) -- (H);
    \draw[edgeStyle] (C) -- (F);
    \draw[edgeStyle] (C) -- (K);
    \draw[edgeStyle] (C) -- (R);
    \node[scale=1.4, below left] at (R) {2};
    \node[scale=1.4, above left] at (K) {1};
    \node[scale=1.4, above right] at (F) {1};
    \node[scale=1.4, below left] at (H) {1};
    \node[scale=1.4, above left] at (P) {1};
    \node[scale=1.4, below left] at (B) {1};
    \node[scale=1.4, above right] at (N) {1};

    \end{scope}
    \end{tikzpicture}
  \end{minipage}
  \vspace{0.3cm}
    \definecolor{niceblue}{rgb}{0.17, 0.3, 0.6}

  \begin{minipage}[b]{\Figonelen}


    \begin{tikzpicture}
    \begin{scope}[xscale=0.58, yscale=.5, transform shape]
    \drawGenerateLzero{niceblue}{niceblue!48}{niceblue!20};
    

    \node[scale=1.4, below left] at (B) {3};
    \node[scale=1.4, above right] at (C) {6};
    \node[scale=1.4, below left] at (D) {4};
    \node[scale=1.4, above left] at (P) {3};
    \end{scope}
    \end{tikzpicture}    
  \end{minipage}
  \hfill
  \begin{minipage}[b]{\Figonelen}

    \begin{tikzpicture}
    \begin{scope}[xscale=0.58, yscale=.5, transform shape]
      \drawGenerateLzero{niceblue}{niceblue!48}{niceblue!20};

    \node[client, fill=niceblue] at (C) {};
    \node[client, fill=niceblue!48] at (R) {};
    \node[client, fill=niceblue!48] at (H) {};
    \node[client, fill=niceblue!48] at (F) {};
    \node[client, fill=niceblue!48] at (K) {};
    
    \tikzstyle{edgeStyle}=[niceblue, very thick]
    \draw[edgeStyle] (A) -- (C);
    \tikzstyle{edgeStyle}=[niceblue!48, very thick]
    \draw[edgeStyle] (C) -- (D);
    \draw[edgeStyle] (C) -- (H);
    \draw[edgeStyle] (C) -- (F);
    \draw[edgeStyle] (C) -- (K);
    \draw[edgeStyle] (C) -- (R);
    
    \node[scale=1.4, below left] at (B) {3};
    \node[scale=1.4, below left] at (H) {5};
    \node[scale=1.4, above right] at (F) {2};
    \node[scale=1.4, below right] at (K) {2};
    \node[scale=1.4, below left] at (R) {3};
    \node[scale=1.4, below left] at (D) {4};
    \node[scale=1.4, above left] at (P) {3};

   \end{scope}
    \end{tikzpicture}
  \end{minipage}
  \hfill
  \begin{minipage}[b]{\Figonelen}


    \begin{tikzpicture}
    \begin{scope}[xscale=0.58, yscale=.5, transform shape]
    \drawGenerateLzero{niceblue}{niceblue!48}{niceblue!20};

    \node[client, fill=niceblue] at (C) {};
    \node[client, fill=niceblue] at (H) {};
    \node[client, fill=niceblue!48] at (R) {};
    \node[client, fill=niceblue!48] at (L) {};
    \node[client, fill=niceblue!48] at (J) {};
    \node[client, fill=niceblue!48] at (F) {};
    \node[client, fill=niceblue!48] at (K) {};
    
    \tikzstyle{edgeStyle}=[niceblue, very thick]
    \draw[edgeStyle] (A) -- (C);
    \draw[edgeStyle] (C) -- (H);
    \tikzstyle{edgeStyle}=[niceblue!48, very thick]
    \draw[edgeStyle] (C) -- (D);
    \draw[edgeStyle] (C) -- (F);
    \draw[edgeStyle] (C) -- (K);
    \draw[edgeStyle] (C) -- (R);
    \draw[edgeStyle] (H) -- (J);
    \draw[edgeStyle] (H) -- (L);
    \draw[edgeStyle] (H) -- (R);
    \draw[edgeStyle] (H) -- (K);
    
    \node[scale=1.4, below left] at (B) {3};
    \node[scale=1.4, above right] at (J) {1};
    \node[scale=1.4, below right] at (L) {1};
    \node[scale=1.4, above right] at (F) {2};
    \node[scale=1.4, below right] at (K) {2};
    \node[scale=1.4, below left] at (R) {3};
    \node[scale=1.4, below left] at (D) {4};
    \node[scale=1.4, above left] at (P) {3};

    \end{scope}
    \end{tikzpicture}
  \end{minipage}
  \caption{The structural decomposition phase, of generating (from left to right) the base layer $L_0$ in {\SL} (top, purple) and {\SLP} (bottom, blue) from an arbitrary starting node.
  At any given step, the next layer $L_1$ consists 
  of all neighbors of the $L_0$ nodes. The value next to each $L_1$-node indicates 
  its number of neighbors in $L_0$ (in \SL) or its total degree (in \SLP). 
  }
\label{fig:gen-L0}
\end{figure}
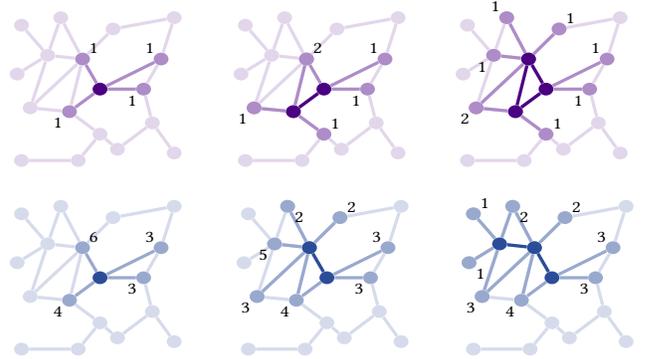

\tikzset{dotted pattern/.style args={#1}{
		postaction=decorate,
		decoration={
			markings,
			mark=between positions 0.25 and 0.75 step 0.25 with {
				\fill[radius=#1] (0,0) circle;
			}
		}
	},
	dotted pattern/.default={1pt},
}

\newcommand{\drawLayers}[4]{

\node[scale=3.5, #4] at (-8, -7.2) {$L_0$};
\node[scale=3.5, #4] at (-4, -7.2) {$L_1$};
\node[scale=3.5, #4] at (-0.4, -7.2) {$L_{\geq 2}$};

\fill[#1!25]  (-8,0) circle  [x radius=1.5cm, y radius=3cm] node{};
\draw[thin, color=#1]  (-8,0) circle  [x radius=1.5cm, y radius=3cm] node{};

\fill[#2!25] (-4,0) circle  [x radius=1.5cm, y radius=5cm] node{};
\draw[thin, color=#2]  (-4,0) circle  [x radius=1.5cm, y radius=5cm] node{};


\fill[#3!25]  (0.5,5) circle  [x radius=2.3cm, y radius=1.8cm] node{};
\fill[#4!25]  (0,1) circle  [x radius=1.8cm, y radius=1.5cm] node{};
\fill[#4!25]  (-1.2,-1.2) circle  [x radius=0.6cm, y radius=0.7cm] node{};
\fill[#4!25]  (-.3,-4) circle  [x radius=1.5cm, y radius=2cm] node{}; 
\draw[thin, #3]  (0.5,5) circle  [x radius=2.3cm, y radius=1.8cm] node{};
\draw[thin, #4]  (0,1) circle  [x radius=1.8cm, y radius=1.5cm] node{};
\draw[thin, #4]  (-1.2,-1.2) circle  [x radius=0.6cm, y radius=0.7cm] node{};
\draw[thin, #4]  (-.3,-4) circle  [x radius=1.5cm, y radius=2cm] node{};

\tikzstyle{client}=[draw=none,circle,minimum size=3ex, inner sep=0, fill=#1,]
\node[client] (L0-1) at (-8,2) {};
\node[client] (L0-2) at (-8,0) {};
\node[client] (L0-3) at (-8,-2) {};

\tikzstyle{client}=[draw=none,circle,minimum size=3ex, inner sep=0, fill=#4,]    
\node[client] (L1-1) at (-4,3.5) {};
\node[client] (L1-2) at (-4,1.75) {};
\node[client] (L1-3) at (-4,0) {};
\node[client] (L1-4) at (-4,-1.75) {};
\node[client] (L1-5) at (-4,-3.5) {};


\tikzstyle{client}=[draw=none,circle,minimum size=3ex, inner sep=0, fill=#3,]

\node[client] (L21-1) at (-1,5.5) {};
\node[client] (L21-2) at (-1,4.5) {};
\node[client] (L21-3) at (0,6.2) {};
\node[client] (L21-4) at (0,4) {};
\node[client] (L21-5) at (1,6.5) {};
\node[client] (L21-6) at (1,4.5) {};
\node[client] (L21-7) at (1,3.5) {};
\node[client] (L21-8) at (2,5) {};

\tikzstyle{client}=[draw=none,circle,minimum size=3ex, inner sep=0, fill=#4,]

\node[client] (L22-1) at (-1,1) {};
\node[client] (L22-2) at (0,2) {};
\node[client] (L22-3) at (0,0.5) {};
\node[client] (L22-4) at (1,1.3) {};

\node[client] (L23-1) at (-1, -1.1) {};

\node[client] (L24-1) at (-1,-4) {};
\node[client] (L24-2) at (0,-3) {};
\node[client] (L24-3) at (-1,-5) {};
\node[client] (L24-4) at (0,-4.7) {};

\tikzstyle{edgeStyle}=[#1!40, dashed, thin]
\draw[edgeStyle] (L0-1) to [bend right] (L0-2);
\draw[edgeStyle] (L0-1) to [bend right] (L0-3);

\tikzstyle{edgeStyle}=[#1!70, thin]
\draw[edgeStyle] (L0-1) -- (L1-1);
\draw[edgeStyle] (L0-1) -- (L1-2);
\draw[edgeStyle] (L0-2) -- (L1-2);
\draw[edgeStyle] (L0-2) -- (L1-3);
\draw[edgeStyle] (L0-3) -- (L1-3);
\draw[edgeStyle] (L0-3) -- (L1-4);
\draw[edgeStyle] (L0-3) -- (L1-5);

\tikzstyle{edgeStyle}=[#4!70, dashed, thin]
\draw[edgeStyle] (L1-1) to [bend right] (L1-2);
\draw[edgeStyle] (L1-1) to [bend right] (L1-3);
\draw[edgeStyle] (L1-4) to [bend right] (L1-5);

\tikzstyle{edgeStyle}=[#4, thin]
\draw[edgeStyle] (L1-1) -- (L21-1);
\draw[edgeStyle] (L1-2) -- (L21-2);
\draw[edgeStyle] (L1-3) -- (L21-2);
\draw[edgeStyle] (L1-3) -- (L22-1);
\draw[edgeStyle] (L1-3) -- (L23-1);
\draw[edgeStyle] (L1-4) -- (L23-1);
\draw[edgeStyle] (L1-4) -- (L24-1);
\draw[edgeStyle] (L1-5) -- (L24-3);

\tikzstyle{edgeStyle}=[#4!70, dashed, thin]
\draw[edgeStyle] (L23-1) -- (L24-1);
\draw[edgeStyle] (L21-2) -- (L22-1);
\tikzstyle{edgeStyle}=[#3, thin]
\draw[edgeStyle] (L21-1) -- (L21-3);
\draw[edgeStyle] (L21-3) -- (L21-5);
\draw[edgeStyle] (L21-5) -- (L21-8);
\draw[edgeStyle] (L21-2) -- (L21-4);
\draw[edgeStyle] (L21-4) -- (L21-6);
\draw[edgeStyle] (L21-4) -- (L21-7);
\draw[edgeStyle] (L21-4) -- (L21-3);

\tikzstyle{edgeStyle}=[#4, thin]
\draw[edgeStyle] (L22-1) -- (L22-2);
\draw[edgeStyle] (L22-1) -- (L22-3);
\draw[edgeStyle] (L22-3) -- (L22-4);

\draw[edgeStyle] (L24-1) -- (L24-2);
\draw[edgeStyle] (L24-2) -- (L24-3);
\draw[edgeStyle] (L24-1) -- (L24-2);
\draw[edgeStyle] (L24-2) -- (L24-4);

\draw[thin, color=#1]  (-8,0) circle  [x radius=1.5cm, y radius=3cm] node{};
\draw[thin, color=#2]  (-4,0) circle  [x radius=1.5cm, y radius=5cm] node{};
\draw[thin, #3]  (0.5,5) circle  [x radius=2.3cm, y radius=1.8cm] node{};
\draw[thin, #4]  (0,1) circle  [x radius=1.8cm, y radius=1.5cm] node{};
\draw[thin, #4]  (-1.2,-1.2) circle  [x radius=0.6cm, y radius=0.7cm] node{};
\draw[thin, #4]  (-.3,-4) circle  [x radius=1.5cm, y radius=2cm] node{};
}

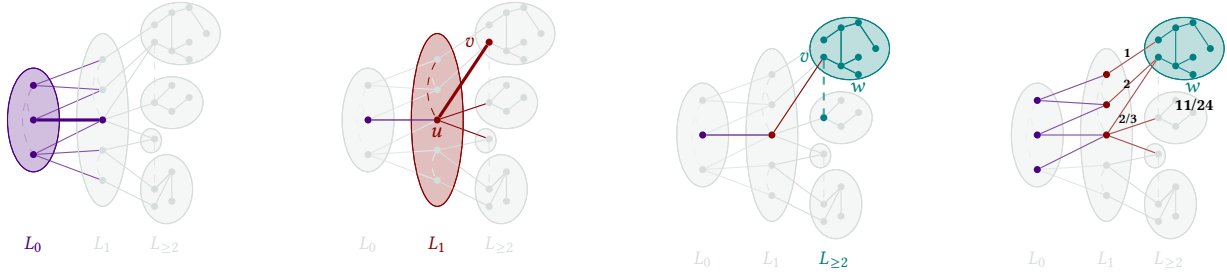
\begin{figure*}[]

\hfill
  \begin{minipage}[b]{0.22\textwidth}
    \begin{tikzpicture}
    \begin{scope}[scale=0.23, transform shape]
    \drawLayers{indigo}{lightgray}{lightgray}{lightgray};
    \draw[indigo, very thick] (L0-2) -- (L1-3) node{};
    \node[client, fill=indigo] at (L1-3) {};
    \node[scale=3.5, indigo] at (-8, -7.2) {$L_0$};
        \tikzstyle{client}=[draw=none,circle,minimum size=3ex, inner sep=0, fill=gray]
    \end{scope}
    \end{tikzpicture}    
    \vspace{.2cm}
  \end{minipage}
  \hfill
  \begin{minipage}[b]{0.22\textwidth}

    \begin{tikzpicture}
    \begin{scope}[scale=0.23, transform shape]
    \drawLayers{lightgray}{darkred}{lightgray}{lightgray};
    \draw[indigo, thin] (L0-2) -- (L1-3) node{};
    \draw[darkred, very thick] (L1-3) -- (L21-2) node{};
    \tikzstyle{edgeStyle}=[darkred, thin];
    \draw[edgeStyle] (L1-3) -- (L22-1);
    \draw[edgeStyle] (L1-3) -- (L23-1);
    \draw[darkred] (L1-3) -- (L21-2) node{};
    \tikzstyle{edgeStyle}=[darkred, dashed, thin]
    \draw[edgeStyle] (L1-1) to [bend right] (L1-3);
    \tikzstyle{client}=[draw=none,circle,minimum size=3ex, inner sep=0]
    \node[client, fill=indigo] at (L0-2) {};
   \node[client, fill=darkred] at (L1-3) {};
   \node[client, fill=darkred] at (L21-2) {};
    \node[scale=4, below, darkred] at (L1-3) {{$u$}};
    \node[scale=4, left=0.35, darkred] at (L21-2) {{$v$}};
    \node[scale=3.5, darkred] at (-4, -7.2) {$L_1$};
    \end{scope}
    \end{tikzpicture}
    \vspace{.2cm}
  \end{minipage}
  \hfill
  \begin{minipage}[b]{0.22\textwidth}

    \begin{tikzpicture}
    \begin{scope}[scale=0.23, transform shape]
    \drawLayers{lightgray}{lightgray}{teal}{lightgray};
    \tikzstyle{client}=[draw=none,circle,minimum size=3ex, inner sep=0]
    \node[client, fill=indigo] at (L0-2) {};
    \node[client, fill=darkred] at (L1-3) {};
    \node[client, fill=teal] at (L22-1) {};
    \draw[indigo, thin] (L0-2) -- (L1-3) node{};
    \draw[darkred, thin] (L1-3) -- (L21-2) node{};
    \node[scale=4, left=0.35, teal] at (L21-2) {{$v$}};
    \node[scale=4, below, teal] at (L21-7) {{$w$}};
    \node[scale=3.5, teal] at (-0.4, -7.2) {$L_{\geq 2}$};
    
        \tikzstyle{edgeStyle}=[teal]
\draw[edgeStyle] (L21-1) -- (L21-3);
\draw[edgeStyle] (L21-3) -- (L21-5);
\draw[edgeStyle] (L21-5) -- (L21-8);
\draw[edgeStyle] (L21-2) -- (L21-4);
\draw[edgeStyle] (L21-4) -- (L21-6);
\draw[edgeStyle] (L21-4) -- (L21-7);
\draw[edgeStyle] (L21-4) -- (L21-3);
\tikzstyle{edgeStyle}=[teal!70, dashed, thick]
    \draw[edgeStyle] (L21-2) -- (L22-1);
    \end{scope}
    \end{tikzpicture}
  \end{minipage}
  \hfill
  \begin{minipage}[b]{0.22\textwidth}

    \begin{tikzpicture}
    \begin{scope}[scale=0.23, transform shape]
    \drawLayers{lightgray}{lightgray}{teal}{lightgray};
    \tikzstyle{edgeStyle}=[lightgray!70, dashed, thin]
    \draw[edgeStyle] (L21-2) -- (L22-1);
    \node[scale=4, below, teal] at (L21-7) {{$w$}};
    \tikzstyle{edgeStyle}=[indigo!70, thin]
    \draw[edgeStyle] (L0-1) -- (L1-1) node{};
    \draw[edgeStyle] (L0-1) -- (L1-2) node{};
    \draw[edgeStyle] (L0-2) -- (L1-3) node{};
    \draw[edgeStyle] (L0-2) -- (L1-2) node{};
    \draw[edgeStyle] (L0-3) -- (L1-3) node{};
    \tikzstyle{edgeStyle}=[darkred!70, thin]
    \draw[edgeStyle] (L1-1) -- (L21-1);
    \draw[edgeStyle] (L1-2) -- (L21-2) node{};
    \draw[edgeStyle] (L1-3) -- (L21-2) node{};
    \draw[edgeStyle] (L1-3) -- (L22-1) node{};
    \draw[edgeStyle] (L1-3) -- (L23-1) node{};
    \node[scale=2.5, above right=1] at (L1-1) {\textbf{1}};
    \node[scale=2.5, above right=1] at (L1-2) {\textbf{2}};
    \node[scale=2.5, above right=0.6] at (L1-3) {\textbf{2/3}};
    \node[scale=3,right=3.6] at (L1-2) {\textbf{11/24}};
    
\tikzstyle{client}=[draw=none,circle,minimum size=3ex, inner sep=0, fill=indigo,]
\node[client] (L0-1) at (-8,2) {};
\node[client] (L0-2) at (-8,0) {};
\node[client] (L0-3) at (-8,-2) {};

\tikzstyle{client}=[draw=none,circle,minimum size=3ex, inner sep=0, fill=darkred,]    
\node[client] (L1-1) at (-4,3.5) {};
\node[client] (L1-2) at (-4,1.75) {};
\node[client] (L1-3) at (-4,0) {};
    \end{scope}
    \end{tikzpicture}
  \end{minipage}
  \hfill
  \caption{Sampling a node from $L_{\geq 2}$ in \textbf{\SL}. 
  We start by picking a uniform edge $L_0$ and $L_1$, let $u$ denote its $L_1$-endpoint. We next traverse a random edge from $u$ to $v \in L_2$, if one exists. We then fully explore the $L_{\geq 2}$-component $C$ containing $v$, choosing a uniformly random node $w \in C$. A final rejection step estimates how likely it is for the process to end at $w$.}
  \label{fig:reach-SL}
\end{figure*}
\begin{figure}
\begin{center}
\vspace{0.7cm}
\framebox{
\begin{minipage}{.9\columnwidth}
     \SampleALG
\end{minipage}
}
\framebox{
\begin{minipage}{.9\columnwidth}
     \SLtwoALG
\end{minipage}
}
\framebox{
\begin{minipage}{.9\columnwidth}
    \CompRSALG
\end{minipage}
}
\end{center}
\caption{The sampling procedures.} 
    \label{fig:sample}
\end{figure}

The next layer, $L_1$, is the set of neighbors of $L_0$ that are not already in $L_0$, i.e. $L_1 = \bigcup_{v \in L_0} N(v)\setminus{L_0}$, where $N(v)$ denotes the set of neighbors of node $v$. Intuitively,  the union of these two  layers  captures the well-connected or ``expanding'' part of the network. 
The neighbors of $L_1$ are  denoted $L_2$ and the multi-layer consisting of all  other nodes in the network is denoted by $L_{> 2}$, where we also set $L_{\geq 2}=L_2\cup L_{>2}$.
See 
Figure~\ref{fig:whiskers_net} for a visualization of the layers 
and Figure~\ref{fig:gen-L0} 
for a visualization of the structural decomposition phase of \SL\ and its variant \SLP.

Denote  by $G_{\geq 2}$ the subgraph whose node set is $L_{\geq 2}$, and whose edge set includes all edges between $L_2$ and $L_{>2}$ and all edges between nodes in $L_{>2}$. 
Crucially, the size $\ell_0$  of the generated $L_0$ should be sufficiently large so that the subgraph $G_{\geq 2}$ will ``break'' into many small connected components. (Note that we intentionally ``ignore'' edges between vertices that lie strictly in $L_2$, to make these components as small as possible.) In Section~\ref{sec:other_exps}, we explore the typical size of $G_{\geq 2}$-components as a function of $\ell_0$, and discuss how to determine the ``correct'' $\ell_0$ value for the network at hand.

To complete this phase, we learn various parameters of the layering that are crucial for the sampling phase, including accurate approximations of the size of $L_{\geq 2}$  and the typical reachability of nodes in it. This is done using the procedures \CompLTsize{} and \CompRSZ, given in 
Section~\ref{sec:preprocess_missing}.
\onote{Supplementary~\ref{sec:preprocess_missing} - in general we should look for all references to the appendix and say they are in supplementary material}
Specifically, estimating the size of $L_{\geq 2}$ is done by  considering the bipartite graph  with $L_1$ on one side and $L_{\geq 2}$ on the other. 
By sampling $s_1$ nodes from $L_1$ and $s_{\geq 2}$ nodes from $L_{\geq 2}$ (using the procedure \SLtwo{}), we can estimate the average degrees of the nodes of each side of the bipartite graph, from which we can estimate the size of $L_{\geq 2}$. The reachability distribution is approximated by calculating the reachabilities of the $s_{\geq 2}$ samples from $L_{\geq 2}$. This procedure receives as an input a parameter $\eps$, and returns a ``baseline reachability'' which is approximately the $\eps$-percentile of $L_{\geq 2}$-nodes in terms of reachability.
In Section \ref{subsec:comparison_exp} we discuss how to practically choose the parameters $s_1$, $s_{\geq 2}$, and $\eps$. 

\para{Sampling phase} 
Sampling from the core layers $L_0$ and $L_1$ is trivial;
the challenge is to sample efficiently from $L_{\geq 2}$. 
Taking advantage of the
layering, we sample random nodes in $L_{\geq 2}$ by combining walks of length $2$ that start in $L_0$ and reach $L_{\geq 2}$, with a local BFS step that explores
 and returns a uniformly selected node  in the reached $L_{\geq 2}$ component.
The above process generates biased samples, as the vertices in different components have different probabilities to be reached in the initial 2-step walk.
Hence, the final step in the sampling procedure is a rejection step, whose role is to unbias the distribution. To this end, we compute a suitable \emph{reachability score}, $rs(v)$  for every reached vertex. 
We then perform a rejection step, where the acceptance probability is inversely proportional to the reachability score of the chosen node.
See Figure~\ref{fig:sample} for the pseudo-code and  Figure~\ref{fig:reach-SL} for an illustration of the sampling process in \SL.

\para{Non-uniform distributions}
For simplicity, our algorithm is presented for node generation according to the uniform distribution. We note that it can be adapted to generate other desirable distributions. 
For example, to conduct $\ell_p$-sampling, the size estimation procedure should be replaced by a procedure that estimates the sum $\sum_{v \in L_{\geq 2}} (d(v))^p$ (and the corresponding sum for $L_{1}$), and the reachability distribution estimation should be adjusted accordingly.

\label{sec:synopsis}

\subsection{Convergence to 
Uniformity}
\label{sec:convergence}

Our main theorem states that samples generated by our algorithm converge to (near-)uniformity. The proof builds in part on the fact that our algorithm can estimate the size of $L_{\geq 2}$ given sufficient effort in the preprocessing phase. 
Proofs are given in  Section~\ref{sec:size_estimation}. 
Experiments validating the fast convergence of our size estimation procedures can be found in Section~\ref{size_est_par}.

\begin{theorem}
\label{thm:convergence_uniformity}
If our size estimation for $L_{\geq 2}$ 
is in $(1 \pm o(1))|L_{\geq 2}|$, and if the baseline reachability $rs_0$ used in our algorithm is the $o(1)$-percentile in the reachability distribution, then the output node distribution of\ \Sample\ is $o(1)$-close to uniform in total variation distance.
\end{theorem}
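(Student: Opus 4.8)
We outline the argument. Throughout we fix the layering $L_0,L_1,L_{\geq 2}$ produced by the structural phase, condition on the (assumed) success of preprocessing, and use that $G$ is connected (a standing assumption, as otherwise uniform sampling from a single seed is impossible). The plan is first to pin down the law of \SLtwo\ essentially exactly, then to show that the rejection step of \Sample\ flattens it on all but an $o(1)$-fraction of $L_{\geq 2}$, and finally to absorb the size-estimation error into the layer-selection step; all three errors will enter the total-variation bound additively.

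The first and main step is the law of \SLtwo. By connectivity, every $u\in L_1$ has $d^-(u)\ge 1$ and $N(u)\cap L_{\geq 2}=N(u)\cap L_2$, and every component $C$ of $G_{\geq 2}$ satisfies $C\cap L_2\neq\emptyset$ (the only edges leaving $G_{\geq 2}$ run from $L_2$ to $L_1$), so $RS(C):=\sum_{w\in C\cap L_2}rs(w)>0$ and $C$ is reachable. One iteration of the while-loop of \SLtwo\ picks a uniform $L_0$--$L_1$ edge, so its $L_1$-endpoint equals a fixed $u$ with probability $d^-(u)/m_{01}$ (with $m_{01}$ the number of such edges), then a uniform $L_2$-neighbor of $u$; hence a fixed $w\in L_2$ is reached in that iteration with probability $\tfrac{1}{m_{01}}\sum_{u\in N(w)\cap L_1}\tfrac{d^-(u)}{d^+(u)}=rs(w)/m_{01}$, which matches exactly the scores computed by \CompRS. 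Conditioning on the almost-sure event that the loop eventually succeeds, $w$ is reached with probability $rs(w)/Z$ where $Z:=\sum_{w\in L_2}rs(w)$; running the BFS and picking a uniform vertex of the reached component then shows that, writing $C_v$ for the $G_{\geq 2}$-component of $v$, the procedure \SLtwo\ returns $v$ with probability
\[
\frac{1}{|C_v|}\sum_{w\in C_v\cap L_2}\frac{rs(w)}{Z}=\frac{RS(C_v)}{Z\,|C_v|}=\frac{rs(C_v)}{Z},
\]
where $rs(C_v)=RS(C_v)/|C_v|$ is precisely the score returned alongside $v$ (with the convention $rs(v)=0$ for $v\notin L_2$); summing over $v\in L_{\geq 2}$ gives total mass $Z^{-1}\sum_C RS(C)=1$, as it must.

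Next I would handle the rejection loop of \Sample. Conditioned on the layer $L_{\geq 2}$ being chosen, a vertex $v$ is accepted in a given round with probability $\tfrac{rs(C_v)}{Z}\min\!\big(\tfrac{rs_0}{rs(C_v)},1\big)$, which equals $rs_0/Z$ when $rs(C_v)\ge rs_0$ and equals $rs(C_v)/Z<rs_0/Z$ otherwise. Writing $\beta$ for the per-round acceptance probability, standard rejection-sampling gives that the conditional output law is $q(v)=c:=rs_0/(Z\beta)$ for every $v$ with $rs(C_v)\ge rs_0$, and $q(v)<c$ for $v$ in the low-reachability set $B:=\{v\in L_{\geq 2}:rs(C_v)<rs_0\}$. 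Since $q(v)\le c$ for all $v$, normalization forces $\tfrac{1}{|L_{\geq 2}|}\le c\le\tfrac{1}{|L_{\geq 2}|-|B|}$, and the hypothesis that $rs_0$ is the $o(1)$-percentile of the reachability distribution over uniformly random vertices of $L_{\geq 2}$ gives $|B|=o(|L_{\geq 2}|)$, so
\[
\big\|q-\mathrm{Unif}_{L_{\geq 2}}\big\|_{\mathrm{TV}}\le|L_{\geq 2}|\Big(\tfrac{1}{|L_{\geq 2}|-|B|}-\tfrac{1}{|L_{\geq 2}|}\Big)=\frac{|B|}{|L_{\geq 2}|-|B|}=o(1).
\]
Finally, writing $n=|L_0|+|L_1|+|L_{\geq 2}|$ and $\tilde n=|L_0|+|L_1|+\bar\ell_{\geq 2}=(1\pm o(1))n$, the law of \Sample\ puts mass $1/\tilde n$ on each vertex of $L_0\cup L_1$ and $\tfrac{\bar\ell_{\geq 2}}{\tilde n}q(v)$ on each $v\in L_{\geq 2}$; a termwise comparison with $\mathrm{Unif}_n$, using the triangle inequality through $\tfrac{\bar\ell_{\geq 2}}{\tilde n}\mathrm{Unif}_{L_{\geq 2}}$ on the $L_{\geq 2}$-part, bounds each layer's contribution to $\sum_v|\Pr[\mathrm{out}=v]-1/n|$ by $o(1)$, so \Sample's output is $o(1)$-close to uniform. (Independence of successive samples is immediate, as each call uses fresh randomness over the fixed data structure.)

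The step I expect to be the main obstacle is the exact identification of \SLtwo's law --- i.e., that the heuristic reachability scores of \CompRS\ literally equal the probabilities with which the two-step walk reaches each component --- since everything downstream is routine once that identity and the convention $rs(v)=0$ on $L_{>2}$ are in place. It rests on the structural facts $N(u)\cap L_{\geq 2}=N(u)\cap L_2$ for $u\in L_1$ and $C\cap L_2\neq\emptyset$ for every $G_{\geq 2}$-component $C$; given them, the rejection step provably equalizes the conditional law on the $(1-o(1))$-fraction of vertices with $rs(C_v)\ge rs_0$, and the two approximation errors (in the size estimate and in the percentile) each contribute only an additive $o(1)$.
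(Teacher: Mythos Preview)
Your proposal is correct and follows essentially the same route as the paper. Both arguments hinge on (i) identifying that \SLtwo\ returns $v$ with probability proportional to the computed score $rs(C_v)$, (ii) observing that the rejection step therefore equalizes the conditional law on the set $\{v:rs(C_v)\ge rs_0\}$, and (iii) absorbing the $o(1)$ errors from the percentile and the size estimate.

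Two minor differences are worth noting. First, you carry out step (i) explicitly, verifying that the per-iteration reaching probability of $w\in L_2$ is exactly $rs(w)/m_{01}$ as computed by \CompRS; the paper instead \emph{defines} $rs(v)$ to be the value making this hold and does not check consistency with the algorithm's computation. Your version is more informative here. Second, for the final bound the paper proves a uniform pointwise upper bound $\Pr[\text{output}=v]\le (1+\eps+\delta+o(\eps,\delta))/n$ and reads off TV from that, whereas you bound TV directly via the triangle inequality through $\frac{\bar\ell_{\ge 2}}{\tilde n}\,\mathrm{Unif}_{L_{\ge 2}}$. Both work; the paper's pointwise bound additionally yields the ``furthermore'' clause in the formal restatement.
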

We stress that even in the case that the $L_0$ generation process is unsuccessful (in a sense that it does not break the $L_{\geq 2}$ vertices into small components),  it \emph{always} holds that our algorithm returns a close to uniform vertex, provided that the size and reachability estimates are correct. That is, the correctness of our algorithm holds for \emph{any} given $L_0$ (with high probability), and only the query complexity of subsequent sampling might be negatively affected, e.g., due to a high expected component size value.
\subsection{Query Complexity}
\label{sec:theoretical} \label{sec:query_comp}

\label{sec:analysis}
In this section, we analyze the query complexity of the sampling phase of our approach.
We show here that the query complexity of sampling nodes using $\SL$ is bounded as a function of several parameters related to the layered structure we maintain. Later on, we empirically show that the relevant parameters are indeed well-behaved in the networks we investigated. 
The starting point of our analysis is immediately after the preprocessing phase is completed. In particular, $L_0$ and $L_1$ are already known, as well as a good estimate of the size of $L_{\geq 2}$.
In addition, we have the ability to sample uniformly random edges between $L_0$ and $L_1$ without making any queries.
The assumptions we make 
are as follows.
\begin{itemize}
	\item\textbf{Reachability distribution.} We assume that the reachabilities of nodes in $L_2$ are relatively balanced: the reachability score $rs(v)$ of every $v \in L_2$ satisfies $rs_0 \leq rs(v) \leq c \cdot rs_0$, where $rs_0$ is viewed as the ``base reachability'', and $c > 1$ is not large. We empirically verify this in Section \ref{sec:other_exps}.
	\item\textbf{Entry points.} 
	Let $\alpha$ denote the fraction of edges $e$ between $L_0$ and $L_1$, for which the $L_1$-endpoint of $e$ has neighbors in $L_2$. Then $\alpha$ is precisely the probability that a single attempt at reaching $L_{\geq 2}$ succeeds (without taking the rejection step into account). In practice, $\alpha$ is known to be well-behaved \cite{RombachCore2014}, as most bridges to $L_{\geq 2}$ occur at higher-degree nodes of $L_1$.
	\item\textbf{Component sizes.} Set $w = \mathbb{E}[\,|CC(v)|\,] $, where $v \in L_{\geq 2}$ is (distributed as) the result of a single run of our procedure \SLtwo, and $CC(v)$ is the $L_{\geq 2}$-component in which $v$ resides.
	Intuitively, $w$ measures the sizes of 
	components
	that we reach, and we empirically validate that it is typically small on both synthetic and real-world networks, see Section \ref{sec:other_exps}.
    \item\textbf{Degrees of component nodes.} We assume that for all components $C$ of $L_{\geq 2}$, the number of bridges from $C$ to the rest of the network is at most $d \cdot |C|$, for a small integer $d$. This is in line with the well-observed fact \cite{leskovec, RombachCore2014} that peripheral components are weakly connected to the network.
\end{itemize}
Our experiments verify that the parameters discussed here are indeed well-behaved when the size $\ell_0$ of $L_0$ is chosen correctly -- see Section \ref{sec:other_exps} for more details.
We bound the expected query complexity of our sampling algorithm as a function of the above parameters. Crucially, this implies that, once the preprocessing phase is complete, the query complexity does not directly depend on the network size or on the mixing time of long random walks.
\begin{theorem}\label{thm:qcompl}
The expected query complexity of sampling a single node using \SL is
$
O\left(
c \cdot \left(\frac{1}{\alpha} + w d\right)\right)
$.
\end{theorem}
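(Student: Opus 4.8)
The plan is to bound, in expectation, the number of node queries that \Sample\ performs to return one sample, and to show it is $O(c(1/\alpha+wd))$. If the layer chosen in the first step of \Sample\ is $L_0$ or $L_1$, a uniform node there is returned using $O(1)$ queries (these layers are already known from preprocessing), so the work is dominated by the case $L_{\geq 2}$, where \Sample\ runs a rejection loop. Each round of the loop is one call to \SLtwo\ followed by one acceptance coin; I would bound the expected number of rounds $\mathbb{E}[R]$ and the expected query cost $\mathbb{E}[X]$ of a single round separately, then combine them. Because the cost of round $i$ is a function only of the fresh randomness used in that round, while the event ``round $i$ is reached'' depends only on rounds $1,\dots,i-1$ and their coins, the two are independent; hence $\mathbb{E}[\text{total cost}]=\sum_{i\ge1}\mathbb{E}[X]\Pr[R\ge i]=\mathbb{E}[X]\cdot\mathbb{E}[R]$ (a Wald-type identity).

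For $\mathbb{E}[X]$, I would split one execution of \SLtwo\ into three pieces. (i)~The ``entry'' loop samples uniform $L_0$--$L_1$ edges for free (via $\mathcal D$) and queries each sampled edge's $L_1$-endpoint until one of them has a neighbor in $L_{\geq 2}$; by the entry-points assumption these attempts are i.i.d.\ Bernoulli($\alpha$), so this costs $1/\alpha$ queries in expectation. (ii)~The BFS explores the reached $G_{\geq 2}$-component $C=CC(v)$ using one query per vertex, so $|C|$ queries, with $\mathbb{E}[|C|]=w$ by the definition of $w$. (iii)~\CompRS\ queries, for each $v\in C\cap L_2$, the $L_1$-neighbors of $v$; using that no $L_{\geq 2}$-vertex has an $L_0$-neighbor and no $L_{>2}$-vertex has an $L_1$-neighbor, these queries can be charged to the bridges leaving $C$, of which there are at most $d|C|$ by the component-degree assumption. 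Summing, $\mathbb{E}[X]=O(1/\alpha+\mathbb{E}[|C|]+d\,\mathbb{E}[|C|])=O(1/\alpha+wd)$ (using $d\ge1$).

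For $\mathbb{E}[R]$, I would show that the acceptance probability is at least $1/c$ in every round. \SLtwo\ returns the score $rs(v)=rs(C)=\frac1{|C|}\sum_{u\in C}rs(u)$; each term with $u\in L_2$ is at most $c\cdot rs_0$ by the reachability assumption, and each term with $u\in L_{>2}$ is $0$, so $rs(C)\le c\cdot rs_0$ and $\min(rs_0/rs(C),1)\ge 1/c$ deterministically. Thus $R$ is stochastically dominated by a Geometric($1/c$) variable and $\mathbb{E}[R]\le c$. Multiplying, the expected query cost of a sample is $O(c(1/\alpha+wd))$, which dominates the $O(1)$ cost of the $L_0/L_1$ branch and is free of any dependence on $n$ or $\tmix$, as claimed. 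The main obstacle I anticipate is the query accounting in step~(iii): one must check that exploring $C$ and computing $rs(C)$ never queries a vertex other than those of $C$ and the $L_1$-endpoints of bridges of $C$ --- in particular, that distinguishing $L_2$ from $L_{>2}$ among discovered vertices costs no queries beyond those already charged to the $d|C|$ bridges. A minor secondary point is that the reachability assumption, stated per vertex of $L_2$, is legitimately applied to the component average precisely because the omitted $L_{>2}$ terms are nonnegative.
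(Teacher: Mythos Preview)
Your proposal is correct and follows essentially the same approach as the paper's proof: reduce to the $L_{\geq 2}$ branch, bound the expected cost of one invocation of \SLtwo\ as $O(1/\alpha + wd)$, bound the expected number of rejection rounds by $c$, and multiply. You are in fact more careful than the paper on two points---you explicitly justify the Wald-type multiplication via independence of round cost and round count, and you explicitly derive $rs(C)\le c\cdot rs_0$ from the per-vertex assumption (the paper just asserts the acceptance probability is at least $1/c$)---but these are elaborations of the same argument, not a different route.
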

\begin{proof}
Sampling from either $L_0$ or $L_1$ requires no queries. Hence, consider sampling  from $L_{\geq 2}$.
Each invocation of \SLtwo\ returns a node in $L_{\geq 2}$. However, due to the rejection sampling in Step 3(b) of \Sample, some of the samples are discarded, and the while loop is repeated. The probability a sample is accepted is at least $1/c$. 

Now consider a single invocation of \SLtwo.
The expected query complexity of the procedure stems from the number of attempts it takes to reach from $L_1$ to $L_{2}$, and once it reaches some $v \in L_2$, from computing $CC(v)$ and its reachability score.
By definition, the probability of reaching from $L_1$ to $L_2$ is $\alpha$, so the expected number of attempts is $1/\alpha$. Once in $L_2$, in order to compute $CC(v)$ we need to traverse the component,  and for each node in the component, query its neighbors (to determine if they belong to the component or not). Therefore, the expected query complexity of this step is $O(w\cdot d)$. Hence, the expected complexity of a single invocation is $O(1/\alpha+w\cdot d)$.

All in all, we get that the expected sample complexity is bounded by the expected complexity of a single invocation $O(1/\alpha+w\cdot d)$, divided by the minimum possible success probability of a single invocation, $1/c$. The resulting bound is $O( c (1/\alpha + w\cdot d))$.
\end{proof}

\section{Empirical Results}
\label{sec:experiments}

\begin{table}[b]
\centering
\begin{tabular}{@{}lcccccccc@{}}
\toprule
Dataset & $n$ & $m$ & $d_{\text{avg}}$ & 
\multicolumn{2}{c}{$L_0$ size} \\ \toprule 
& & & & SL & SL+ \\ \toprule
Epinions \cite{epinions2003} & 76K & 509K  & 13.4 & 3K & 1K \\ \midrule
Slashdot \cite{leskovec} & 82K & 948K & 23.1 & 3K & 2K \\ \midrule
DBLP \cite{Yang2015} & 317K & 1.05M & 6.62 & 30K & 20K \\ \midrule
Twitter-Higgs \cite{Higgs13} & 457K & 14.9M & 65.1 & 25K & 10K \\ \midrule
Forest Fire~\cite{FF1, FF2} & 1M & 6.75M & 13.5 & 10K & 10K \\ \midrule
Youtube \cite{Yang2015} & 1.1M & 2.99M & 5.27 & 30K & 10K \\ \midrule
Pokec \cite{takac2012data} & 1.6M & 30.6M & 37.5 & 200K & 100K \\
\midrule
SinaWeibo \cite{zhang2014characterizing} & 58.7M & 261M & 8.91 & 500K & 100K \\
\toprule\\ 
\end{tabular}
\begin{minipage}{0.96\columnwidth}
\caption{
The list of networks we considered with numbers of nodes ($n$), edges ($m$), their average degrees ($d_{\text{avg}}$), and $L_0$ sizes we selected for \SL and \SLP.
}\label{table:exp}
\end{minipage}
\end{table}
\begin{figure*}[!ht]
\centering
  \def\imagelen{0.33\linewidth}
  \includegraphics[width=\imagelen]{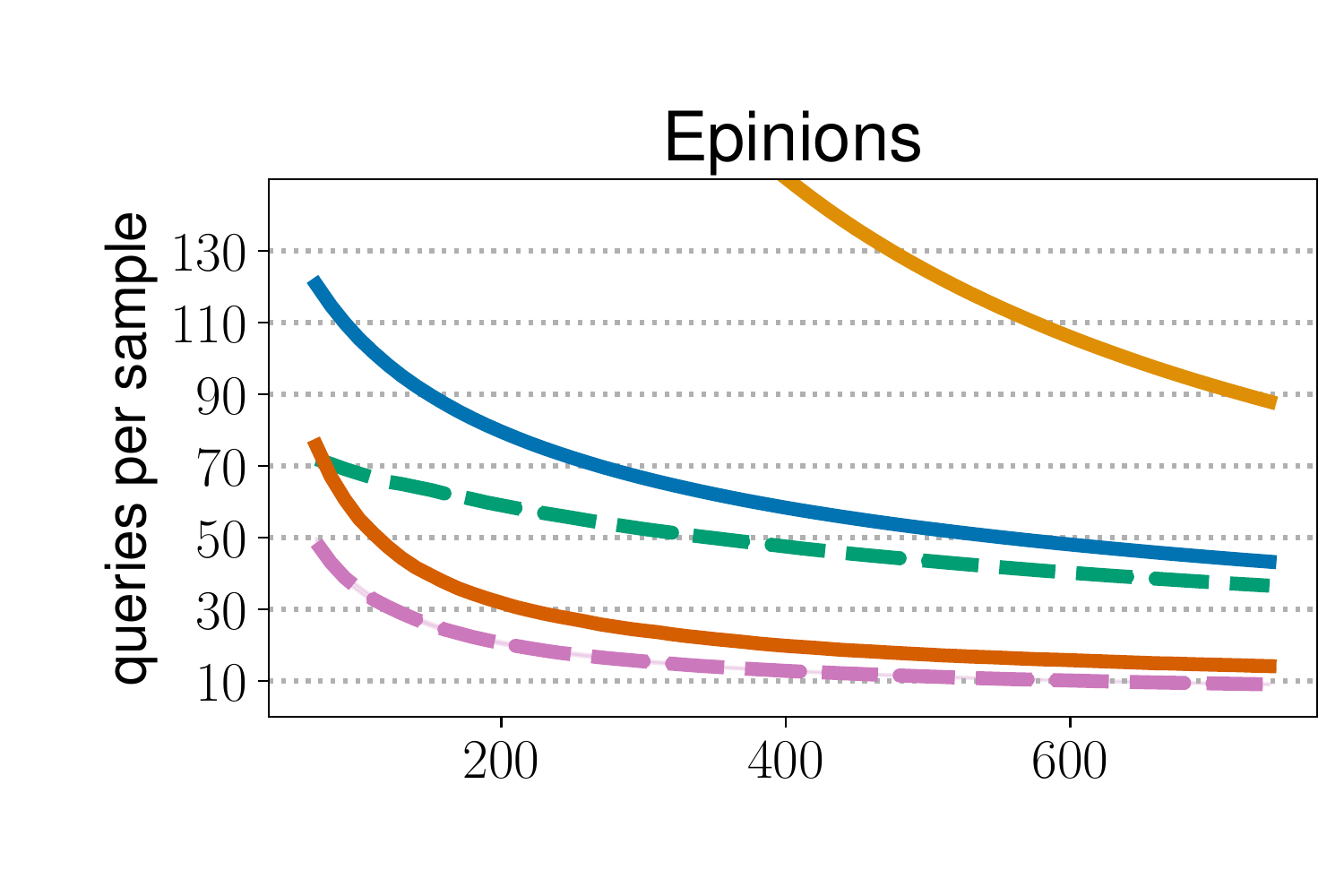}
    \vspace{-12pt}
  \includegraphics[width=\imagelen]{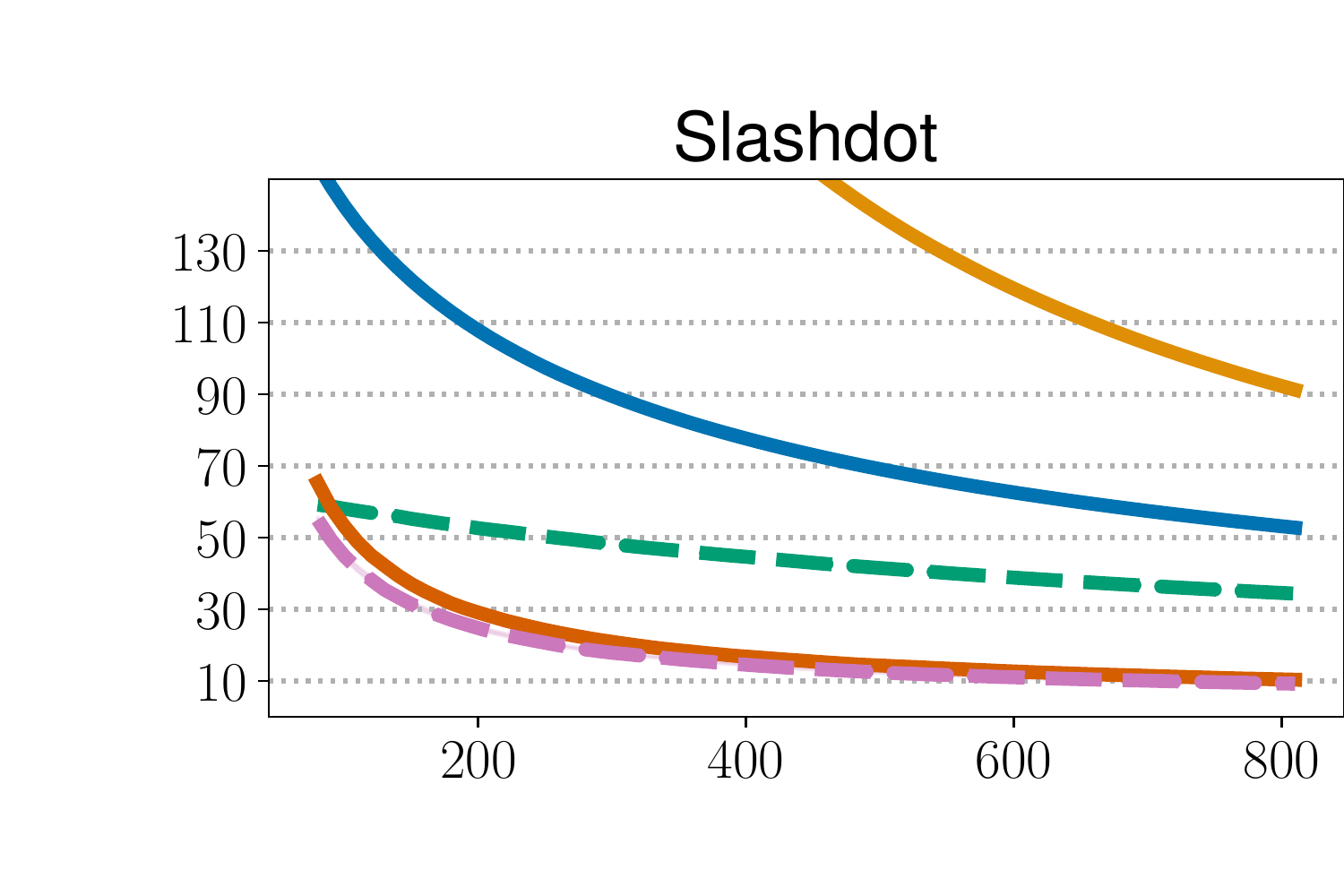}
  \vspace{-12pt}
  \includegraphics[width=\imagelen]{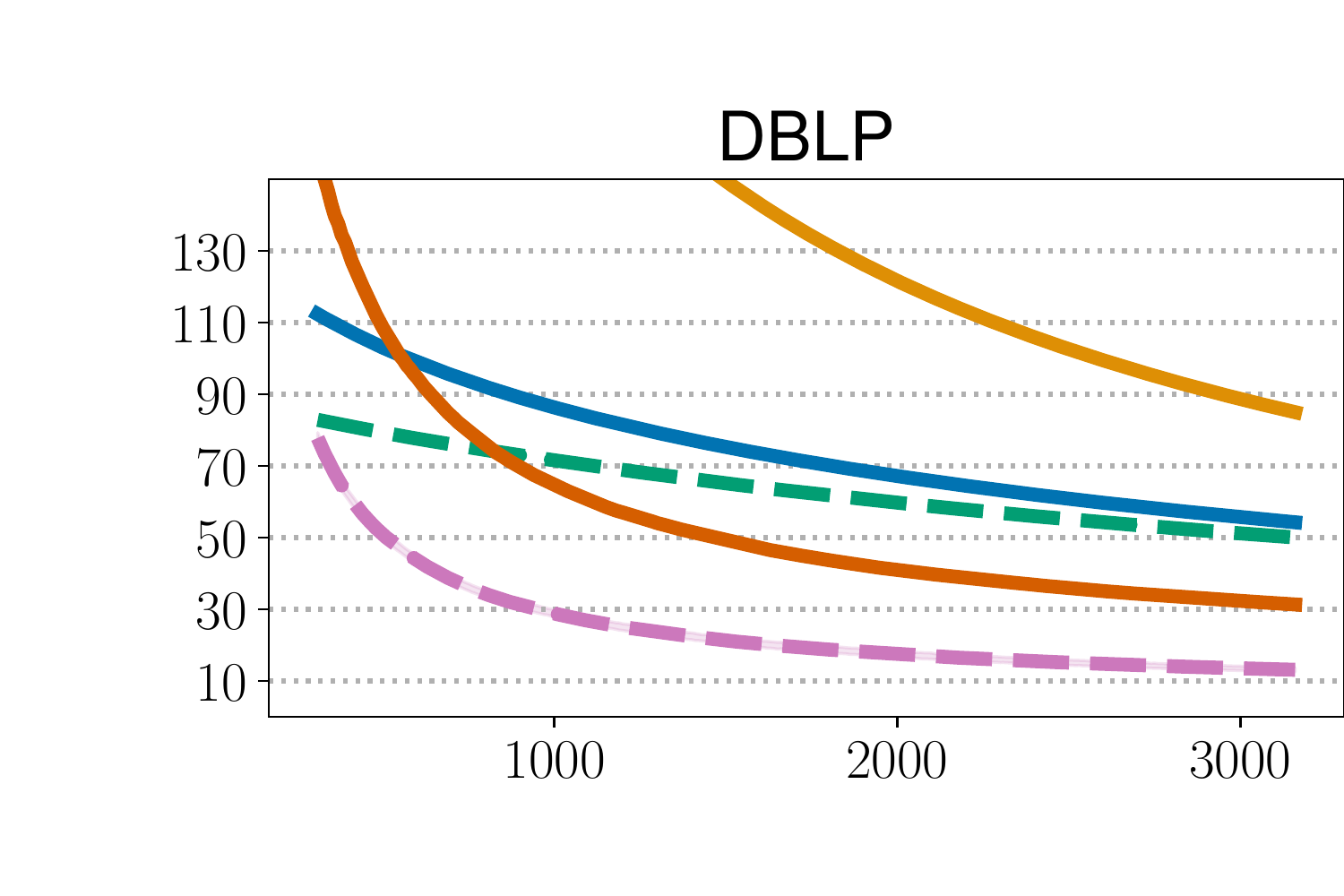}
  \vspace{-12pt}
  \end{figure*}
  \begin{figure*}[!ht]
  \centering
  \def\imagelen{0.33\linewidth}
  \includegraphics[width=\imagelen]{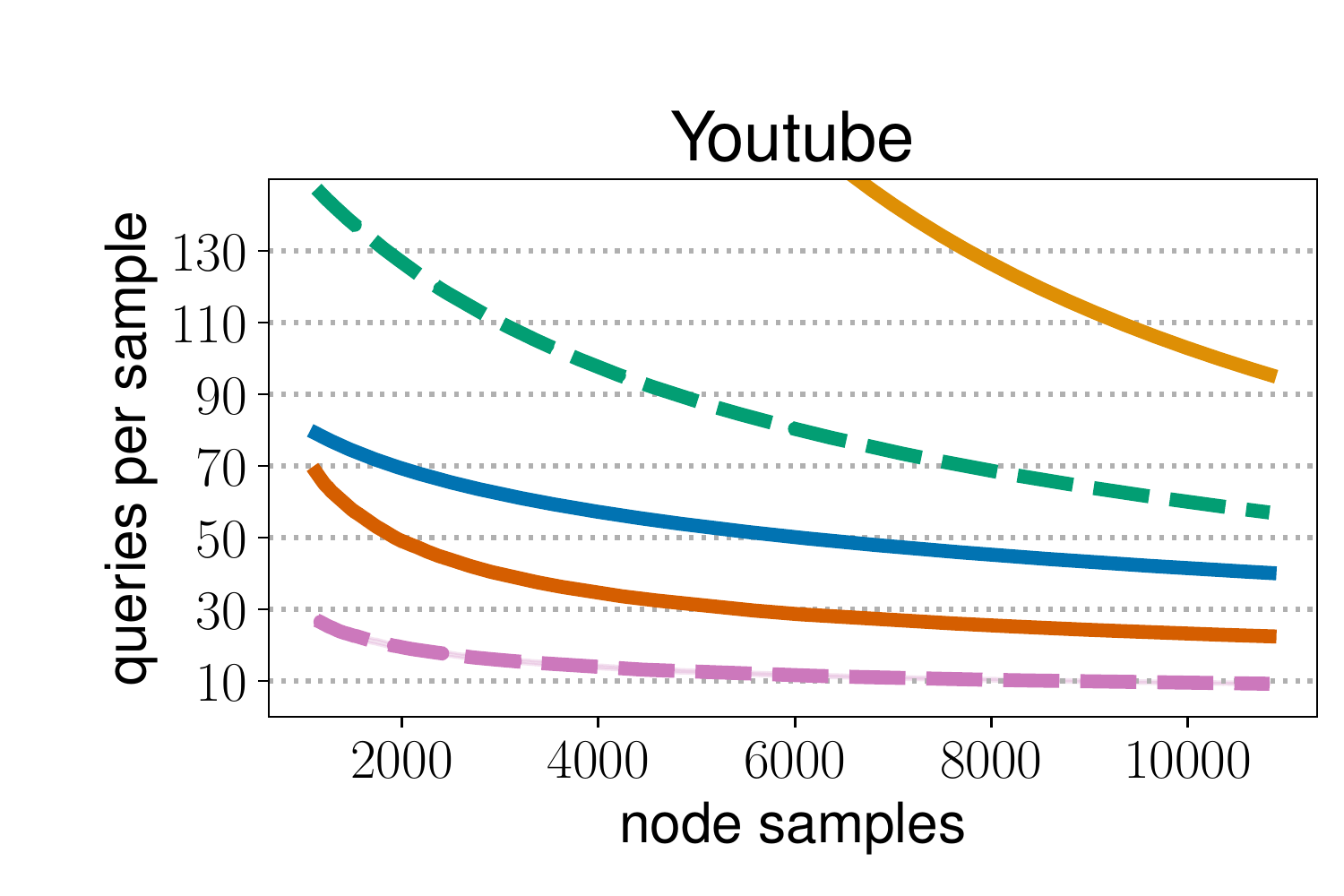}
  \includegraphics[width=\imagelen]{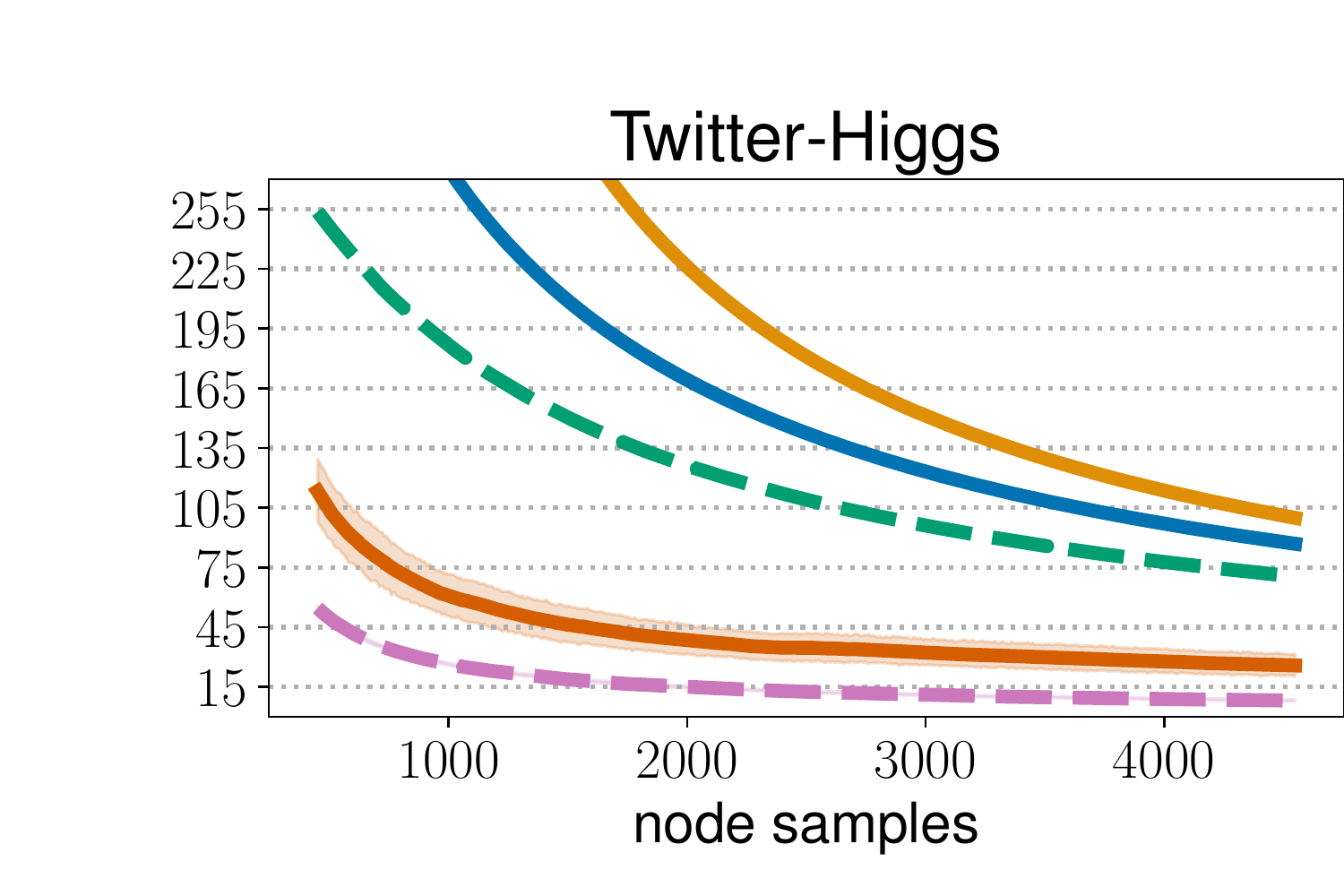}
  \includegraphics[width=\imagelen]{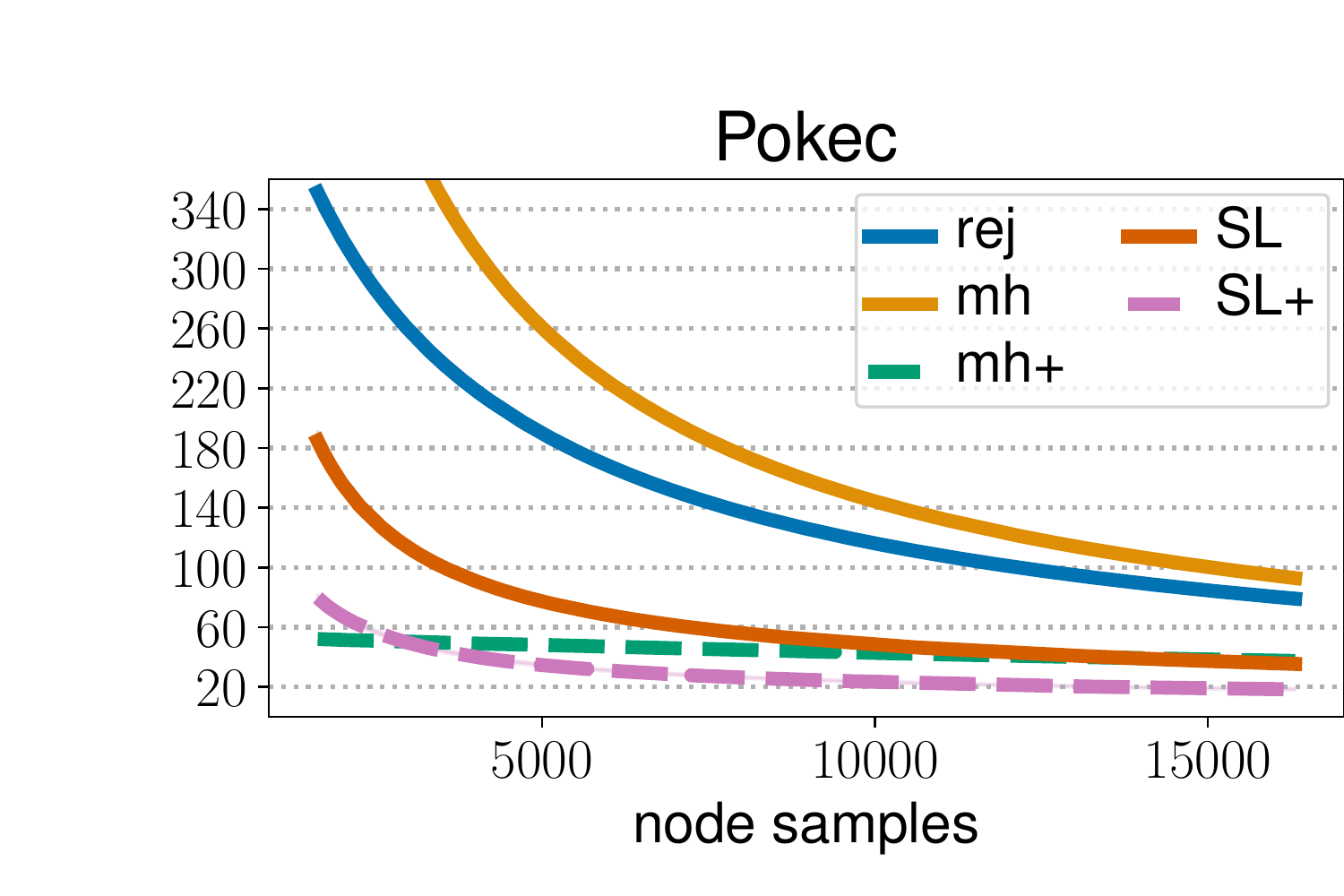}
  \caption{Amortized query complexity of our methods  compared to random-walk-based node sampling methods.
  In the standard query model, \SL{} is compared against Rejection Sampling (\rej) and  Metropolis Hastings (\mh), and in the stronger query model, \SLP{} is compared against Metropolis Hastings ``plus'' (\mhp).
   For each of the networks, the number of node samples ranges  between $0.1\%$ and $1\%$ of the network's order. 
  See Section \ref{sec:experiments} and Table \ref{table:exp} for details. 
  }
\label{fig:RW-comps}
\vspace{5pt}
\end{figure*}

In this section, we describe several experiments we conducted, comparing our algorithms to previous approaches which are all based on random walks (Section \ref{subsec:comparison_exp}), and explaining the query-efficiency of our methods (Section \ref{sec:other_exps}). 

\subsection{Evaluation of Query Complexity}
\label{subsec:comparison_exp}
The main experiment computes the amortized number of queries per sample of our algorithm, and compares it with the corresponding query complexity of existing RW-based approaches. 
In the standard query model, we compare our algorithm \SL\ with two random walk-based algorithms, Rejection sampling (\rej) and Metropolis-Hastings (\mh). 
In the stronger query model, we compare \SLP{} to Metropolis-Hastings ``plus'' (\mhp). The methods \rej, \mh, and \mhp were all described in detail by Chierichetti et al.~\cite{ChiericettiWWW2016}. 
 In  \rej, the algorithm  performs a standard (unbiased) random walk, where nodes are subject to rejection sampling according to their degree; in \mh the neighbor transition probabilities are controlled by the neighbors' degrees. \mhp is the same as \mh, but assumes the stronger query model, where  a node query also reveals the degrees of its neighbors.
RW-based algorithms are most commonly used to sample multiple nodes by performing a long random walk, and sampling a new node once every fixed interval to allow for re-mixing. Indeed, as discussed in Section \ref{sec:lower_bound}, to ensure that the node samples will be uniform and independent, the interval length must allow the walk to mix between subsequent samples.

\para{Setting for our algorithm}
We examine seven online social and information networks of varying sizes and characteristics, taken from widely used network repositories \cite{snapnets, nr}.
We also examine our algorithm on a network generated by the Forest Fire model with parameters $p_f=0.37$ and $p_b=0.3$, which are standard for this model~\cite{FF1}. 
The networks, along with their basic properties, are described in Table~\ref{table:exp}.
  For each network, we performed a small grid search to obtain a reasonable value for the input parameter $\ell_0$ (the target size of $L_0$) in our algorithm. The values we used for each network are given in Table~\ref{table:exp}. For the other two input parameters, $s_1$ and $s_{\geq 2}$,  we observed that choices of $3{,}000$ and 200 respectively are generally sufficient for \SL on the first seven networks (for Epinions and Slashdot, we picked $s_1 = 1{,}000$). In \SLP, values of $s_1 = 1{,}000$ and $s_{\geq 2} = 100$ are generally sufficient for the seven smaller networks.
Separately, for SinaWeibo we picked larger values, of $s_1=30k$ and $s_{\geq 2}=3k$ for both \SL{} and \SLP, since the network is substantially larger.

We ran each of our algorithms \SL and \SLP for 5-10 times on each of the eight networks; the amortized query complexity we calculated is the average over these runs.
As part of our pipeline, we verified the quality of our solution by configuring the algorithm's parameters so as to ensure that the samples generated by our algorithm are close to uniform. Specifically, we fixed a small empirical threshold $t$ ($0.01$ in most cases) and parameters $s_1$ and $s_{\geq 2}$ as above, while varying the value of the error parameter $\eps$ in our algorithm. For each choice of $\eps$, we ran the following for 10 times: we sampled $n$ nodes using our algorithm (parameterized by $\eps$), where $n$ is the graph size. In each of the runs, we calculated the empirical distance to uniformity; if the average empirical distance over the 10 runs is more than $t$ away from the expected value for a true uniform distribution, $\eps$ is discarded. Thus, our final choice of $\eps$ ensures near-uniformity of the output samples.

\para{Setting for random walks}
As mentioned above, the most standard approach to sampling multiple nodes using a random walk is by running a single long walk and extracting samples from this walk in fixed intervals. We examined this approach in two phases: setting the interval length, and evaluating the query complexity in view of this choice of interval length.

We judiciously set interval lengths that allow for proper mixing. This is explained in detail in Appendix \ref{sec:interval_length}, but briefly, 
we generated a large number of short random walks from the same starting point and evaluated at what point in time these walks mix. To this end,  we computed in each time step, for all walks simultaneously, the empirical distance to uniformity (using the same value of the empirical threshold $t$ as in our algorithm) or the number of collisions, which is also an indicator of distance to uniformity~\cite{goldreich2011testing}. To ensure the variance is controlled, we ran this procedure from 3-5 different starting points for each of the algorithms \rej, \mh, \mhp.

To compute the amortized query complexity, we ran the random walk algorithms on each of the networks, while keeping track of the cumulative number of queries. Then, we computed the mean number of queries per sample as the walk progressed. 

\para{Main results} Figure~\ref{fig:RW-comps} depicts the comparison results
for the six smaller real-world networks. Figure~\ref{fig:sinaweibo} and~\ref{fig:FF-comparison} show the results for SinaWeibo, and for a Forest Fire generated network, respectively. 
The number of node samples  in each of the first six networks, as well as in the FF one, is between $0.1\%$ and $1\%$ of the total number of nodes. While the results at the lower end, $0.1\%$, show the relatively steep initial price of the structural learning phase of our algorithm, the higher end of our sample size clarifies the stark differences in performance between the methods. In SinaWeibo, due to its sheer size, we considered a wider interval, from $10k$ samples (less than $0.02\%$ of the nodes) to about $600k$ samples ($1\%$).

As is evident in the plots, \SL and \SLP obtained significantly improved results compared to their RW-based counterparts. In all cases, and throughout the runs (as more samples are gathered), \SL demonstrated a query complexity that in all cases offers query complexity savings of at least $40\%$, and often much more, compared to both \rej and \mh. Moreover, \rej consistently required fewer queries on average than its counterpart \mh. This is in 
line with previous results from \cite{ChiericettiWWW2016}. In the more powerful query model, \SLP also gave at least $50\%$ (and almost always better) improvement over its random walk analog $\mhp$. The most dramatic improvement was for  SinaWeibo, the largest network, where \SL and \SLP yielded reductions reaching $90\%$ and $95\%$ in the query complexity, respectively, compared to their random walk counterparts. 
Curiously, as shown in Figure \ref{fig:sinaweibo}, the query complexity of \SLP in SinaWeibo was in some cases less than one query per sample. While this may seem counter-intuitive at first, we note that node samples from $L_1$ are  generated by  our algorithm without any query cost.
One feature of SinaWeibo that we observed is that a large majority of the nodes in the network are located in $L_1$, even for small $L_0$ sizes. Thus, many samples do not induce any query-cost.

Interestingly, our algorithm was challenged by the DBLP network, which required a costly $L_0$-construction stage, resulting in an initial disadvantage. However, as is clear in the figure, our algorithm recovers at samples of at least $1{,}000$ nodes, to quickly reach consistent improvement of about $40\%$ compared to \rej. 
We believe that these difficulties stem from the fact that in DBLP, a collaboration network, nodes have a weaker tendency to connect to very high degree nodes than in most social or information networks.

\subsection{Other Experiments}
\label{sec:other_exps}
\para{Structural layering parameters}
In Section \ref{sec:theoretical} we have seen that two factors mostly control the query complexity of our sampling phase: the typical size of $L_{\geq 2}$-components,      
and the reachability distribution of nodes in $L_{\geq 2}$. We empirically demonstrate that for the ``right'' size of $L_0$, these two factors are well-behaved, explaining the improved query complexity of our method.

Define $\mu$ as the average component size of a node in $L_{\geq 2}$.
This is a weighted (biased) average, giving more weight to the larger components. To see why this weighted average is of interest, recall our definition of $w$ from Section \ref{sec:analysis}: the expected size of a component reached by the procedure \SLtwo.
Under the assumption that all node reachabilities in $L_{\geq 2}$ are of the same order, it holds that $w = \Theta(\mu)$. 
To analyze the typical size of reached components in $L_{\geq 2}$, we computed the value of $\mu$ as a function of the $L_0$-size. This was done five times for each network, and is demonstrated for DBLP, Youtube, Twitter-Higgs, and SinaWeibo in Figure \ref{fig:L0-L2-exps} (right). Interestingly, the weighted average $\mu$ seems to decay exponentially as a function of the $L_0$-size (note that the $y$-axis here is log-scaled), until it converges to a constant. To the best of our knowledge, this exponential decay was not previously observed in the literature, and we believe it warrants further research; as the results show, the rate of decay differs majorly between different networks, and explains the choices of $L_0$-size that we made for the different networks: ideally, one would like $L_0$  to be small, while inducing a small enough $\mu$-value (say, $10$ or $20$). \label{par:L0-size}

In Figure \ref{fig:L0-L2-exps} (left), we investigated what minimal size of $L_0$ is required in a Forest Fire graph in order to satisfy $\mu \leq 10$. The experiment was run for different values of the graph size $n$, while fixing the FF parameters $p_f = 0.37$ and $p_b = 0.3$ as before. The results show that the required $|L_0|$ is clearly sublinear in $n$: they decay from about $3.3\%$ for $n=1$K to about $2\%$ for $n=100$K.

\para{Reachability distribution}
In our theoretical analysis, we claim that if the reachabilities of nodes in $L_{\geq 2}$ are all roughly of the same order, then the rejection step of our algorithm is not too costly. Here, we demonstrate that this assumption on the reachability distribution indeed approximately holds in practice. Specifically, Figure~\ref{fig:reach_dist} presents the reachability distribution of the $L_{\geq 2}$-nodes in DBLP for $|L_0|$ of $30k$ (the distributions for other networks are similar). For clarity, we discarded the top 3\% reachabilities, which form a thin upper tail, and only show the lower 97\% here. As can be seen, indeed most reachabilities are roughly of the same order: almost all $L_{\geq 2}$-nodes in the experiment have reachability up to 1, where a majority of them are between roughly 0.05 and 0.2. 
\begin{figure}[]
\centering
 \hspace{-25pt}
    \def\imagelen{.7\columnwidth}
  \includegraphics[width=\imagelen]{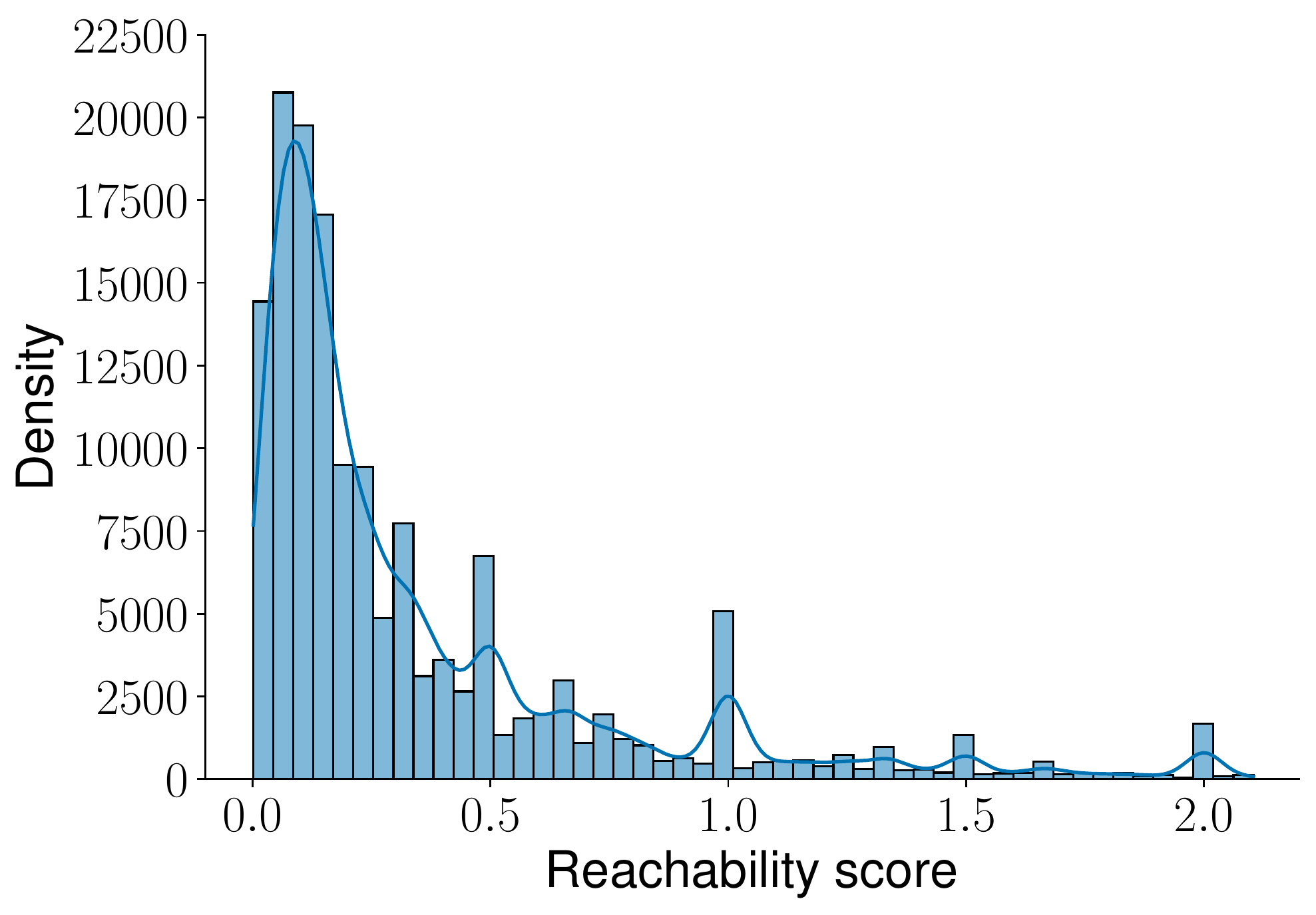}
  \caption{Distribution of node reachabilities in $L_{\geq 2}$ for DBLP.
  }
  \label{fig:reach_dist}
\end{figure}

\para{Size estimation}\label{size_est_par}
As mentioned, our algorithm needs to compute an accurate size estimate for $L_{\geq 2}$ as part of its preprocessing.
We next empirically demonstrate the quick convergence of our size estimate as a function of the numbers of nodes from $L_1$ and $L_{\geq 2}$ we visit during the preprocessing.
Recall the size estimation procedure \CompLTsize{} described in Section \ref{sec:alg_desc} (see also Section \ref{sec:preprocess_missing}).
Here we demonstrate the quick rate of convergence of this procedure, validating our choices of the parameters $s_1$ and $s_{\geq 2}$. 

The size of $L_{\geq 2}$ satisfies the following: $|L_{\geq 2}| = |L_1| \cdot d_1^+/d_2^-$, where $d_1^+$ is the average, over all nodes $v \in L_1$, of the number of neighbors of $v$ in $L_2$; and $d_2^-$ is the symmetric quantity, i.e.~the (unweighted) average over all nodes in $L_{\geq 2}$ of their number of neighbors in $L_1$.
\CompLTsize\ estimates $d_1^+$ and $d_2^-$ by taking $s_1$ samples from $L_1$ and $s_{\geq 2}$ sampled nodes from $L_{\geq 2}$ (using \SLtwo, without rejection; as these are biased, we use weighted averaging). It then uses these estimates to compute an estimate of $|L_{\geq 2}|$.
In the current experiment, we separately check how the chosen values of $s_1$ and $s_{\geq 2}$ affect the size estimate of $L_{\geq 2}$. 
We run two experiments, each of them five times for each of the networks; see the results for DBLP in Figure \ref{fig:size-estim-SL-supp} as a representative example. 
In the first experiment, we fix the value of $d_2^-$, and measure the error in the estimate  $\bar \ell_{\geq 2}$ when $d_1^+$ is computed as the average out-degree over $s_1$ samples.
The second experiment is similar, except that $d_1^+$ and $d_2^-$ switch roles: $d_1^+$ is fixed to its actual value, whereas we compute an estimate of $d_2^-$ from $s_{\geq 2}$ nodes in $L_{\geq 2}$ obtained through our algorithm 
(without the rejection step), where each reached node is assigned a weight that is inversely proportional to its reachability. As Figure~\ref{fig:size-estim-SL-supp} shows, it suffices to take $s_1$ of order a few thousands (left) and $s_{\geq 2}$ of order a few hundreds (right) to obtain a small error in the size estimation.

\begin{figure}
\centering
    \def\imagelen{0.48\columnwidth}
  \hspace{-.7cm}
  \includegraphics[width=\imagelen]{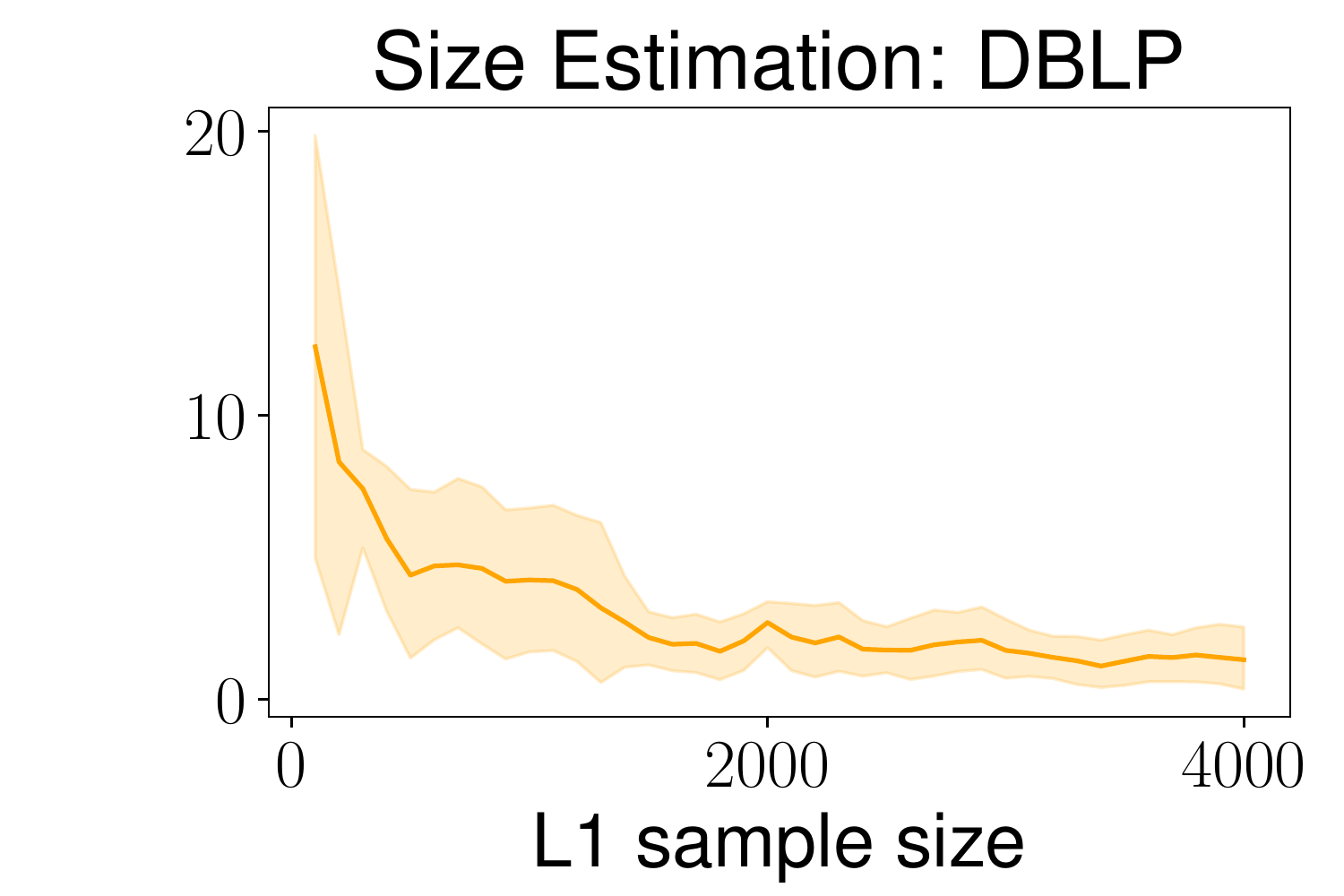}
  \includegraphics[width=\imagelen]{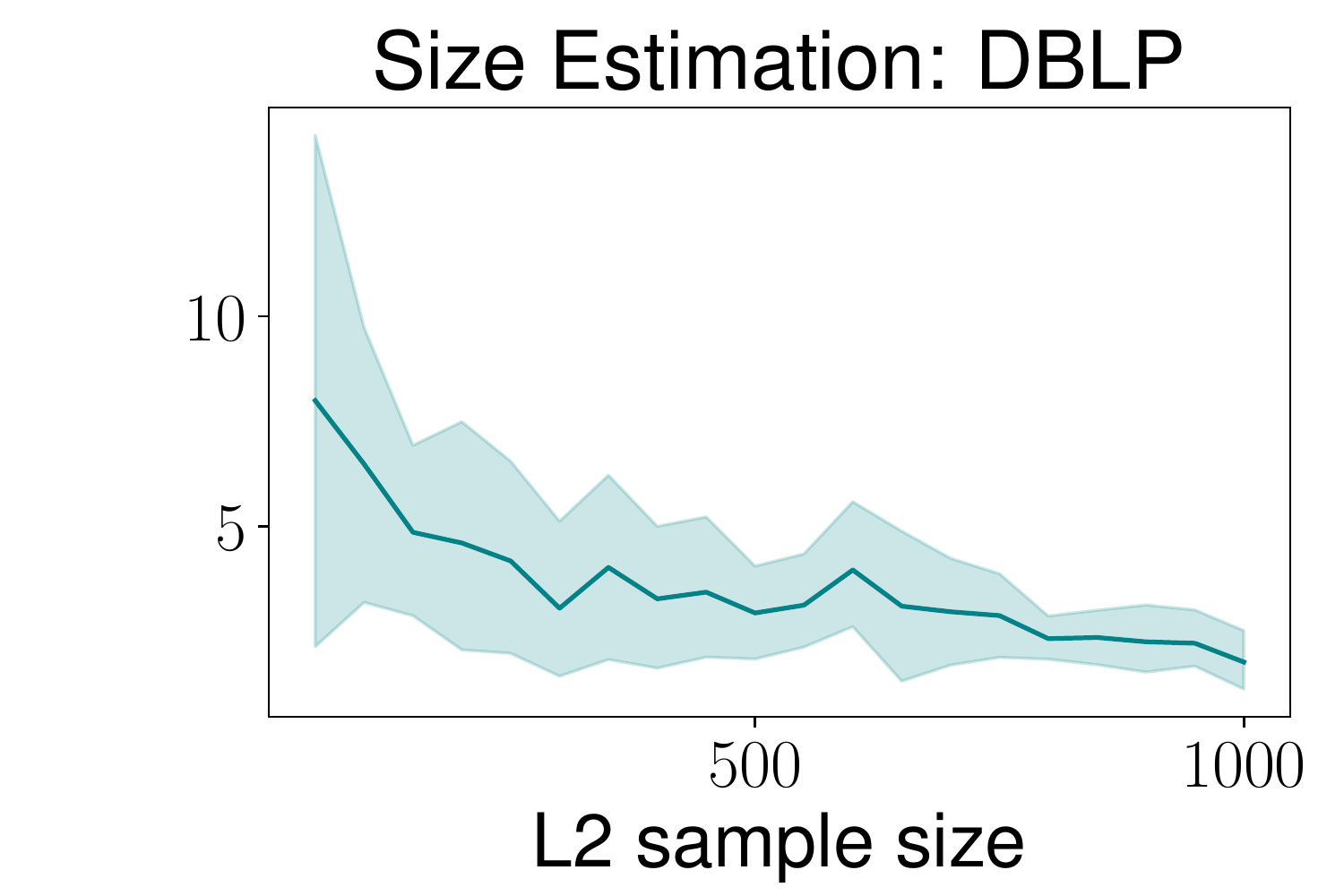}
  \vspace{.2cm}
  \caption{Error (\%) of the size estimate obtained by \SL for DBLP during our structural decomposition phase, as a function of the number $s_1$ of nodes queried from $L_1$ (left) and the number $s_{\geq 2}$ of reached nodes from $L_2$ (right).
  }
  \label{fig:size-estim-SL-supp}
\end{figure}

\section{Conclusions and open questions}

We have presented an algorithm that supports query-efficient sampling  of multiple nodes in a large social network, exhibiting the 
efficiency of our algorithm  compared to the state of the art on a variety of datasets, that include up to tens of millions of nodes.
We gave theoretical bounds on our algorithm's complexity in terms of several graph parameters. We then empirically confirmed that, in all the social networks we have examined, these graph parameters are well-behaved.
One major concern is the question of generality. That is, what is the scope of networks  for which our methods are suitable. 
As stated in the experimental section, our algorithms gave better or comparable results on all social networks we have tested, and the good performance on the Forest Fire model, a realistic generative model, further hints to wide applicability.

Our algorithm has some disadvantages compared to random walks. First, it is substantially more complicated, and its analysis does not directly depend on one structural parameter of the network (such as the mixing time for random walks). Second, it inherently relies on a data structure which has a high memory consumption, and may not always be suitable in situations that require bounded-memory algorithms. Third, it depends on the size of $L_0$, which differs between different networks, so it requires fine tuning according to the given network. Finally, it exploits unique  properties of social networks. These properties do not hold for bounded degree graphs, or for, say, real-world road networks. For such graphs, we expect our algorithm to perform worse than random walks.

Having said all that, our work demonstrates that in various real and synthetic social networks, significant query complexity savings can be obtained using suitable structural decompositions and careful preprocessing.
We stress that we do not believe that our algorithm should replace RW-based approaches. Indeed, these algorithms are the gold standard for node sampling and provide an exceptionally versatile primitive in network analysis. Rather,  
one can think of our algorithms as useful alternatives when multiple node samples are required and where the networks at question have suitable community structure.
It would be very interesting to further explore approaches that combine the query-efficiency of our methods with the flexibility of random walks.

\section*{Acknowledgments}
The authors wish to thank Suman K.~Bera and C.~Seshadhri for many helpful pointers and suggestions; Flavio Chierichetti and Ravi Kumar for valuable feedback; and the anonymous reviewers, for their insightful comments.


\bibliographystyle{plain}
\bibliography{references}
\appendix

\section{Appendices}
\label{sec:supp}

\subsection{Remaining procedures of \SL} 
\label{sec:preprocess_missing}

We provide in Figure~\ref{fig:preprocess} the pseudocode of two structural analysis procedures in \SL: \CompRSZ, for computing an estimate for the baseline reachability score; and \CompLTsize, for estimating the size of $L_{\geq 2}$. 


\preprocessing

\subsection{Description of \SLP}
\label{sec:pseudo}

\newcommand{\SLtwoP}{\hyperref[proc:sample-ltwoplus]{\color{black} \sc Reach-${L_{\geq 2}}$}}
\newcommand{\SLtwoALGP}{
	\alignInputOutput{
		\textbf{\textsc{Reach-$\boldsymbol{L_{\geq 2}}$}} \label{proc:sample-ltwoplus}\\
		\small
		\textsc{Input:} The data structure $\mathcal{D^+}$. \\
		\textsc{Output:} A node in $L_{\geq 2}$, and its component and reachability score.
}
\smallskip
		\begin{compactenum}
		    \smallskip
			\item  While true:
		\begin{compactenum}
            \item Use $\mathcal{D}^+$ to sample a node $u \in L_1$ according to the distribution prescribed by $\mathcal{D}^+$.
            \item Query $u$, compute $N(u)\cap L_0$,  and pick a uniform neighbor $w$ in $N(u)\setminus L_0 = N(u) \cap (L_1 \cup L_2)$. 
            \item If $w \in L_2$, fix it and exit loop. Otherwise, repeat loop.
        \end{compactenum}
        	\item Perform a BFS of $L_{\geq 2}$ to find the component $C$ containing $w$. Invoke \CompRS\ to compute its reachability $rs(C)$. 
			\item \textbf{Return} $w$,$C$, and the reachability 
			$rs(C)$.
			\end{compactenum}
}

\newcommand{\GenLZP}{\hyperref[proc:gen-L0-plus]{{ \color{black} \sc Generate-$ {L_0}$}}}
\newcommand{\ALGGenLZP}{
    \alignInputOutput{
    \textbf{\textsc{Generate-$\boldsymbol{L_0}$}} 
    \label{proc:gen-L0-plus}\\
        \small
    \textsc{Input:} arbitrary node $v_0$, number $\ell_0$. \\
    \textsc{Output:} $L_0$ of size $\ell_0$, $L_1$, $\mathcal{D}^+$.
}
    \begin{compactenum}
    
        \item \textbf{Query} $v_0$ and let $L_0=\{v_0\}$,  $L_1=N(v_0)$.
        \item Repeat $\ell_0 - 1$ times: pick $u \in L_1$ with maximum degree (break ties randomly); \textbf{query} $u$; remove $u$ from $L_1$; add $u$ to $L_0$; and add $N(u)\setminus L_0$ to $L_1$. 
        \item Create a data structure $\mathcal{D}^+$ that allows to sample a node in $L_1$ with probability proportional to its number of neighbors in  $L_1\cup L_2$.
    \end{compactenum}
}

\begin{figure}[htb!]
\begin{center}
\framebox{
\begin{minipage}{.9\columnwidth}
     \ALGGenLZP
\end{minipage}
}
\framebox{
\begin{minipage}{.9\columnwidth}
     \CompRSALGP
\end{minipage}
}
\framebox{
\begin{minipage}{.9\columnwidth}
     \SLtwoALGP
\end{minipage}
}
\end{center}
\caption{Pseudo-code for \SLP.}
\label{fig:pseudo_SLP}
\end{figure}

In Figure \ref{fig:pseudo_SLP} we present the psuedocode of the procedures in \SLP whose implementation differs from that in \SL. There are three such procedures: for generating $L_0$, reaching $L_{\geq 2}$, and computing the reachability of a node. All other procedures are identical to \SL. 
See also Figure~\ref{fig:reach-SLP} for a visualization of the sampling phase in \SLP; compare this to the analogous Figure \ref{fig:reach-SL} for \SL.
\newcommand{\drawLayersPlus}[4]{

\node[scale=3.5, #4] at (-8, -7.2) {$L_0$};
\node[scale=3.5, #4] at (-4, -7.2) {$L_1$};
\node[scale=3.5, #4] at (-0.4, -7.2) {$L_{\geq 2}$};

\fill[#1!25]  (-8,0) circle  [x radius=1.5cm, y radius=3cm] node{};
\draw[thin, color=#1]  (-8,0) circle  [x radius=1.5cm, y radius=3cm] node{};

\fill[#2!25] (-4,0) circle  [x radius=1.5cm, y radius=5cm] node{};
\draw[thin, color=#2]  (-4,0) circle  [x radius=1.5cm, y radius=5cm] node{};


\fill[#4!25]  (0.5,5) circle  [x radius=2.3cm, y radius=1.8cm] node{};
\fill[#4!25]  (0,1) circle  [x radius=1.8cm, y radius=1.5cm] node{};
\fill[#4!25]  (-1.2,-1.2) circle  [x radius=0.6cm, y radius=0.7cm] node{};
\fill[#3!25]  (-.3,-4) circle  [x radius=1.5cm, y radius=2cm] node{}; 
\draw[thin, #4]  (0.5,5) circle  [x radius=2.3cm, y radius=1.8cm] node{};
\draw[thin, #4]  (0,1) circle  [x radius=1.8cm, y radius=1.5cm] node{};
\draw[thin, #4]  (-1.2,-1.2) circle  [x radius=0.6cm, y radius=0.7cm] node{};
\draw[thin, #3]  (-.3,-4) circle  [x radius=1.5cm, y radius=2cm] node{};

\tikzstyle{client}=[draw=none,circle,minimum size=3ex, inner sep=0, fill=#1,]
\node[client] (L0-1) at (-8,2) {};
\node[client] (L0-2) at (-8,0) {};
\node[client] (L0-3) at (-8,-2) {};

\tikzstyle{client}=[draw=none,circle,minimum size=3ex, inner sep=0, fill=#4,]    
\node[client] (L1-1) at (-4,3.5) {};
\node[client] (L1-2) at (-4,1.75) {};
\node[client] (L1-3) at (-4,0) {};
\node[client] (L1-4) at (-4,-1.75) {};
\node[client] (L1-5) at (-4,-3.5) {};


\tikzstyle{client}=[draw=none,circle,minimum size=3ex, inner sep=0, fill=#3,]

\node[client] (L24-1) at (-1,-4) {};
\node[client] (L24-2) at (0,-3) {};
\node[client] (L24-3) at (-1,-5) {};
\node[client] (L24-4) at (0,-4.7) {};

\tikzstyle{client}=[draw=none,circle,minimum size=3ex, inner sep=0, fill=#4,]

\node[client] (L22-1) at (-1,1) {};
\node[client] (L22-2) at (0,2) {};
\node[client] (L22-3) at (0,0.5) {};
\node[client] (L22-4) at (1,1.3) {};

\node[client] (L21-1) at (-1,5.5) {};
\node[client] (L21-2) at (-1,4.5) {};
\node[client] (L21-3) at (0,6.2) {};
\node[client] (L21-4) at (0,4) {};
\node[client] (L21-5) at (1,6.5) {};
\node[client] (L21-6) at (1,4.5) {};
\node[client] (L21-7) at (1,3.5) {};
\node[client] (L21-8) at (2,5) {};

\node[client] (L23-1) at (-1, -1.1) {};

\tikzstyle{edgeStyle}=[#1!40, dashed, thin]
\draw[edgeStyle] (L0-1) to [bend right] (L0-2);
\draw[edgeStyle] (L0-1) to [bend right] (L0-3);

\tikzstyle{edgeStyle}=[#1!70, thin]
\draw[edgeStyle] (L0-1) -- (L1-1);
\draw[edgeStyle] (L0-1) -- (L1-2);
\draw[edgeStyle] (L0-2) -- (L1-2);
\draw[edgeStyle] (L0-2) -- (L1-3);
\draw[edgeStyle] (L0-3) -- (L1-3);
\draw[edgeStyle] (L0-3) -- (L1-4);
\draw[edgeStyle] (L0-3) -- (L1-5);

\tikzstyle{edgeStyle}=[#4!70, dashed, thin]
\draw[edgeStyle] (L1-1) to [bend right] (L1-2);
\draw[edgeStyle] (L1-1) to [bend right] (L1-3);
\draw[edgeStyle] (L1-4) to [bend right] (L1-5);

\tikzstyle{edgeStyle}=[#4, thin]
\draw[edgeStyle] (L1-1) -- (L21-1);
\draw[edgeStyle] (L1-2) -- (L21-2);
\draw[edgeStyle] (L1-3) -- (L21-2);
\draw[edgeStyle] (L1-3) -- (L22-1);
\draw[edgeStyle] (L1-3) -- (L23-1);
\draw[edgeStyle] (L1-4) -- (L23-1);
\draw[edgeStyle] (L1-4) -- (L24-1);
\draw[edgeStyle] (L1-5) -- (L24-3);

\tikzstyle{edgeStyle}=[#4!70, dashed, thin]
\draw[edgeStyle] (L21-2) -- (L22-1);
\draw[edgeStyle] (L23-1) -- (L24-1);

\tikzstyle{edgeStyle}=[#4, thin]
\draw[edgeStyle] (L21-1) -- (L21-3);
\draw[edgeStyle] (L21-3) -- (L21-5);
\draw[edgeStyle] (L21-5) -- (L21-8);
\draw[edgeStyle] (L21-2) -- (L21-4);
\draw[edgeStyle] (L21-4) -- (L21-6);
\draw[edgeStyle] (L21-4) -- (L21-7);
\draw[edgeStyle] (L21-4) -- (L21-3);

\tikzstyle{edgeStyle}=[#4, thin]
\draw[edgeStyle] (L22-1) -- (L22-2);
\draw[edgeStyle] (L22-1) -- (L22-3);
\draw[edgeStyle] (L22-3) -- (L22-4);

\tikzstyle{edgeStyle}=[#3, thin]
\draw[edgeStyle] (L24-1) -- (L24-2);
\draw[edgeStyle] (L24-2) -- (L24-3);
\draw[edgeStyle] (L24-1) -- (L24-2);
\draw[edgeStyle] (L24-2) -- (L24-4);

\draw[thin, color=#1]  (-8,0) circle  [x radius=1.5cm, y radius=3cm] node{};
\draw[thin, color=#2]  (-4,0) circle  [x radius=1.5cm, y radius=5cm] node{};
\draw[thin, #4]  (0.5,5) circle  [x radius=2.3cm, y radius=1.8cm] node{};
\draw[thin, #4]  (0,1) circle  [x radius=1.8cm, y radius=1.5cm] node{};
\draw[thin, #4]  (-1.2,-1.2) circle  [x radius=0.6cm, y radius=0.7cm] node{};
\draw[thin, #3]  (-.3,-4) circle  [x radius=1.5cm, y radius=2cm] node{};

}

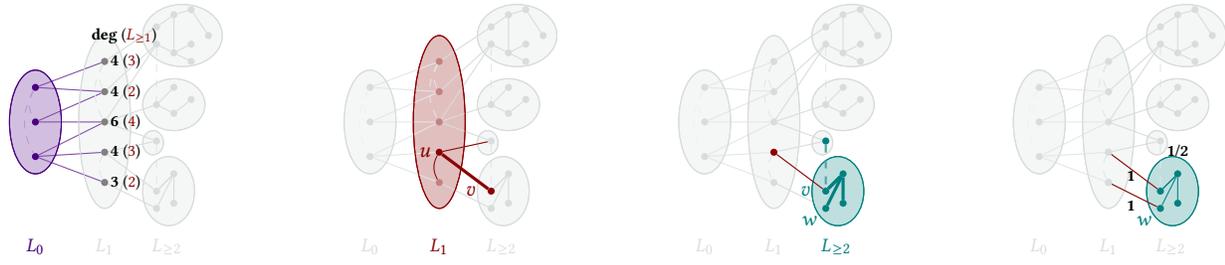
\begin{figure*}[ht]
\centering

    \hfill
  \begin{minipage}[b]{0.22\textwidth}

    \begin{tikzpicture}
    \begin{scope}[scale=0.23, transform shape]
    \drawLayersPlus{indigo}{lightgray}{lightgray}{lightgray};
    
    \tikzstyle{client}=[draw=none,circle,minimum size=3ex, inner sep=0, fill=gray,]
\node[right, scale=3] at (-5.1,5) {\textbf{deg} (\textcolor{darkred}{$L_{\geq 1}$})};
\node[client] at (-4,3.5) {};
\node[right, scale=3] at (-4,3.5)  {{\textcolor{black}{\textbf{4}} (\textcolor{darkred}{3})}};
\node[client] at (-4,1.75) {};
\node[right, scale=3] at (-4,1.75) 
{{\textcolor{black}{\textbf{4}}
(\textcolor{darkred}{2})}};
\node[client] at (-4,0) {};
\node[right, scale=3] at (-4,0) {{\textcolor{black}{\textbf{6}} (\textcolor{darkred}{4})}};
\node[client] at (-4,-1.75) {};
\node[right, scale=3] at (-4,-1.75) {{\textcolor{black}{\textbf{4}} (\textcolor{darkred}{3})}};

\node[client] at (-4,-3.5) {};
\node[right, scale=3] at (-4,-3.5) {{\textcolor{black}{\textbf{3}} (\textcolor{darkred}{2})}};
    \node[scale=3.5, indigo] at (-8, -7.2) {$L_0$};
    \end{scope}
    \end{tikzpicture}    
  \end{minipage}
  \hfill
  \begin{minipage}[b]{0.22\textwidth}

    \begin{tikzpicture}
    \begin{scope}[scale=0.23, transform shape]
    \drawLayersPlus{lightgray}{darkred}{lightgray}{lightgray};
    \draw[darkred, very thick] (L1-4) -- (L24-1) node{};
    \node[client, fill=darkred!50] at (L1-1) {};
    \node[client, fill=darkred!50] at (L1-2) {};
    \node[client, fill=darkred!50] at (L1-3) {};
    \node[client, fill=darkred] at (L1-4) {};
    \node[client, fill=darkred!50] at (L1-5) {};
    \tikzstyle{edgeStyle}=[darkred, thin];
    \draw[edgeStyle] (L1-4) -- (L23-1); 
    \draw[edgeStyle] (L1-4) to [bend right] (L1-5);
    
    \tikzstyle{client}=[draw=none,circle,minimum size=3ex, inner sep=0]
    \node[client, fill=darkred] at (L24-1) {};
    \node[scale=4, left, darkred] at (L1-4) {{$u$}};
    \node[scale=4, left=0.4, darkred] at (L24-1) {{$v$}};
    \node[scale=3.5, darkred] at (-4, -7.2) {$L_1$};
    \end{scope}
    \end{tikzpicture}

  \end{minipage}
  \hfill
  \begin{minipage}[b]{0.22\textwidth}

    \begin{tikzpicture}
    \begin{scope}[scale=0.23, transform shape]
    \drawLayersPlus{lightgray}{lightgray}{teal}{lightgray};
    \tikzstyle{client}=[draw=none,circle,minimum size=3ex, inner sep=0]
`   \node[client, fill=darkred] at (L1-4) {};
    \node[scale=4, left=0.4, teal] at (L24-1) {{$v$}};
    \node[scale=4, below left, teal] at (L24-3) {{$w$}};
    \node[scale=3.5, teal] at (-0.4, -7.2) {$L_{\geq 2}$};
    \tikzstyle{edgeStyle}=[darkred, thin];
    \draw[edgeStyle] (L1-4) -- (L24-1);
    \node[client, fill=teal] (L23-1) at (-1, -1.1) {};
    
    \tikzstyle{edgeStyle}=[teal, very thick]
    \draw[edgeStyle] (L24-1) -- (L24-2);
\draw[edgeStyle] (L24-2) -- (L24-3);
\draw[edgeStyle] (L24-1) -- (L24-2);
\draw[edgeStyle] (L24-2) -- (L24-4);
    
    \tikzstyle{edgeStyle}=[teal!70, dashed, thick]
\draw[edgeStyle] (L23-1) -- (L24-1);

    \end{scope}
    \end{tikzpicture}
  \end{minipage}
  \hfill
  \begin{minipage}[b]{0.22\textwidth}

    \begin{tikzpicture}
    \begin{scope}[scale=0.23, transform shape]
    \drawLayersPlus{lightgray}{lightgray}{teal}{lightgray};
    \tikzstyle{client}=[draw=none,circle,minimum size=3ex, inner sep=0]
    \node[client, fill=lightgray] at (L1-4) {};
    \node[client, fill=lightgray] at (L1-5) {};
    \node[scale=4, below left, teal] at (L24-3) {{$w$}};
    \tikzstyle{edgeStyle}=[darkred, thin];
    \draw[edgeStyle] (L1-4) -- (L24-1);
    \draw[edgeStyle] (L1-5) -- (L24-3);
    \node[scale=3, below right=1] at (L1-4) {\textbf{1}};
    \node[scale=3, below right=1] at (L1-5) {\textbf{1}};
    \node[scale=3,above=0.7] at (L24-2) {\textbf{1/2}};

\tikzstyle{client}=[draw=none,circle,minimum size=3ex, inner sep=0, fill=darkred,]    
    \end{scope}
    \end{tikzpicture}
  \end{minipage}
  \hfill
  \caption{Sampling a node from $L_{\geq 2}$ in \SLP. Here, the degrees of nodes in $L_1$ are known after the preprocessing is finished. Thus, for each node $u \in L_1$, we can calculate how many neighbors it has in $L_{\geq 1} = L_1 \cup L_{\geq 2}$ (left). We pick a node $u \in L_1$ with probability proportional to the number of such neighbors, and pick a random $L_{\geq 1}$-neighbor $v$ of $u$ (middle left). If $v \in L_1$, we break and return to the first step. Otherwise (middle right), $v \in L_2$ and we explore its component $C$ as in \SL, picking a random node $w \in C$. Finally (right), we compute reachability for $w$ and run a rejection step accordingly.}
  \label{fig:reach-SLP}
\end{figure*}

\SLP takes advantage of the stronger query model (which also reveals the degrees of the neighbors) in two ways. First, in the $L_0$-generation phase, at each round we pick a neighbor with absolute maximum degree, rather than a ``perceived'' maximum degree as in \SL. This results in \SLP generally adding higher-degree nodes to $L_0$ compared to \SL.

The second modification is during the sampling phase, and involves the reachability computation. We utilize the stronger query model of \SLP to reach $L_{\geq 2}$ in ways that reduces the query-cost of rejection. Here, we have a way to guarantee that the random edge entering $L_2$ in our reaching attempt is uniform among all edges between $L_{1}$ and $L_2$; in \SL, we do not have such a guarantee. Thus, the reachability of a component is simply the number of edges entering it from $L_1$ divided by the component size. This makes the reachability both easier to compute (requires less queries) and more evenly distributed than in \SL.

\subsection{Missing proofs from Section~\ref{sec:lower_bound}
} \label{sec:missing_lb}
In this section  we prove Theorem~\ref{thm:lb} regarding a lower bound on random walks.

\begin{theorem}[Theorem~\ref{thm:lb}, restated]
For any $n$ and $\log n\ll t \ll n^{\Theta(1)}$, there exists a graph $G$ on $n$ vertices 
with mixing time $\tmix=\Theta(t)$, that satisfies the following: for any $N \leq n^{\Theta(1)}
$, any sampling algorithm based on uniform random walks that outputs a (nearly) uniform collection of $N$ nodes must perform $\Omega(N\cdot \tmix)$ queries.
\end{theorem}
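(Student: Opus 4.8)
The plan is to combine an explicit core--periphery gadget with a bottleneck argument. First I would make the informal construction precise: let $G$ be a $d$-regular expander ``core'' on $\Theta(n)$ vertices (whose own mixing time is $O(\log n)$), plus a collection of constant-size ``peripheral communities'' (copies of $K_k$), each joined to the core by a single bridge edge; crucially, all bridges emanate from a set $P$ of $\Theta(n/t)$ designated ``portal'' vertices, so that each portal carries $\Theta(t)$ community-vertices, and a constant fraction $\rho$ (chosen close to $1$) of all $n$ vertices lie in communities. Two facts need to be checked. (i) $\tmix=\Theta(t)$: the only near-sparse cuts are those separating a union of portal clusters (a portal plus its communities) from the rest of the core, and each such cut has conductance $\Theta(1/t)$, giving $\tmix=\Omega(t)$ by the bottleneck ratio; the matching $O(t)$ bound holds because from any start the walk reaches a non-portal core vertex within $O(t)$ expected steps and then mixes in $O(\log n)=o(t)$ more. (ii) Escaping a portal cluster is slow: a portal has degree $\Theta(t)$ but only $d$ edges to the core, so $\Omega(t/d)$ visits to it are needed to step out, each visit separated by a short excursion into one of the portal's $\Theta(t)$ communities --- hence the walk spends $\Omega(t)$ steps, and visits $\Omega(t)$ distinct community-vertices, before escaping; symmetrically, reaching the union of portal clusters from a random core vertex (a density-$\Theta(1/t)$ target in a fast-mixing core) takes $\Omega(t)$ steps.

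Next I would fix the class of algorithms: a single uniform random walk (the one underlying \rej), from which $X_i$ is emitted as a possibly-randomized function of the current vertex and its degree, with one query charged per walk step and no ability to restart or jump; by hypothesis the joint law of $(X_1,\dots,X_N)$ is within a small constant $\eps_0$ of $\mathrm{Unif}(V)^{\otimes N}$ in total variation. Call the walk-portion between the emissions of $X_i$ and $X_{i+1}$ ``segment $i$'', and call it \emph{long} if it takes $\Omega(t)$ steps. By fact (ii), segment $i$ is long whenever $X_i$ lies in a portal cluster and $X_{i+1}$ does not lie in the same cluster. Since the segments are disjoint, the total query count is $\Omega(t)$ times the number of long segments, so it suffices to prove that $\Omega(N)$ segments are long.

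The crux is to obtain this count from near-i.i.d.-uniformity \emph{without conditioning on rare events}. Since a $\rho$-fraction of vertices lie in communities, a Chernoff bound under the uniform law together with $\eps_0$-closeness give that, with high probability, at least $(\rho-\eps_0-o(1))N$ of the $X_i$ lie in some community. Now let $B_i$ indicate that $X_i$ and $X_{i+1}$ both lie in communities on a common portal. Then $\sum_i B_i$ is a bounded statistic of the output, so $\bigl|\mathbb{E}_{\mathrm{alg}}[\sum_i B_i]-\mathbb{E}_{\mathrm{unif}}[\sum_i B_i]\bigr|\le \eps_0 N$; and under $\mathrm{Unif}(V)^{\otimes N}$ two samples are co-portal with probability $\Theta(t/n)$, so $\mathbb{E}_{\mathrm{unif}}[\sum_i B_i]=\Theta(tN/n)$. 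Subtracting, the expected number of long segments is at least $\bigl(\rho-\Theta(t/n)-2\eps_0-o(1)\bigr)N=\Omega(N)$, once $\rho$ is close enough to $1$ and $\eps_0$ is a small enough constant (we only need $t\le(1-\Omega(1))n$, well inside the hypothesis $t\ll n^{\Theta(1)}$). Multiplying by the $\Omega(t)$ steps per long segment gives $\Omega(N\cdot t)=\Omega(N\cdot\tmix)$ expected queries. The same construction, with the portal and bridge multiplicities rescaled to account for the self-loops that \mh and \mhp add, yields the analogous bounds for those walks.

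I expect two steps to carry the real weight. One is certifying $\tmix=\Theta(t)$ for this gadget with both matching bounds --- in particular that no cut beats the portal-cluster cuts, and that the walk really does mix in $O(t)$ steps once outside the portal region. The other, more conceptual, is the passage from ``the output is $\eps_0$-close to i.i.d.\ uniform'' to ``$\Omega(N)$ inter-emission segments are long'': the clean route is the unconditional ``co-portal'' statistic above, which avoids the total-variation blow-up one would suffer by conditioning on the rare event that a particular sample lands in a particular tiny community. One should also be careful about what is charged --- under the standard one-query-per-step convention fact (ii) suffices verbatim, and moreover the portal construction forces $\Omega(t)$ \emph{distinct} queried vertices per escape, so the bound survives even if the algorithm caches previously queried nodes.
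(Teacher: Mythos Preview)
Your proposal is correct and follows the same high-level strategy as the paper---a core--periphery gadget with a $\Theta(1/t)$ conductance bottleneck, together with an argument that near-i.i.d.\ uniformity forces $\Omega(N)$ bottleneck traversals---but the specific instantiation differs in two interesting ways. The paper attaches $n/(2t)$ \emph{large} components, each of size $t$, by single bridges to random core vertices; the slow step is \emph{entering} the periphery from the core (the bridge density in the core is $\Theta(1/t)$), and the counting is done via a birthday argument: for $N=o(\sqrt{n/t})$ the $N$ samples hit $\Omega(N)$ distinct components with high probability, each requiring a fresh $\Omega(t)$-step entry. Your gadget instead concentrates $\Theta(t)$ constant-size communities around each of $\Theta(n/t)$ high-degree portal vertices, so the slow step is \emph{escaping} a portal cluster; you then count long inter-emission segments directly via the unconditional ``co-portal'' statistic and a total-variation comparison. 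This buys you two things: the argument does not need the birthday restriction on $N$ (only $t=o(n)$), and it cleanly avoids conditioning on low-probability events. The paper's construction, on the other hand, has a simpler verification that no sparser cut exists.

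Both treatments leave the upper bound $\tmix=O(t)$ at the sketch level, and here your one-line justification (``reach a non-portal core vertex in $O(t)$ expected steps and then mix in $O(\log n)$ more'') is incomplete as written: in your gadget a constant fraction of the stationary mass sits in the periphery, so mixing over the core alone is not enough. The paper handles this with an alternating argument---mix in the core, hit a uniformly random periphery piece in $\Theta(t)$ steps, mix there, return in $\Theta(t)$ steps---and cites the equivalence between mixing time and hitting times of large sets; the same scheme adapts to your portal construction and is what you would need to fill in.
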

\begin{proof}[Proof Sketch]
Consider a graph $G$ defined as follows. $G$ consists of a ``core'' expander subgraph $C$ over $n/2$  nodes and a subgraph $W$ over $n/2$ nodes that consists of 
$\ell=n/2t$ small subgraphs, $W_1, \ldots, W_{\ell}$, to be described soon.
We connect some node $w_i$ in each $W_i$ by a ``bridge'' to a randomly chosen node $c_i\in C$, where the distribution over the nodes from $C$ is proportional to their degree in $C$. The random choice is independent between different bridges.
The $W_i$ components should satisfy the following: (i) $|W_i|=t$; (ii) the nodes of $W_i$ have constant average degree; (iii) the mixing time of each $W_i$ is $O(t)$; and (iv) a walk starting in $w_i$ is likely (has constant probability) to mix in $W_i$ before it exits $W_i$ through the bridge. For example, taking $W_i$ as a star or an expander would suffice.

Since  $|W|=n/2$, a constant fraction of the node samples should be from $W$, and since every pair $W_i$ and $W_j$ is disconnected, this necessarily means that the random walk should visit $\Theta(N)$ distinct components $W_i$, and thus cross $\Theta(N)$ distinct bridges.\footnote{By the birthday paradox, when sampling $N = o(\sqrt{n/t})$ uniform and (nearly) independent nodes, the probability that two of these $N$ nodes will come from the same component  is negligible. This implies that $\Omega(N)$ different components must be visited.}
 Furthermore, as the probability of leaving $C$ is bounded by $O(\frac{n/t}{n})=O(1/t)$, a walk starting in some node in $C$ is not likely to leave $C$ unless $\Omega(t)$ steps are taken. Together with the fact that the walk should cross a total of $\Omega(N)$ bridges, this implies an $\Omega(N\cdot t)$ lower bound on the number of required steps. It is easy to choose $C$ in such a way that the number of required queries is of the same order.

It remains to prove  that $\tmix(G)=\Theta(t)$. 
Due to space limitations, we only give some intuition, and the formal proof  can be achieved by bounding the hitting times of large sets of $G$, as these measures are  equivalent up to a constant~\cite{peres2015mixing}.
Since $\Omega(t)$ steps are required to cross a bridge from $C$ to some $W_i$, $\tmix = \Omega(t)$. 
The other direction is a bit less direct, but intuitively,  since in expectation leaving $C$ takes $\Omega(t)$ steps and the mixing time of $C$ is $O(\log n)$, the walk will be uniform over the edges of $C$ before leaving to $W$. 
Therefore, after  $\Omega(t)$ steps the walk  has constant probability of reaching $W$, and each $W_i$ is equally likely to be the one reached. Once in $W_i$ for some $i$, by condition (iv) in the first paragraph, the walk has constant probability to mix before leaving the component. Moreover, after $\Theta(t)$ steps in $W_i$ the walk has constant probability to return to $C$, where it will again typically mix before leaving. 
Hence, after $\Theta(t)$ steps the walk is mixed over the entire graph $G$. 
A similar argument holds for the case that the walk starts at some vertex $w\in W$.
Therefore, $\Theta(t)$ steps are  necessary and sufficient for the walk to mix. 
\end{proof}

The above proof shows that a rather generic family of graphs consisting of an expander component and many isolated communities satisfies this bound.
We note that an even more general but slightly weaker lower bound, that applies to \emph{all} networks with a substantial number of small weakly connected components, can be proved using the notion of conductance \cite{Sinclair1994}.

\subsection{Missing proofs from Section~\ref{sec:ub}}\label{sec:size_estimation}

In this section we provide missing proofs from Section~\ref{sec:ub}.
 We start by claiming that 
 our algorithm accurately estimates the size of $L_{\geq 2}$, and   then continue to prove that our algorithm induces a close to uniform distribution.
 
We make use of the following notation. Denote by $L_i$ the collection of all nodes of distance exactly $i$ from $L_0$ (whereas $L_0$ is formally defined as the output of the {\GenLZ} procedure). 
For a node $v \in L_i$, for $i=1,2$, we denote by $\dm(v)$ and $\dpp(v)$ its number of neighbors in $L_{i-1}$ and $L_{i+1}$, respectively. For nodes in $L_{\geq 2} \setminus L_2$, we set $\dm(v)  = \dpp(v) = 0$.

For a set $R$, let $\dpp(R)=\sum_{v \in R} \dpp(v)$ and $\dm(R)=\sum_{v \in R} \dm(v)$. Also, let $\dpp_{avg}(R) = \dpp(R)/|R|$, and $\dpp_{\max}(R) = \max_{v \in R}\{ \dpp(v)\}$, and analogously for   $\dm_{\max}(R)$ and $\dm_{avg}(R)$. 
Furthermore, we denote by $rs(v)$ the reachability score of a node in $L_{\geq 2}$ as computed by the algorithm. For a set $S \subseteq L_{\geq 2}$, we define $rs(S) = \sum_{v \in S} rs(v)$. The average reachability over the set is $rs_{avg}(S) = rs(S) / |S|$. Finally, let $rs_{min}(L_{\geq 2}) = \min_{v \in L_{\geq 2}} rs(v)$ and $c_{rs}=rs_{avg}(L_{\geq 2})/rs_{min}(L_{\geq 2})$.

\begin{theorem}
Let $\bar \ell_{\geq 2}$ be the output of {\CompLTsize} for the input parameters $$s_1=
\Theta\left(\frac{\dpp_{max}(L_1)}{\eps^2\cdot \dpp_{avg}(L_1)}\right), \ \ \  s_{\geq 2}=\Theta\left(\frac{c_{rs}  \cdot d^-_{max}(L_{\geq 2})}{\eps^2\cdot d^-_{avg}(L_{\geq 2}) }\right).$$ Then with high constant probability, $\bar \ell_{\geq 2} \in (1\pm \eps)|L_{\geq 2}|$.
\end{theorem}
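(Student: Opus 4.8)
The plan is to reduce the theorem to two one-sided relative-error estimation claims via an exact combinatorial identity, and then to establish each claim with a multiplicative Chernoff bound. First I would record the identity $|L_{\geq 2}| = |L_1|\cdot d_1^+/d_2^-$, where $d_1^+ = \dpp_{avg}(L_1)$ and $d_2^- = \dm_{avg}(L_{\geq 2})$: this is immediate from double-counting the edges between $L_1$ and $L_2$, using that a neighbor in $L_{\geq 2}$ of an $L_1$-node must lie in $L_2$ (so $N(v)\cap L_{\geq 2}=N(v)\cap L_2$ for $v\in L_1$, as computed by the procedure), and that $\dm(v)=0$ for $v\in L_{\geq 2}\setminus L_2$. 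It therefore suffices to show that, with high constant probability, \CompLTsize{} produces $\dpp_{avg}(S_1)\in(1\pm\eps/4)\,d_1^+$ and $\dm_{avg}(S_2)\in(1\pm\eps/4)\,d_2^-$; a union bound together with $\tfrac{1+\eps/4}{1-\eps/4}\le 1+\eps$ for small $\eps$ then yields $\bar\ell_{\geq 2}=|L_1|\cdot\dpp_{avg}(S_1)/\dm_{avg}(S_2)\in(1\pm\eps)|L_{\geq 2}|$.

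The bound on $\dpp_{avg}(S_1)$ is the routine half. Since $S_1$ is a set of $s_1$ independent uniform samples from $L_1$ and each $\dpp(v)$ lies in $[0,\dpp_{max}(L_1)]$ with mean $\dpp_{avg}(L_1)=d_1^+$, rescaling by $\dpp_{max}(L_1)$ and applying the multiplicative Chernoff bound shows that $s_1=\Omega\!\big(\dpp_{max}(L_1)/(\eps^2\dpp_{avg}(L_1))\big)$ samples suffice to get relative error $\eps/4$ with high constant probability — exactly the stated choice of $s_1$.

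The estimate of $d_2^-$ is where the real work lies, because $S_2$ is produced by \SLtwo{} and is \emph{not} uniform over $L_{\geq 2}$. I would first trace through \SLtwo{}: in a single round it draws a uniform edge of $E(L_0,L_1)$, so its $L_1$-endpoint $u$ appears with probability $\dm(u)/|E(L_0,L_1)|$, and then a uniform $L_2$-neighbor of $u$ is taken; summing, a fixed $w\in L_2$ is reached in a round with probability $\frac{1}{|E(L_0,L_1)|}\sum_{u\in N(w)\cap L_1}\dm(u)/\dpp(u)=rs(w)/|E(L_0,L_1)|$, by the definition of $rs$ in \CompRS{}. Aggregating over a component and then picking a uniform component-node, one obtains that, conditioned on termination, \SLtwo{} returns each $v\in L_{\geq 2}$ with probability $rs(C_v)/rs(L_{\geq 2})$, where $rs(v):=rs(C_v)$ is the reachability score the procedure reports for $v$ and $rs(L_{\geq 2})=\sum_{v}rs(C_v)=\sum_{w\in L_2}rs(w)$; i.e.\ the output is distributed proportionally to its reachability score. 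Establishing this distribution cleanly is, I expect, the main obstacle.

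Given this, $\dm_{avg}(S_2)=A/B$ is an importance-sampling ratio estimator with $A=\frac{1}{s_{\geq 2}}\sum_{v\in S_2}\dm(v)/rs(v)$ and $B=\frac{1}{s_{\geq 2}}\sum_{v\in S_2}1/rs(v)$, and the weights cancel the sampling bias: $\mathbb{E}[A]=\dm_{avg}(L_{\geq 2})/rs_{avg}(L_{\geq 2})$ and $\mathbb{E}[B]=1/rs_{avg}(L_{\geq 2})$, so $\mathbb{E}[A]/\mathbb{E}[B]=d_2^-$. Since $\dm(v)/rs(v)\in[0,\dm_{max}(L_{\geq 2})/rs_{min}(L_{\geq 2})]$ has range-to-mean ratio $\tfrac{\dm_{max}(L_{\geq 2})}{\dm_{avg}(L_{\geq 2})}\cdot c_{rs}$, while $1/rs(v)$ has range-to-mean ratio $c_{rs}$, applying the multiplicative Chernoff bound to $A$ and to $B$ separately shows that $s_{\geq 2}=\Omega\!\big(c_{rs}\,\dm_{max}(L_{\geq 2})/(\eps^2\dm_{avg}(L_{\geq 2}))\big)$ — which dominates $\Omega(c_{rs}/\eps^2)$ since $\dm_{max}\ge\dm_{avg}$ — gives $A\in(1\pm\eps/16)\mathbb{E}[A]$ and $B\in(1\pm\eps/16)\mathbb{E}[B]$, hence $\dm_{avg}(S_2)=A/B\in(1\pm\eps/4)\,d_2^-$, with high constant probability. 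Union bounding over the finitely many failure events and plugging into the identity finishes the argument. The only other point that needs care, besides the distribution computation, is checking that $rs_{min}(L_{\geq 2})>0$ — every $G_{\geq 2}$-component meets $L_2$, and for each relevant $u$ one has $rs(u)=\dm(u)/\dpp(u)>0$ — so that $c_{rs}$ is finite and the weights are well-defined.
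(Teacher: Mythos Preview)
Your proposal is correct and follows essentially the same approach as the paper: both use the double-counting identity $|L_1|\,\dpp_{avg}(L_1)=|L_{\geq 2}|\,\dm_{avg}(L_{\geq 2})$, apply a direct multiplicative Chernoff bound to $\dpp_{avg}(S_1)$, establish that \SLtwo{} outputs each $v\in L_{\geq 2}$ with probability $rs(v)/rs(L_{\geq 2})$, and then control $\dm_{avg}(S_2)$ as the ratio of two importance-weighted sums, each handled separately by Chernoff with the range-to-mean ratio dictating the sample size. Your explicit trace through the edge-sampling step to derive the \SLtwo{} output law and your check that $rs_{min}(L_{\geq 2})>0$ are more detailed than the paper's version, but the structure and key ideas are identical.
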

\begin{proof}
Let $S_1$ denote the set of $s_1$ (uniform and independent) samples from $L_1$. By the multiplicative Chernoff bound, it suffices to take $s_1 = O\left(\frac{d^+_{max}(L_1)}{\eps^2 d^+_{avg}(L_1)} \right)$, so that 
with high constant probability 
\begin{align}\label{eq:dplusL1}
d^+_{avg}(S_1) \in (1\pm \eps/4)d^+_{avg}(L_1).
\end{align}

Consider the samples $S_2$ from $L_{\geq 2}$. 
Let $rs(L_{\geq 2})=\sum_{v \in L_{\geq 2}}rs(v)$.
Each node in $L_{\geq 2}$ is reached in an invocation of \SLtwo\ with probability $rs(v)/\sum_{v \in L_{\geq 2}} rs(v) = rs(v)/rs(L_{\geq 2}).$ 
In the following, the expectation is over a reached node returned by \SLtwo.  
\begin{align*}
    \EX_{v}
    \left[\frac{d^-(v)}{rs(v)}\right]
    =\sum_{v \in L_{\geq 2}} \frac{rs(v)}{rs(L_{\geq 2})}\cdot \frac{d^-(v)}{rs(v)} 
    =\frac{d^-(L_{\geq 2}) }
    {rs(L_{\geq 2})}
=\frac{\dm_{avg}(L_{\geq 2})}{rs_{avg}(L_{\geq 2})}.
\end{align*}
\sloppy
By the multiplicative Chernoff bound,
$s_2=\Theta\left(\frac{c_{rs}  \cdot d^-_{max}(L_{\geq 2})}{\eps^2\cdot d^-_{avg}(L_{\geq 2}) }\right)$ is sufficient so that with high constant probability 
\begin{align}
\sum_{v \in S_2}\frac{d^-(v)}{rs(v)} \in (1 \pm \eps/4) \cdot |S_2|\cdot  \frac{d^-_{avg}(L_{\geq 2})}{rs_{avg}(L_{\geq 2})}. \label{eq:davg}
\end{align}

Using the fact that $s_2 \geq \Theta\left( \frac{rs_{avg}(L_{\geq 2})}{\eps^2\cdot rs_{0}}\right)$ and 
\begin{align*}
    \EX_{v}
    \left[\frac{1}{rs(v)}\right]=\sum_{v\in L_{\geq 2}} \frac{rs(v)}{rs(L_{\geq 2})}\cdot \frac{1}{rs(v)} =\frac{|L_{\geq 2}|}{rs(L_{\geq 2})} = \frac{1}{rs_{avg}(L_{\geq 2})},
\end{align*}
by the multiplicative Chernoff bound, 
the total reachability score that we compute in our pseudocode, $trs$, satisfies
\begin{align}
trs(S_2)=\sum_{v \in S_2}\frac{1}{rs(v)} \in (1\pm \eps/4)\cdot \frac{|S_2|}{rs_{avg}(L_{\geq 2})} \label{eq:trs}
\end{align}
with high constant probability.
By Equations~\eqref{eq:davg} and~\eqref{eq:trs}, w.h.c.p. $$\frac{1}{trs(S_2)}\cdot \sum_{v \in S_2} \frac{d^-(v)}{rs(v)}
\in (1\pm \eps/2)d^-_{avg}(L_{\geq 2}).$$ 

Finally, it holds that $|L_1|\cdot d^+(L_1) =|L_{\geq 2}|\cdot d^-(L_{\geq 2})$. Therefore,  plugging into this equation 
$\bar d^+_{avg}(L_1)$
and
$\bar d^-_{avg}(L_{\geq 2})$, yields a $(1\pm\eps)$ approximation of $|L_{\geq 2}|$ with high constant probability. 
\end{proof}

Next we prove that the distribution induced by our algorithm's samples is close to uniform.
\begin{theorem}[Theorem~\ref{thm:convergence_uniformity}, formally stated]
If our size estimation for $L_{\geq 2}$, $\bar \ell_{\geq 2}$, is within $(1 \pm \delta)|L_{\geq 2}|$, and if the baseline reachability $rs_0$ used in our algorithm is the  $\eps^{\textrm{th}}$ percentile in the reachability distribution, then the induced distribution on nodes by the procedure \Sample\ is $(\eps+\delta + o(\eps, \delta))$-close to uniform in total variation distance, where the lower order term is $o(\eps, \delta) = O((\eps + \delta)^2)$. Furthermore, the probability of any given node to be sampled is at most $(1+\eps +\delta + o(\eps, \delta)) / n$.
\end{theorem}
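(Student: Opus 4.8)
The plan is to compute, for each node $v$, the exact probability $p_v$ that a single run of \Sample{} outputs $v$, and then bound both $\tfrac12\sum_v |p_v - 1/n|$ and $\max_v p_v$ directly. Write $n_0 = |L_0|$, $n_1 = |L_1|$, $n_2 = |L_{\geq 2}|$, $\bar n_2 = \bar\ell_{\geq 2}$, $n = n_0 + n_1 + n_2$, and $N = n_0 + n_1 + \bar n_2$ (the normalizer used in Step 1 of \Sample{}). Steps 1--2 immediately give that every node of $L_0 \cup L_1$ is output with probability exactly $1/N$, so all the work is in the conditional distribution produced by the $L_{\geq 2}$-branch.

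For that branch I would first recall the fact already used in the size-estimation proof: one invocation of \SLtwo{} returns $v \in L_{\geq 2}$ with probability $rs(v)/rs(L_{\geq 2})$, where $rs(v)$ is the component reachability $rs(C(v))$ reported by \CompRS{} and $rs(L_{\geq 2}) = \sum_v rs(v)$ (noting that every $G_{\geq 2}$-component contains an $L_2$-node, so all $rs(v) > 0$ and the acceptance probabilities below are well defined). Feeding this through the rejection loop of Step 3, in one iteration $v$ is returned with probability $\tfrac{rs(v)}{rs(L_{\geq 2})}\min\!\big(\tfrac{rs_0}{rs(v)},1\big) = \tfrac{\min(rs_0, rs(v))}{rs(L_{\geq 2})}$, so conditioning on eventual acceptance the node drawn from this branch has distribution $\pi_v = \min(rs_0, rs(v))/D$ with $D = \sum_{w} \min(rs_0, rs(w))$. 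Let $B = \{v : rs(v) < rs_0\}$; the percentile hypothesis on $rs_0$ --- which is precisely the $\eps$-quantile of the \emph{unweighted} distribution of $rs(\cdot)$ over $L_{\geq 2}$ that \CompRSZ{}'s $1/r_i$-reweighting targets --- gives $|B| \le \eps\, n_2$. Then $D = (n_2 - |B|)\,rs_0 + \sum_{v \in B} rs(v)$, so $\pi_v = u^\ast := rs_0/D$ for every $v \notin B$ while $\pi_v = rs(v)/D < u^\ast$ for $v \in B$, and $0 \le \sum_{v\in B}rs(v) \le |B|\,rs_0$ yields $\tfrac1{n_2} \le u^\ast \le \tfrac1{n_2 - |B|} \le \tfrac1{(1-\eps)n_2}$. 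Combining with Step 1: $p_v = 1/N$ on $L_0 \cup L_1$, $p_v = \tfrac{\bar n_2}{N} u^\ast$ on $L_{\geq 2}\setminus B$, and $0 \le p_v \le \tfrac{\bar n_2}{N} u^\ast$ on $B$.

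For the total variation bound I would split $\sum_v |p_v - 1/n|$ over $L_0 \cup L_1$ and over $L_{\geq 2}$. The first piece is $(n_0+n_1)\,|1/N - 1/n| = (n_0+n_1)|\bar n_2 - n_2|/(nN) \le \delta$ (using $n \ge n_2$ and $N \ge n_0 + n_1$). For the second piece I would write $\tfrac{\bar n_2}{N}\pi_v - \tfrac1n = \tfrac{\bar n_2}{N}\big(\pi_v - \tfrac1{n_2}\big) + \tfrac1n\big(\tfrac{\bar n_2 n}{N n_2} - 1\big)$; summing over the $n_2$ nodes bounds it by $\tfrac{\bar n_2}{N}\sum_v|\pi_v - \tfrac1{n_2}| + (n_0+n_1)|\bar n_2 - n_2|/(nN)$, where the second summand is again $\le \delta$ and the first uses the key estimate $\sum_{v}|\pi_v - 1/n_2| \le 2|B|/n_2 \le 2\eps$ --- obtained by noting the $(n_2 - |B|)u^\ast$-contributions cancel between the over- and under-represented nodes (via $(n_2 - |B|)u^\ast \le 1$ and $\sum_{v \in B}\pi_v = 1 - (n_2 - |B|)u^\ast$). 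This gives total variation distance $\le \eps + \delta$, comfortably within the stated bound. For the pointwise claim I would case on the sign of $\bar n_2 - n_2$: if $\bar n_2 \le n_2$ then $\bar n_2 u^\ast \le n_2 u^\ast \le 1/(1-\eps)$ with no $\delta$-loss while $n/N \le 1/(1-\delta)$; if $\bar n_2 \ge n_2$ then $n/N \le 1$ while $\bar n_2 u^\ast \le (1+\delta)/(1-\eps)$; either way $p_v\, n = (n/N)\,\bar n_2 u^\ast \le (1+\eps)(1+\delta) + O((\eps+\delta)^2)$, and $p_v = 1/N \le 1/((1-\delta)n)$ on $L_0 \cup L_1$ is easier still.

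The main obstacle is not any one computation but making the reachability-percentile hypothesis mesh cleanly with the rejection step: one must fix the convention that $rs(v)$ denotes the component-level score $rs(C(v))$ (consistent with the size-estimation proof), confirm this score is strictly positive on every component so $\min(rs_0/rs(v),1)$ is legitimate, and check that the ``$\eps$-percentile'' really is the unweighted one matching \CompRSZ{}. Once these are pinned down, the argument is exactly the bookkeeping above: the three additive error sources --- the $\le \eps$-mass of under-represented nodes, the $\le \delta$ relative error in $\bar\ell_{\geq 2}$, and the induced mismatch between $N$ and $n$ --- each enters only at first order.
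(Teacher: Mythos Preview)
Your proposal is correct and follows essentially the same approach as the paper: both arguments identify the set $B=\{v:rs(v)<rs_0\}$ (the paper works with its complement $A(L_{\geq 2})$), use that the rejection loop is exactly uniform over $L_{\geq 2}\setminus B$, bound $|B|\le\eps\,n_2$ via the percentile hypothesis, and combine with the $(1\pm\delta)$ size estimate. The only organizational difference is that the paper proves just the pointwise bound $p_v\le(1+\eps+\delta+o)/n$ and observes that the TV bound follows immediately (since $\mathrm{TV}=\sum_{v:p_v>1/n}(p_v-1/n)$), whereas you compute the TV distance directly via your split over $L_0\cup L_1$ and $L_{\geq 2}$ and the cancellation estimate $\sum_v|\pi_v-1/n_2|\le 2|B|/n_2$; this actually yields the slightly sharper $\mathrm{TV}\le\eps+\delta$ without the lower-order term.
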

\vspace{-0.2cm}
\begin{proof}
By definition of the total variation distance, it suffices to prove the last statement of the theorem, regarding the probability bound for any given element; the first statement 
follows.

Let $\bar n=|L_0|+|L_1|+\bar \ell_{\geq 2}$ be the estimated total size of the network. If  $\bar \ell_{\geq 2}$ is in $(1\pm \delta)|L_{\geq 2}|$ then $\bar n \in (1\pm \delta)n$. Condition on this event.
For each node in $L_0$, its probability to be returned by \Sample\ is $\frac{|L_0|}{\bar n}\cdot \frac{1}{|L_0|} = \frac{1}{\bar n} \leq \frac{1 + \delta + O(\delta^2)}{n}$. Similarly this holds for nodes in $L_1$. Thus, it remains to prove the upper bound for nodes in $L_{\geq 2}$. 

For every $v\in L_{\geq 2}$, define by $rs(v)$ the unique value for which $v$'s component is reached  by \SLtwo\ with probability $\frac{rs(v)}{d^+(L_0)}$, where $d^+(L_0)$ is the number of edges from $L_0$ to $L_1$.
Let $A(L_{\geq 2})$ denote the set of nodes in $L_{\geq 2}$
for which $rs(v) \geq rs_0$.
In a single invocation of the while loop of \SLtwo, 
each node $v\in A(L_{\geq 2})$ is returned with probability $\frac{rs(v)}{d^+(L_0)} \cdot \frac{1}{rs(v)} = \frac{1}{d^+(L_0)}$. 
For a node $v\in L_{\geq 2}\setminus A(L_{\geq 2})$, its probability of being returned is between zero and $1/d^+(L_0)$.
Therefore, \SLtwo\xspace is uniform over  $A(L_{\geq 2})$ (but is biased against nodes with extremely small reachability, i.e., those in $L_{\geq 2} \setminus A(L_{\geq 2})$). 

It follows that nodes in $A(L_{\geq 2})$ are returned with probability at least $\frac{\bar \ell_{\geq 2}}{\bar n}\cdot \frac{1}{|A(L_{\geq 2})|}$ and at most $\frac{\bar \ell_{\geq 2}}{\bar n}\cdot \frac{1}{|L_{\geq 2}|}$. 
By the assumption that for $1-\eps$ of the nodes in $L_{\geq 2}$,  $rs(v)\geq rs_0$, it holds that $|A(L_{\geq 2})|\in [(1-\eps),1] |L_{\geq 2}|$. Together with the assumption that $\bar \ell_{\geq 2} \in (1\pm \delta) |L_{\geq 2}|$,  
we get that each node in $A(L_{\geq 2})$ is returned with probability at most $(1-\eps)^{-1} (1-\delta)^{-1} /n \leq (1+\eps+\delta+O((\eps+\delta)^2) / n$. 
The other nodes in $L_{\geq 2}$ are returned with a smaller probability. 
\end{proof}

\balance
\subsection{Interval Length for Mixing}
\label{sec:interval_length}
In the query complexity evaluation for random walks, Section \ref{subsec:comparison_exp}, we mention that the walks are sampled every fixed interval, where the interval length should allow for mixing. Here we detail how the interval length is judiciously chosen. 

Consider a collection of $k$ random walks $W_1, \ldots, W_k$ with the same starting point, $v$. How can we determine how much steps of a random walk are required to mix? One natural way to do so is by considering the $t$'th nodes in each random walk, $W_1(t), \ldots, W_k(t)$, for different choices of $t$. We would like to set the interval length to the smallest $t$ for which $W_1(t), \ldots, W_k(t)$ are sufficiently uniform. 
This is a standard statistical estimation task. The most standard approach to solving it is by calculating the \emph{empirical distance} of the observed nodes $W_1(t), \ldots, W_k(t)$ to the uniform distribution over all nodes. 
For each of the networks we considered except for SinaWeibo, we determined the interval length as follows: we set the number of random walks to $k = n$, where $n$ is the number of nodes in the network. One can show that for random variables $X_1, \ldots, X_n \in [n]$ that are generated uniformly at random, the empirical distance to uniformity is concentrated around $1/e$. We thus picked $t$ to be the smallest for which the empirical distance to uniformity of $W_1(t), \ldots, W_n(t)$ is no more than some small parameter $\zeta$ (on real-valued networks we picked $\zeta = 0.01$, and for forest fire $\zeta = 0.03$) away from the ideal $1/e$.

To make for a fair comparison for our algorithm, we calculated what choice of the proximity parameter $\eps$ yields the same empirical distance of $\zeta$ from $1/e$. To do so we ran a grid search over various small values of $\eps$, where for each $\eps$ we averaged over 10 experiments of computing the empirical distance  from uniformity of a set of $n$ outputs of our algorithm  (with values of the other parameters, $\ell_0, s_1, s_{\geq 2}$, as mentioned above).

For SinaWeibo, the empirical distance based evaluation is computationally infeasible, since it essentially requires $k = \tilde{\Omega}(n)$. Instead we used a more efficient estimate, based on the number of collisions in $W_1(t), \ldots, W_k(t)$.  Distinguishing the uniform distribution from one that is $\eps$-far in variation distance requires only $O(\sqrt{n} / \eps^2)$ different walks~\cite{goldreich2011testing}. Here we chose $t$ to be the smallest for which the number of collisions among the different walks is less than three times the standard deviation of the number of collisions expected from the uniform distribution.

\subsection{Source Code}
\label{sec:source}
The source code for our algorithm will shortly be available at:
\begin{center}
\texttt{\href{https://github.com/omribene/sampling-nodes}{https://github.com/omribene/sampling-nodes}}
\end{center}
For reference, we also provide the source code for the random walk algorithms to which we compared our method. 
See the \texttt{README.md} file for usage instructions.

\subsection{System Specifications and Running Time}
\label{sec:specs}
We ran all experiments on a 20-core CPU configuration: Intel(R) Core(TM) i9-9820X CPU @ 3.30GHz; RAM: 128GB. The most time-consuming experiments were those involving the comparison with random walks, specifically the experiments determining the correct interval size for the random walks. In SinaWeibo, our largest graph, the running time for generating 1M random walks of proper length from one starting node was about three days. For all experiments involving our algorithm, the running time was at most a few hours (and usually up to a few minutes for the smaller networks).

\end{document}